\documentclass[fontsize=11pt, headings=normal, headinclude=true, paper=a4, pagesize, cleardoublepage=current, BCOR=10mm, fleqn, numbers=noenddot]{scrartcl}
\pdfoutput=1
\usepackage{lmodern}

\setkomafont{sectioning}{\normalcolor\bfseries}
\recalctypearea 
\setcapindent{0pt}

  \usepackage[english]{babel}  
  \usepackage[numbers,square]{natbib}
 \usepackage{amsmath}
  \usepackage{amssymb}

  \usepackage{enumitem}
  \usepackage[utf8x]{inputenc}    
  \usepackage[T1]{fontenc}         
  \usepackage{graphicx}            
  \usepackage{color}		
  \usepackage[english=british]{csquotes} 
  \usepackage{amsthm}
  
\newcommand{\R}{\mathbb{R}}
\newcommand{\C}{\mathbb{C}}
\newcommand{\N}{\mathbb{N}}
\newcommand{\Hio}{\mathfrak{H}}
\newcommand{\Fock}{\mathfrak{F}}
\newcommand{\Fou}{\mathcal{F}}
\newcommand{\coh}{\varepsilon}

\newcommand{\He}{\mathrm{H}^1}
\newcommand{\Hz}{\mathrm{H}^2}
\newcommand{\Hzo}{\mathrm{H}_0^2}
\newcommand{\Hzm}{\mathrm{H}^{-2}}
\newcommand{\Col}{\mathcal{C}}
\newcommand{\Lz}{\mathrm{L}^2}
\newcommand{\Sp}{S}
\newcommand{\Id}{\mathbf{ 1}}
\newcommand{\D}{\mathrm{d}}
\newcommand{\I}{\mathrm{i}}
\newcommand{\E}{\mathrm{e}}
\newcommand{\be}{\begin{equation}}
\newcommand{\ee}{\end{equation}}
\newcommand{\bp}{\begin{proof}}
\newcommand{\ep}{\end{proof}}
\newcommand{\epsi}{\varepsilon}
\newcommand{\ph}{\varphi}
\newcommand{\Ddhmz}{D^*_{ \Hzm}}
\newcommand{\Rminusn}{\Omega_n}
\newcommand{\Rminusnme}{\Omega_{n-1}}

\newcommand{\co}{\varepsilon(}
\newcommand{\A}{A}
\newcommand{\B}{B}
\newcommand{\cloud}{\phi}
\newcommand{\seq}{\chi}
\newcommand{\lapadj}{\Delta^*_1}
\newcommand{\lapo}{\Delta_1}
\newcommand{\lapadjn}{\Delta_n^*}
\newcommand{\lapon}{\Delta_n}

\newcommand{\lapadjf}{\Delta_{\Fock}^*}
\newcommand{\Ddelta}{D(\Delta_1^*)}

\theoremstyle{plain}
\newtheorem{thm}{Theorem}
\newtheorem{lemma}[thm]{Lemma}
\newtheorem{prop}[thm]{Proposition}
\newtheorem{corol}[thm]{Corollary}
\theoremstyle{remark}
\newtheorem{remark}{Remark}


\title{Particle Creation at a Point Source by Means of Interior-Boundary Conditions}
\author{
Jonas Lampart,\thanks{CNRS \& Laboratoire Interdisciplinaire Carnot de Bourgogne
	(UMR 6303), 
	9 Av.\ Alain Savary, 21078 Dijon, France; E-mail: 
	jonas.lampart@u-bourgogne.fr}\ \ 
Julian Schmidt,\footnote{Fachbereich Mathematik, Eberhard-Karls-Universit\"at, 
	Auf der Morgenstelle 10, 72076 T\"ubingen, Germany; E-mails: 
	julian.schmidt@uni-tuebingen.de, stefan.teufel@uni-tuebingen.de, roderich.tumulka@uni-tuebingen.de}\ \ 
Stefan Teufel,$^\dagger$\ \\ and 
Roderich Tumulka$^\dagger$ }
\date{April 25, 2017}
\begin{document}
\maketitle
\begin{abstract}
 We consider a way of defining quantum Hamiltonians involving particle creation and annihilation based on an \textit{interior--boundary condition} (IBC) on the wave function, where the wave function is the particle--position representation of a vector in Fock space, and the IBC relates (essentially) the values of the wave function at any two configurations that differ only by the creation of a particle. Here we prove, for a model of particle creation at one or more point sources using the Laplace operator as the free Hamiltonian, that a Hamiltonian can indeed be rigorously defined in this way without the need for any ultraviolet regularization, and that it is self-adjoint. We prove further that introducing an ultraviolet cut-off (thus smearing out particles over a positive radius) and applying a certain known renormalization procedure (taking the limit of removing the cut-off 
while subtracting a constant that tends to infinity) yields, up to addition of a finite constant, the Hamiltonian defined by the IBC.
 
\medskip

\noindent 
MSC: 
81T10,  
81Q10, 
47F05. 
Key words: 
 Ultraviolet divergence problem; renormalization in quantum field theory; self-adjoint Hamiltonian; self-adjoint extensions of the Laplace operator; particle-position representation; ultraviolet cut-off. 
 
\end{abstract}

\section{Introduction}

 Interior--boundary conditions (IBCs) provide a method of defining Hamiltonian operators with particle creation and annihilation that has received little attention so far. An interesting property of this method is that, at least for some models, the common problem of ultraviolet (UV) divergence is absent. In this paper, we present rigorous results about this approach for a specific non-relativistic model of quantum field theory which show that the UV problem is indeed absent, as the Hamiltonian $H=H_\mathrm{ IBC}$ is rigorously defined and self adjoint although the sources of particle creation are point-shaped.

The UV problem,
in the form relevant to us, is the following. In the Fock space formulation of quantum field theories, the Hamiltonian involves annihilation and creation operators  $a(\seq)$ and $a^*(\seq)$ that annihilate or create   particles with wave function $\seq$. For square-integrable functions $\seq$ these operators are densely  defined  operators on Fock space. However, in most physically relevant field theories the particles are created and annihilated at points in space, and the function $\seq$ should thus be a Dirac $\delta$-distribution.   While   $a(\delta)$   can still be given mathematical sense as a densely defined operator, this is no longer possible for $a^*(\delta)$. In some cases one can take a limit of removing the ultraviolet cut-off; that is, one considers a sequence of square-integrable functions $\seq_n$ approaching the $\delta$ distribution, $\seq_n\to \delta$, and the sequence $H_{\seq_n}$ of Hamiltonians defined using $a(\seq_n)$ and $a^*(\seq_n)$ instead of $a(\delta)$ and $a^*(\delta)$ may approach 
a limit, possibly after subtraction of  suitable divergent sequence of constants $E_n$:
\begin{equation}
H_{\seq_n}-E_n \to H_{\infty}\quad \text{ as }n\to\infty\,.
\end{equation}
Then $H_\infty$ is called the renormalized Hamiltonian (see, e.g., \cite{Der03}). For a broader discussion of the  UV  problem, see, e.g.,   \cite{vH52,Lee54, Schw61,GJ85,GJ87} and also Section~\ref{sec:reno}. 

The IBC approach allows the direct definition of a Hamiltonian $H_\mathrm{ IBC}$ corresponding to $\seq=\delta$  without a renormalization procedure. It starts out from the particle--position representation of a vector in Fock space as a wave function on a configuration space of a variable number of particles. In this representation, the absorption of particle 1 by particle 2 corresponds to a jump from a configuration with 1 at the same location as 2 to the configuration without 1, while the emission of a particle corresponds to the opposite jump. These processes are therefore related to the flux of probability into (or out of)  the set $\mathcal{C}$ of collision configurations in configuration space (i.e., the configurations with two particles at the same location). As we will show, a non-trivial such flux is possible for wave functions satisfying a suitable boundary condition, with $\mathcal{C}$ regarded as the boundary of configuration space; the relevant boundary condition is a relation between the values of the wave function at the two 
configurations connected by the jump just mentioned; since it relates a boundary point to an interior point of another sector, we call this condition an \textit{interior--boundary condition}  (IBC).  
 One thus forgoes the use of creation and annihilation operators in this approach, while still obtaining non-conservation of particle number.  
Since wave functions in the domain of the Hamiltonian satisfy the IBC, the domain is not the same as that of a free field Hamiltonian. In fact, the only common element of these domains is the zero vector.
As a consequence, IBC Hamiltonians cannot be obtained as perturbations of free field Hamiltonians in any simple way.

While we discuss more general  situations in \cite{TeTu14}, we focus in our present rigorous study on the simple  model of a single non-relativistic scalar field whose quanta are created or annihilated 
 at one or more point sources at fixed locations.  
For a single source at the origin, the formal  expression for the Hamiltonian reads
\begin{equation}\label{Hdelta}
H_\delta = 
H_0+ g \bigl(a(\delta) +   a^*(\delta) \bigr)\,,
\end{equation}
where the free Hamiltonian $H_0$ is the second quantization of the non-relativistic 1-particle Hamiltonian $h=-\Delta +E_0$,  $E_0$ is a real constant called the rest energy, and $g$ is a real coupling constant.
 Note that when speaking about boundary conditions, we make  essential use of the fact that $H_0$ is sector-wise a differential operator. 
The model \eqref{Hdelta} can be regarded as a non-relativistic variant of   the Lee model \cite{Lee54},  Schweber's scalar field model \cite[Sec.~12a]{Schw61}, or  the Nelson model \cite{Nel64}.

Here we show that the IBC Hamiltonian for our model is indeed rigorously defined, self-adjoint, and (if $E_0\geq 0$) bounded from below. While it is not a perturbation of some free Hamiltonian, we show that it is equal, up to a finite additive constant, to a Hamiltonian $H_\infty$ obtained through renormalization. 
 While $H_\infty$ for the model \eqref{Hdelta} was known before to exist and can even be diagonalized explicitly, 
 an explicit characterization of its domain and its action thereon was not available.
 Thus, one conclusion from our results is that quantum field Hamiltonians obtained through renormalization can have a  simple and  explicit form when expressed in the particle--position representation, albeit not in terms of creation and annihilation operators but in terms of IBCs. And they are  no longer defined  on the domain of the free operator  $H_0$.

As a mathematical problem we have to study an infinite system of inhomogeneous boundary value problems, where the boundary on each sector is the union of  codimension-three planes. A particular difficulty arises from the fact that, in sectors of Fock space with more than one particle, these planes intersect. This makes the regularity issues more complicated, and general approaches to elliptic problems with boundaries of higher codimension (e.g.,~\cite{Ma91}) cannot be applied directly. The intersections of these planes play an important role in the theory of point interactions involving more than two particles, see~\cite{Min11, Cor12, Cor15, MiOt17, MoSe17}. See also Remark~\ref{remarkgeneral} at the end of Section~\ref{sec:symmetry} for the relation of our results to the theory of abstract boundary value problems (e.g.,~\cite{BM14}).
In our case, some of the technical difficulties associated with the boundary value problem could be circumvented if we contented ourselves with proving merely \textit{essential} self-adjointness, as we do for the generalized models of Section~\ref{sec:variants}. However, in that case we do not obtain an explicit characterization of the domain of self-adjointness. Moreover, we hope that the enhanced understanding of these boundary value problems provided by our direct approach will prove useful when dealing with further variants of the IBC approach and   point interactions.

The plan of the paper is as follows: In Section~\ref{sec:model} we motivate and define the IBC Hamiltonian $(H_\mathrm{ IBC}, D_\mathrm{ IBC})$ and state the main theorem 
 about its self-adjointness for a single point source at the origin.  
In Section~\ref{sec:reno} we discuss the relation of the IBC Hamiltonian to a Hamiltonian obtained from a standard renormalization procedure. In Section~\ref{sec:variants} we explain that our results also apply to the situation of \textit{several} (finitely many) point sources that can emit and absorb particles, located at fixed points in $\R^3$. Furthermore, we also provide  in Section~\ref{sec:variants}  a discussion of a 4-parameter family of  IBCs.
 In Sections~\ref{sec:symmetry}--\ref{sec:generalized} and the Appendix, we provide the proofs:  In Section~\ref{sec:symmetry} we prove symmetry of $H_\mathrm{ IBC}$, in Section~\ref{sec:ESA} (essential) self-adjointness, and in Section~\ref{sec:generalized} 
 we treat the generalizations of Section~\ref{sec:variants}.
  
Let us end the introduction with remarks on related literature. 
IBCs have been considered in the past, in some form or another, in \cite{LP30,Mosh51a,Mosh51b, Mosh51c,Tho84,ML91, Yaf92,Tum04}. Recent and upcoming works exploring various aspects of IBCs include \cite{TeTu14,TT15b,KS16,Gal16,bohmibc}. Introductory presentations of the kind of models considered here can be found in \cite{TeTu14,TT15b}, and the physical motivation is discussed in \cite{TeTu14}. Landau and Peierls \cite{LP30} obtained IBCs when trying to formulate quantum electrodynamics in the particle--position representation, although their Hamiltonian was still ultraviolet divergent (and thus mathematically ill defined). Moshinsky~\cite[Sec.~III]{Mosh51a} considered (as an effective description of nuclear reactions) a model with IBCs that is essentially equivalent to ours 
 (including the 4-parameter family of IBCs discussed in Section~\ref{sec:variants}), 
except that he considered only the sectors with $n=0$ and $n=1$ particles; he did not provide rigorous results about the Hamiltonian.
Yafaev~\cite{Yaf92} independently considered the same model (again only the sectors with $n=0$ and $n=1$ particles) and proved that the Hamiltonian is well defined and self-adjoint. Thomas~\cite{Tho84} considered a model analogous to ours with moving sources, but only (what corresponds to) the sectors with $n=2$ and $n=1$ particles \cite[Sec.~III]{Tho84}, respectively \cite[Sec.~II]{Tho84} with $n=1$ and $n=0$ particles, proving self-adjointness of the corresponding Hamiltonian. 
Moshinsky and Lopez~\cite{ML91} proposed a non-local kind of IBC for the Dirac and Klein--Gordon equations.
Tumulka and Georgii~\cite[Sec.~6]{Tum04} considered IBCs for boundaries of codimension 1 (whereas the boundary relevant here has codimension 3) and did not provide rigorous results. 
 Keppeler and Sieber \cite{KS16} described a physical reasoning leading to IBCs and discussed IBCs in 1 space dimension (though not rigorously). Galvan \cite{Gal16} suggested another approach towards a well defined Hamiltonian that has strong parallels to the IBC approach.
 
 The mathematical study of Hamiltonians with IBCs is closely related to that of point interactions, a field that has recently received renewed attention. Hamiltonians for $N$-particle systems with point interactions were constructed rigorously using quadratic forms by Correggi, Dell'Antonio, Finco, Michelangeli, Teta~\cite{Cor12, Cor15} and by Moser, Seiringer~\cite{MoSe17}.
 The problem was approached from the point of view of self-adjoint extensions by Minlos~\cite{Min11} and more recently by Michelangeli and Ottolini~\cite{MiOt17} (see also references therein for a more complete bibliography).

\section{The IBC Hamiltonian}

\label{sec:model}
We model the emission and absorption of non-relativistic particles at a point in $\R^3$, which we choose to be the origin. We thus call the origin the ``source'' and may think of it as a different kind of particle (which however remains at a fixed location).
 
Let $\Hio:= \Lz(\mathbb{R}^3)=\Lz(\mathbb{R}^3,\mathbb{C})$ be the one-particle Hilbert space, 
$\Hio^n:= \mathrm{ Sym} \,  \Hio^{\otimes n}$ its $n$-fold symmetric tensor product, and
$\Fock:= \Gamma(\Hio)=\bigoplus_{n \in \mathbb{N}_0} \Hio^n$ with $\Hio^0:=\C$ the symmetric Fock space over $\Hio$.
An element  $\psi$  of $\Fock$ has the form $\psi= (\psi^{(0)}, \psi^{(1)}, \psi^{(2)}, \ldots)$ with 
\be
\psi^{(n)} = \psi^{(n)}(x_1,\ldots, x_n) \in \Lz(\R^{3n})
\ee 
symmetric under permutations of its arguments and $\sum_{n=0}^\infty \|\psi^{(n)}\|^2_{\Hio^n} <\infty$.
For a bounded operator $T$ on $\Hio$, an operator $\Gamma(T)$  on $\Fock$ is defined by $(\Gamma(T)\psi)^{(n)}=T^{\otimes n}\psi^{(n)}$, and for a self-adjoint operator $h$ (possibly unbounded), 
we  define $\D\Gamma(h)$ as the generator of $\Gamma(\E^{-\I t h})$. Its action is given by
\be
( \D \Gamma (h) \psi)^{(n)} = \sum_{j=1}^n h_j \psi^{(n)}\,,
\ee 
where $h_j = \Id \otimes\ldots\otimes h\otimes \ldots \otimes \Id $ is $h$ acting on the $j$th factor.
From now on we reserve the symbol $h$ for the free one-particle Hamiltonian
\be
h=  (-\Delta+E_0, \Hz(\R^3) ) \, .
\ee
 
As a little digression, we point out how to set up a Hamiltonian with ultraviolet cut-off. We write $\overline{z}$ for the complex conjugate of $z\in\C$.
For $\seq\in \Hio$, the annihilation operator 
 \be
( a(\seq)\psi)^{(n)} (x_1,\ldots, x_n) := \sqrt{n+1} \int_{\R^3}\hspace{-1mm}\D x\, \overline{\seq(x)}\, \psi^{(n+1)} (x,x_1,\ldots,x_n)
 \ee 
 and its adjoint, the creation operator
 \be
 ( a^*(\seq)\psi)^{(n)} (x_1,\ldots, x_n) := \frac{1}{\sqrt{n}} \sum_{j=1}^n \seq(x_j) \,\psi^{(n-1)}(x_1,\ldots, \hat x_j,\ldots,x_n)
 \ee 
(where $\hat{\ }$ denotes omission) are densely defined, closed operators on $\Fock$ that are infinitesimally  $\D\Gamma(h)$-bounded when $E_0>0$. Thus, for $E_0 > 0$ and any coupling constant $g\in\R$, the total Hamiltonian  
 \begin{equation}\label{Hf}
 H_\seq := \D \Gamma (h) + g\,a(\seq) + g\,a^*(\seq) 
 \end{equation}
 is self-adjoint on the domain of $\D \Gamma (h)$ by the Kato--Rellich theorem. Operators of this type are known as \textit{van Hove Hamiltonians}  \cite{vH52,Schw61,Der03}. The limit $\seq\to \delta$ can only be taken by means of  a renormalization procedure, see Section~\ref{sec:reno} and  \cite{Der03, Nel64}.
 
We now explain how to construct explicitly an operator $H_\mathrm{ IBC}$ that  captures, as we believe, the physical meaning of ``$H_\delta$'' and agrees, as we will show, with the renormalized Hamiltonian up to  addition of a finite  constant.
Recall that with the free Schr\"odinger evolution generated by the Laplacian on $\mathrm{L}^2(\R^3)$  there is associated a probability current 
\be
j^\psi (x) = 2\, \mathrm{ Im} \,\overline{\psi(x)}\,\nabla \psi(x)\,.
\ee 
In order to allow for annihilation or creation of particles at the origin, a non-vanishing probability  current into or out of the origin must be possible. 
Using spherical coordinates $r=|x|$ and $\omega=\frac{x}{|x\vert}\in \Sp^2 = \{v\in\R^3:|v|=1\}$, this current is
\begin{eqnarray}
j^\psi_0 
&:=&  2 \lim_{r\to 0} \int_{\Sp^2}\hspace{-1mm}\D\omega\,   r^2\, \omega  \cdot \mathrm{ Im}\,{ \overline{\psi (r\omega)  }} \,\nabla\, \psi (r\omega)\nonumber \\
&=&  2  \lim_{r\to 0} \int_{\Sp^2} \hspace{-1mm}\D\omega\,  r^2\,\mathrm{ Im}\,{ \overline{\psi (r\omega)  }} \,\partial_r\, \psi (r\omega  ) \,.
\end{eqnarray}
However, for $j^\psi_0$ 
to be non-vanishing, $\psi$ or $\partial_r\psi$ must be sufficiently singular at the origin.  Since such singular functions are not in the standard domain $\Hz(\R^3)$ of the Laplacian, we need to consider the one-particle Laplace operator on a domain that includes singular functions that allow for non-vanishing currents into and out of the origin. Of course, such operators cannot be self-adjoint, since they cannot generate unitary groups.\footnote{Note that  operators with $\delta$-like potentials are defined in a similar way by enlarging the domain of the Laplacian, cf.\ \cite{DFT08}. However, in order to obtain a self-adjoint operator, an additional condition of the form $\lim_{r\to 0} \left( \partial_r   r  \psi(r\omega) - \alpha\,  r \, \psi(r\omega) \right) = 0$ with $\alpha\in\R$ is imposed, precisely  to ensure $j_0^\psi=0$.}   In order to obtain a self-adjoint Hamiltonian and a unitary evolution  on Fock space one thus needs to compensate  the loss of probability in one sector by a corresponding gain in another 
sector. 
This is achieved by connecting different sectors with boundary conditions.
Here, the configuration space is $\cup_{n=0}^\infty \R^{3n}$, and the ``boundary'' of its $n$-particle sector is the set
\be\label{Coldef}
\Col^n:=\Bigl\lbrace x\in \R^{3n} \;\Big\vert\; \prod_{j=1}^n \vert x_j \vert=0 \Bigr\rbrace
\ee
of those $n$-particle configurations with at least one particle at the origin. (This is the relevant set of collision configurations here; at these configurations, one of the moving particles collides with the source.) The ``interior--boundary condition'' connects the wave function $\psi^{(n)}$ on $\Col^n$ with the wave function $\psi^{(n-1)}$ one sector below. 

We now prepare for the precise definition of $H_\mathrm{ IBC}$. 
Define the operator $\Delta_n$ to be the Laplacian with domain $\Hzo(\R^{3n}\setminus \Col^n)\subset \Lz(\R^{3n})$, which is defined as the closure 
of $C_0^\infty(\R^{3n}\setminus \Col^n)$ in the $\Hz$-norm. We then set
\be 
 \big(\lapadjn ,D(\lapadjn)) \; \text{ is the adjoint of }
\big(\lapon,\Hzo(\R^{3n}\setminus \Col^n)\big)\,.
\ee 
Since $\Delta_n$ is densely defined, closed and symmetric, the adjoint $\lapadjn$ extends $\Delta_n$ and its domain is given by (cf.~\cite[Sect. X.1]{ReSi2})
\begin{equation}\label{domaindecomp}
D(\lapadjn) = D(\Delta_n) \oplus \ker(\lapadjn - \I) \oplus \ker(\lapadjn + \I) \, .
\end{equation}
We will always regard $D(\lapadjn)$ as a Banach space with the graph norm of $\lapadjn$. 
 Combining the $\lapadjn$ yields 
an operator $\lapadjf$ on Fock space, whose action is given by
\be 
 (\lapadjf \psi)^{(n)} := \lapadjn \psi^{(n)}\,,
\ee 
for those $\psi\in \Fock$ such that $\psi^{(n)}\in D(\lapadjn)$.
 
The role of the  annihilation operator $a(\delta)$ will be played by an operator $A$ that we define sector-wise on a dense domain to be specified later, 
$A : \Hio^{n+1}\supset D_{n+1}(A)  \to  \Hio^{n }$, by\footnote{Here and throughout the paper, we follow the convention, in order to write fewer brackets, that a derivative operator acts on all factors to the right of it, not just the one immediately to the right, unless otherwise indicated by brackets. Thus, in \eqref{Adef}, $\partial_r$ acts also on $\psi$.}
\begin{equation}\label{Adef}
( A\psi)^{(n)} (x_1,\ldots, x_n) := \frac{\sqrt{n+1}}{4\pi} \lim_{r\to0} \partial_r \,r  \int_{\Sp^2}\hspace{-1mm}\D\omega\, \psi^{(n+1 )} (r\omega, x_1,\ldots, x_n)\,.
\end{equation}
As mentioned, some $\psi^{(n+1)}(r\omega,\ldots)$ in the domain of $H_\mathrm{ IBC}$ will diverge like $1/r$ as $r\to 0$. It is not difficult to see that for $\psi^{(n+1)}$ that does not diverge as $r\to 0$, i.e.,  
for $\psi^{(n+1)} \in \Hz(\R^{3(n+1)} ) \cap \Hio^{ n+1} $,
\be 
( A\psi)^{(n)} (x_1,\ldots, x_n) = \sqrt{n+1}\, \psi^{(n+1)}(0,x_1,\ldots, x_n)\,.
\ee 
Thus, $A$ agrees with $a(\delta)$ on sufficiently regular functions. 

The boundary conditions  are formulated in terms of an operator $B$ that can again be defined sector-wise, 
$B: \Hio^{n+1} \supset D_{n+1}(B)  \to  \Hio^{n  }$, by 
\begin{equation}\label{Bdef}
( B\psi)^{(n)} (x_1,\ldots, x_n) := - 4\pi\sqrt{n+1}\: \lim_{r\to0}   \,r \, \psi^{(n+1 )} (r\omega, x_1,\ldots, x_n)\,.
\end{equation}
Again it is easy to see that for $\psi^{(n+1)} \in \Hz(\R^{3 (n+1) }) \cap \Hio^{ n+1} $ we have $( B\psi)^{(n)}=0$.

In the one-particle sector, $n=1$, the domain $\Ddelta$ is explicitly known and it is straightforward to prove that $A$ and $B$ are well defined functionals on $\Ddelta$. 
For $\gamma\in \C$ with $\mathrm{ Re}(\gamma)>0$ define the function
  \begin{align}
  \label{eq:fgamma}
  f_{\gamma}(x) 
:= - \frac{1}{4 \pi} \frac{\mathrm{e}^{-\gamma \vert x \vert}}{\vert x \vert}\,.
\end{align}
Clearly, $f_\gamma\in \mathrm{L}^2(\R^3)$ but $f_\gamma\notin \Hz(\R^3)$. Moreover, $\lapadj f_\gamma = \gamma^2 f_\gamma$ and $f_\gamma$ is the unique  $\mathrm{L}^2-$solution to this equation.
Consequently,  with \eqref{domaindecomp} it follows that  
\begin{equation}
\label{eq:vonneumann1}
D(\lapadj) = D(\lapo) \oplus V \qquad V = \mathrm{span} \Bigl\lbrace f_\gamma \Big\vert \gamma  \in \lbrace (1\pm \I)/\sqrt 2 \rbrace \Bigr\rbrace\, .
\end{equation}
Then, writing $\psi\in \Ddelta$ as $\psi_0 + \phi$ with $\psi_0\in D(\Delta_1)$ and $\phi\in V$ and integrating by parts in spherical coordinates, one finds that that the degree of asymmetry of $\lapadj$ can be expressed by $A$ and $B$, that is
\begin{equation} \label{eq:asymmn1}
 \langle \ph,  \lapadj \psi \rangle_{\Hio} - \langle  \lapadj \ph, \psi \rangle_{\Hio} = \langle B \ph , A \psi \rangle_{\C}- \langle A \ph , B \psi \rangle_{\C} \, .
\end{equation}
We will give a rigorous proof of this equation and generalize it to the case $n\geq 2$  in Propositions~\ref{prop:asymn1} and \ref{symmetry}  in Section~\ref{sec:symmetry}.
We remark that this implies that $\lapo$ has a one-parameter family of self-adjoint extensions,  known as  point interactions (cf.~\cite{AGHH88}). Their domains correspond to subspaces of $V$ on which the right hand side of Equation~\eqref{eq:asymmn1} vanishes.

To illustrate the importance of Equation~\eqref{eq:asymmn1}, we define the simplest possible IBC Hamiltonian on the truncated Fock space  $\Fock^{(1)}:=\C\oplus \Lz(\R^3)$  by 
\be
H^{(1)}_\mathrm{ IBC} := \begin{pmatrix} 0 & g A \\ 0& -\lapadj +E_0
\end{pmatrix}
\ee
on the domain 
\be
  D_\mathrm{ IBC}^{(1)} := \left\{(\psi^{(0)},\psi^{(1)}) \in \Fock^{(1)}\;\Big|\; \psi^{(1)}\in \Ddelta,\, B\psi^{(1)} = g \psi^{(0)}\right\}\,.
\ee
Here $B\psi^{(1)} = g \psi^{(0)}$ is the interior-boundary condition (IBC).
Equation~\eqref{eq:asymmn1} now implies that, contrary to 
 what it may seem like, 
$H^{(1)}_\mathrm{ IBC}$ is symmetric: for $ \ph,\psi\in D_\mathrm{ IBC}^{(1)} $
\begin{eqnarray}\nonumber
 \lefteqn{\langle \ph, H^{(1)}_\mathrm{ IBC} \psi\rangle_{\Fock^{(1)}} - \langle H^{(1)}_\mathrm{ IBC} \ph,  \psi\rangle_{\Fock^{(1)}} \;=}\\&=& -  \langle \ph^{(1)},  \lapadj  \psi^{(1)}\rangle_\Hio + \langle \lapadj  \ph^{(1)},  \psi^{(1)}\rangle_\Hio
 + \langle \ph^{(0)}, gA \psi^{1}\rangle_\C - \langle gA \ph^{(1)},\psi^{(0)}\rangle_\C\nonumber\\
 &\stackrel{\eqref{eq:asymmn1}}{=}&
 \langle A \ph^{(1)} , B \psi^{(1)} \rangle_{\C} -  \langle B \ph^{(1)} , A \psi^{(1)} \rangle_{\C} 
 + g \langle \ph^{(0)},  A \psi^{1}\rangle_\C - g\langle  A \ph^{(1)},\psi^{(0)}\rangle_\C\nonumber\\
 &\stackrel{\mathrm{ IBC}}{=}&
 g\langle  A \ph^{(1)},\psi^{(0)}\rangle_\C - g \langle \ph^{(0)},  A \psi^{1}\rangle_\C 
 + g \langle \ph^{(0)},  A \psi^{1}\rangle_\C - g\langle  A \ph^{(1)},\psi^{(0)}\rangle_\C\nonumber\\
 &=& 0\,.
\end{eqnarray}
It is not difficult to see (and was also shown in \cite{Yaf92}) that $H^{(1)}_\mathrm{ IBC} $ is even self-adjoint.

Our main result states that also the natural extension    of $H^{(1)}_\mathrm{ IBC} $ to the whole Fock space is (essentially) self-adjoint.

\begin{thm}
\label{thm:main}
For every $g, E_0\in \R$ the operator
\begin{equation}
\label{eq:H8}
 H_\mathrm{ IBC} := -\lapadjf + \D\Gamma(E_0) + gA 
\end{equation}
is essentially self-adjoint on the domain
\begin{equation}\label{eq:IBC def}
D_\mathrm{ IBC} := \left\{ \psi\in \Fock\;\left|
\begin{split}
&\psi^{(n)}\in D(\lapadjn)\cap \Hio^n \text{ for all }n\in\N\,,\\
&H \psi\in\Fock\,,\: A \psi\in\Fock\,,\:\text{and} \ B \psi  = g \psi 
\end{split}
\right.\right\}\,.
\end{equation}
Furthermore, for $E_0 > 0$ the domain of self-adjointness equals $D_\mathrm{ IBC}$, and for $E_0 \geq 0$ the Hamiltonian $H_\mathrm{ IBC}$ is bounded from below.
\end{thm}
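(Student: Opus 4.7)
The plan is to establish symmetry, essential self-adjointness, and the lower bound separately. For symmetry on $D_\mathrm{IBC}$, I would apply the sector-wise asymmetry formula (the $n\ge 2$ generalization of \eqref{eq:asymmn1}, stated in Section~\ref{sec:symmetry} as Proposition~\ref{symmetry}) to each term $\langle\varphi^{(n)},-\lapadjn\psi^{(n)}\rangle$. The boundary contributions on each sector produce pairings $\langle A\varphi,B\psi\rangle$ and $\langle B\varphi,A\psi\rangle$, which cancel against the cross-sector contributions of $gA$ via the IBC $B\psi=g\psi$, exactly as displayed in the excerpt for $H^{(1)}_\mathrm{IBC}$. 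The algebra is formally identical sector by sector; the work of Section~\ref{sec:symmetry} is to justify the integration by parts when the boundary $\Col^n$ consists of intersecting codimension-three planes.

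For essential self-adjointness the key ingredient is an explicit ``regular plus singular'' parametrization of $D_\mathrm{IBC}$ modelled on $f_\gamma$ from \eqref{eq:fgamma}. Given $\psi\in D_\mathrm{IBC}$ and a suitable $\gamma>0$, I would decompose
\[
\psi^{(n)}(x_1,\dots,x_n) = \psi^{(n)}_\mathrm{reg}(x_1,\dots,x_n) + \sum_{j=1}^{n} f_\gamma(x_j)\, u^{(n-1)}(x_1,\dots,\hat x_j,\dots,x_n),
\]
where $u^{(n-1)}$ is proportional to $\psi^{(n-1)}$ and chosen so that $B\psi^{(n)}_\mathrm{reg}=0$. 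Since $(-\lapadj+\gamma^2)f_\gamma=0$, the operator $-\lapadjn+n\gamma^2$ applied to $\psi^{(n)}$ only acts non-trivially on the regular part, up to ``cross terms'' produced when $-\Delta_k$ hits $f_\gamma(x_j)$ with $k\neq j$. On each fixed sector these cross terms are relatively bounded with respect to the diagonal operator $\bigoplus_n(-\lapon+n\gamma^2)$ acting on the regular parts, so the shifted $H_\mathrm{IBC}$ decomposes as a manifestly self-adjoint diagonal operator plus a controlled off-diagonal perturbation. A Kato--Rellich or Nelson commutator argument then gives essential self-adjointness and, once $\gamma$ (equivalently $E_0$) is large enough to make the relative bound smaller than $1$ uniformly in $n$, identifies $D_\mathrm{IBC}$ itself as the domain of self-adjointness.

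For the lower bound when $E_0\ge 0$, I would insert the same decomposition into $\langle\psi,H_\mathrm{IBC}\psi\rangle$ sector by sector. The diagonal kinetic piece yields the non-negative contribution $\|\nabla\psi^{(n)}_\mathrm{reg}\|^2+E_0\|\psi^{(n)}\|^2$, while the cross-sector interaction $\mathrm{Re}\langle\psi^{(n-1)},gA\psi^{(n)}\rangle$ can be rewritten using the IBC $B\psi=g\psi$ as an explicit pairing of $A\psi^{(n)}$ with the boundary data of $\psi^{(n-1)}$. A Cauchy--Schwarz estimate, absorbing the small parameter into the kinetic term, then produces a bound of the form $\langle\psi,H_\mathrm{IBC}\psi\rangle\ge -C\|\psi\|^2$ with $C$ depending only on $g$ and $E_0$.

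The principal obstacle will be the relative boundedness step for $n\ge 2$, where the singular loci $\{x_j=0\}$ of the summands in the singular part intersect at $\{x_j=x_k=0\}$. There the functions $f_\gamma(x_j)$ and $f_\gamma(x_k)$ each carry a $1/|x|$ singularity and their product fails to be locally square integrable, so the regular part $\psi^{(n)}_\mathrm{reg}$ is genuinely less regular than a generic element of $D(\lapon)$ and does not lie in $\Hz(\R^{3n})$. Controlling the cross terms therefore requires anisotropic Sobolev estimates adapted to the stratified boundary, in the spirit of the $N$-particle point-interaction literature cited in the introduction, and is what prevents a direct lifting of the one-particle argument of \cite{Yaf92} to all of Fock space.
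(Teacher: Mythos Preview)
Your symmetry argument is precisely what the paper does (Corollary~\ref{corol:focksymmetry}). For essential self-adjointness and the lower bound, however, your route diverges from the paper's and, as written, contains a genuine gap: you yourself identify the obstacle---controlling the cross terms uniformly in $n$ near the intersections $\{x_j=x_k=0\}$ of the singular planes---and then do not resolve it. Your scheme is also more delicate than you indicate, because the decomposition is recursive: the ``singular part'' on sector $n$ is built from $\psi^{(n-1)}$, which itself carries $1/|x|$ singularities, so $\psi^{(n)}_{\mathrm{reg}}$ need not lie in the space on which your ``manifestly self-adjoint diagonal operator'' acts, and the perturbation you hope to bound by Kato--Rellich is not a priori defined on a common core. (Incidentally, the specific claim that $f_\gamma(x_j)f_\gamma(x_k)$ fails to be locally square integrable is not quite right: two independent $|x|^{-1}$ factors in $\R^6$ are locally $L^2$.)

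The paper sidesteps all of this by exploiting the exactly solvable structure via a \emph{dressing transformation} on coherent vectors. One checks that the Weyl operator $W(gf_\gamma)\colon \varepsilon(u)\mapsto e^{-\langle gf_\gamma,u\rangle-\|gf_\gamma\|^2/2}\varepsilon(u+gf_\gamma)$ maps the coherent domain $E(\Hz(\R^3))$ into $D_\mathrm{IBC}$, and a short algebraic computation (Corollary~\ref{corol:allgemein}) gives
\[
W(-gf_\gamma)\,H_\mathrm{IBC}\,W(gf_\gamma)\big|_{E(\Hz)}
=\mathrm{d}\Gamma(h)+(E_0-\gamma^2)\bigl(a^*(gf_\gamma)+a(gf_\gamma)\bigr)+C(g,\gamma,E_0)\,.
\]
The right-hand side is a van Hove Hamiltonian with \emph{square-integrable} form factor, so essential self-adjointness and the lower bound for $E_0\ge 0$ follow from standard Kato--Rellich (Nelson's commutator theorem for $E_0<0$), using that $\mathrm{d}\Gamma(h)$ is essentially self-adjoint on $E(\Hz)$. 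For $E_0>0$ one chooses $\gamma=\sqrt{E_0}$ so the field perturbation vanishes and $H_\mathrm{IBC}$ is unitarily equivalent to $\mathrm{d}\Gamma(h)$ plus a constant; the identification of the self-adjoint domain with $D_\mathrm{IBC}$ then reduces to a single relative bound for $A$ (Proposition~\ref{prop:haupt}). The point is that $W(gf_\gamma)$ installs the singular part $gf_\gamma$ in every tensor factor simultaneously and algebraically, so no sector-by-sector regularity estimates near the stratified boundary are ever needed.
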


Note that the first two conditions in~\eqref{eq:IBC def} just ensure that $H$ maps the domain $D_\mathrm{ IBC} $ back into Fock space. The third condition, $A\psi\in\Fock$, might be redundant and follow from the second one, but we cannot show that. The last condition,
\begin{equation}
\label{eq:IBC8}
B \psi =g \psi\,,
\end{equation}
is the interior-boundary condition, which connects the limiting behavior of $\psi^{(n)}$ at the boundary of the $n$-particle sector (where one particle reaches the origin) with the wave function $\psi^{(n-1)}$ one sector below. 

Formally, an analogous computation to the one for $H^{(1)}_\mathrm{ IBC}$ shows that $H_\mathrm{ IBC}$ is symmetric (see the proof of Corollary~\ref{corol:focksymmetry}). 
However, in order to establish Equation~\eqref{eq:asymmn1} for $n\geq 2$, we need to first investigate the regularity of functions in the adjoint domain $D(\lapadjn)$. This will be carried out in Section~\ref{sec:symmetry}, with the main result given by Proposition~\ref{symmetry}. The proof of (essential) self-adjointness in Section~\ref{sec:ESA} uses the symmetry established in Section~\ref{sec:symmetry} and a comparison with a renormalized operator to be defined below.

\section{The connection to renormalization}\label{sec:reno}

As mentioned already, the formal expression $H_\delta$ as in \eqref{Hdelta} can be regularized by means of an ultraviolet cut-off, then the cut-off can be removed (while constants $E_n$ tending to $\pm\infty$ get subtracted) in order to obtain a renormalized Hamiltonian $H_\infty$. Our main result in this section, Theorem~\ref{thm:renorm}, asserts that $H_\mathrm{ IBC}$ agrees with $H_\infty$ (up to addition of a finite constant relative to the standard choice of $E_n$). We state Theorem~\ref{thm:renorm} in Section~\ref{sec:renodef} and then put it into perspective in Section~\ref{sec:renotechnique} by connecting it to known facts, techniques, and hitherto open questions about $H_\infty$.

\subsection{Definition of $H_\infty$ and relation to $H_\mathrm{ IBC}$}
\label{sec:renodef}

We approximate the formal Hamiltonian $H_\delta$ with regularized (cut-off) Hamiltonians 
\begin{equation}
H_n= \D\Gamma(h) + g \Bigl( a(\seq_n) +   a^*(\seq_n) \Bigr) = H_0 + H_{I_n} 
\end{equation}
with any choice of $\seq_n\in \Lz(\R^3)$ such that $\seq_n\to\delta$ as $n\to\infty$ in the sense that $\hat \seq_n \rightarrow  \hat \seq_\infty:=  \hat{\delta} = (2\pi)^{-3/2}$ pointwise with $\|\hat \seq_n\|_\infty$ uniformly bounded. Here  $\mathcal{F}\seq=\hat \seq$ denotes the Fourier transform of $\seq \in \Lz(\R^d)$. It is easy to see using standard arguments (and will be explained below) that if $E_0>0$ then $H_n-E_n$ converges in the strong resolvent sense for
\begin{equation}\label{Endef}
E_n := - g^2 \langle  \seq_n , {h}^{-1}  \seq_n \rangle_{\Lz}\,.
\end{equation}
Note that for $E_0>0$ the free one-particle operator $h= -\Delta +E_0 \geq E_0>0$ is invertible.  
The limit is called the renormalized Hamiltonian,
\begin{equation}
\label{eq:Hhatdef}
H_\infty:=\lim_{n\to\infty} (H_n-E_n)\,.
\end{equation}

\begin{thm}\label{thm:renorm}
For $E_0>0$, the renormalized operator $(H_\infty, D(H_\infty))$ agrees with  $(H_\mathrm{ IBC},  D_\mathrm{ IBC})$ up to  an additive constant: 
\be
D_\mathrm{ IBC} = D(H_\infty)
\quad \text{and} \quad
  H_\mathrm{ IBC}   = H_\infty +  \frac{g^2 \sqrt{E_0}}{4\pi}\,\mathbf{ 1}_\Fock\,.
\ee 
The spectrum of $H_\mathrm{ IBC}$ is given by $\lbrace E_{min} \rbrace \cup [E_{min}+E_0, \infty)$ and $E_{min}=g^2\sqrt{E_0}/4\pi$ is a simple eigenvalue.\\
Moreover, for $g\neq 0$ we have that $D(H_\infty) \cap D(H_0^{1/2}) = D_\mathrm{ IBC} \cap D(\D\Gamma({h}^{1/2}))  = \lbrace 0 \rbrace$.
\end{thm}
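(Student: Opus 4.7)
The plan is to construct $H_\infty$ explicitly via a coherent-state (Weyl) dressing transformation and then identify the result with $H_\mathrm{IBC}-c$ for $c := g^2\sqrt{E_0}/(4\pi)$. Set $\alpha_n := g h^{-1}\chi_n$; the hypotheses on $\hat\chi_n$, together with $E_0>0$ so that $(k^2+E_0)^{-1}\in\Lz$, yield $\alpha_n\to\alpha_\infty := gh^{-1}\delta = -gf_{\sqrt{E_0}}$ in $\Lz(\R^3)$. With the Weyl (displacement) operators $W(\alpha) := e^{a^*(\alpha)-a(\alpha)}$, the standard shift formula $W(\alpha)^*a^\#(f)W(\alpha) = a^\#(f)+\text{c-number}$, applied for $\alpha_n\in D(h)$ with $h\alpha_n = g\chi_n$, gives
\[
W(\alpha_n)^*\D\Gamma(h)W(\alpha_n) = \D\Gamma(h) + g a^*(\chi_n) + g a(\chi_n) + g^2\langle\chi_n, h^{-1}\chi_n\rangle = H_n - E_n.
\]
Strong $\Lz$-convergence $\alpha_n\to\alpha_\infty$ upgrades to strong operator convergence $W(\alpha_n)\to W(\alpha_\infty)$ of the unitaries, hence to strong resolvent convergence of $H_n-E_n$ to the self-adjoint operator $H_\infty := W(\alpha_\infty)^*\D\Gamma(h)W(\alpha_\infty)$ with domain $D(H_\infty) = W(\alpha_\infty)^*D(\D\Gamma(h))$.

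For the identification with $D_\mathrm{IBC}$, write $\psi := W(-\alpha_\infty)\phi$ for $\phi\in D(\D\Gamma(h))$. The decisive structural fact is that $-\alpha_\infty = gf_{\sqrt{E_0}}$ carries exactly the $-g/(4\pi r)$ singularity at the origin demanded by the IBC; expanding $W(-\alpha_\infty) = e^{-\|\alpha_\infty\|^2/2}e^{-a^*(\alpha_\infty)}e^{a(\alpha_\infty)}$ sector-wise and isolating the $j=0$ slot of the outermost creation operator yields
\[
\psi^{(n+1)}(r\omega, x_1,\dots,x_n) = -\frac{g}{4\pi\sqrt{n+1}\,r}\,\psi^{(n)}(x_1,\dots,x_n) + O(1)\qquad (r\to 0),
\]
which is exactly $B\psi = g\psi$ after multiplication by $-4\pi\sqrt{n+1}\,r$ and passage to the limit. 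The subleading contributions, together with $\alpha_\infty\in D(\Delta_1^*)$ and $\phi^{(n)}\in\Hz(\R^{3n})$, place each $\psi^{(n)}$ in $D(\Delta_n^*)\cap\Hio^n$ with $A\psi, H\psi\in\Fock$, so $D(H_\infty)\subset D_\mathrm{IBC}$. For the action I first test on the vacuum: $\psi_0 := W(-\alpha_\infty)\Omega = e^{-g^2\|f_{\sqrt{E_0}}\|^2/2}\sum_n g^n(n!)^{-1/2}f_{\sqrt{E_0}}^{\otimes n}$ satisfies $(-\Delta+E_0)f_{\sqrt{E_0}}=0$ pointwise off the origin, so $(-\Delta_\Fock^*+\D\Gamma(E_0))\psi_0 = 0$, while $\partial_r\bigl(r\cdot 4\pi f_{\sqrt{E_0}}(r)\bigr)|_{r=0} = \sqrt{E_0}$ gives $gA\psi_0 = c\psi_0$. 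Since $H_\infty\psi_0 = W(-\alpha_\infty)\D\Gamma(h)\Omega = 0$, this confirms $H_\mathrm{IBC}\psi_0 = (H_\infty+c)\psi_0$.

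The identity $H_\mathrm{IBC} = H_\infty+c$ on all of $D(H_\infty)$ then extends from $\psi_0$ via the Weyl-commutation $W(-\alpha_\infty)a^*(f) = (a^*(f)+\langle\alpha_\infty,f\rangle)W(-\alpha_\infty)$, which identifies $W(-\alpha_\infty)P(a^*)\Omega$ with $P(a^*+\langle\alpha_\infty,\cdot\rangle)\psi_0$ and makes the actions of both $H_\mathrm{IBC}$ and $H_\infty+c$ computable by iteration on this CCR-generated core, dense in $D(H_\infty)$ in graph norm; the resulting symmetric restriction of $H_\mathrm{IBC}$ coincides with the self-adjoint $H_\infty+c$, forcing $H_\mathrm{IBC}=H_\infty+c$ and $D_\mathrm{IBC}=D(H_\infty)$ by the principle that two self-adjoint operators, one extending the other, are equal. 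The spectrum then follows from unitary equivalence: $\sigma(H_\infty)=\sigma(\D\Gamma(h))=\{0\}\cup[E_0,\infty)$ with the vacuum as unique ground state, so $\sigma(H_\mathrm{IBC})=\{c\}\cup[c+E_0,\infty)$ with simple ground state $\psi_0$. For the non-overlap claim, any $\psi\in D_\mathrm{IBC}\cap D(\D\Gamma(h^{1/2}))$ has each $\psi^{(n)}\in\He(\R^{3n})$, which is incompatible with a $1/r$ singularity on the codimension-three collision set; the IBC with $g\neq 0$ then forces $\psi^{(n)}=0$ for every $n$. The principal technical obstacle is the second-paragraph regularity step: controlling the sector-wise expansion of $W(-\alpha_\infty)\phi$ in the graph-norm topology of $\Delta_n^*$ simultaneously on every sector and verifying that the codimension-six intersections $\{x_i=0=x_j\}$ in sectors $n\geq 2$ contribute only regular terms, so that the leading singular coefficient is genuinely one-particle-at-the-source data and $B\psi$ equals $g\psi$ in the strong sense required by $D_\mathrm{IBC}$.
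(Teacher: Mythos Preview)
Your construction of $H_\infty$ via the Weyl dressing and the strong-resolvent limit is correct and essentially identical to the paper's argument in Section~\ref{subsec:renorm}. The divergence from the paper lies in the identification $D(H_\infty)=D_\mathrm{IBC}$ and the matching of actions.

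For the inclusion $D(H_\infty)\subset D_\mathrm{IBC}$, the paper does not attempt your direct sector-wise expansion for arbitrary $\phi\in D(\D\Gamma(h))$. Instead it first verifies everything on the coherent core $W(gf_{\sqrt{E_0}})E(\Hz(\R^3))$ via an explicit pull-through computation (Lemma~\ref{lemma:pulltrough} and Corollary~\ref{corol:allgemein}), where the constant $g^2\sqrt{E_0}/(4\pi)$ emerges from a cancellation between the $G(T,\phi)$ term of the pull-through and the $gA$ contribution, using the one-particle asymmetry identity of Proposition~\ref{prop:asymn1}. Your argument verifies this cancellation only on the vacuum and asserts the extension by CCR-iteration without carrying it out. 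The paper then passes from the core to all of $W(gf_{\sqrt{E_0}})D(\D\Gamma(h))$ by a relative-boundedness estimate for $A$ (the inequality $\|AW\coh(u)\|_\Fock\leq C\|\D\Gamma(h)\coh(u)\|_\Fock$), closedness of $\lapadjn$, and the continuity of $A,B:D(\lapadjn)\to\Hzm$ from Lemma~\ref{lem:ABcont}. Your sector-wise asymptotics are heuristically right, but as you yourself flag, the uniform graph-norm control across all sectors is not actually supplied; the paper's core-plus-closure route is designed precisely to sidestep that.

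The more serious gap is the reverse inclusion. Your final step invokes the principle that a symmetric extension of a self-adjoint operator equals it, which requires knowing that $(H_\mathrm{IBC},D_\mathrm{IBC})$ is symmetric on \emph{all} of $D_\mathrm{IBC}$, not just on the dressed core. That symmetry is the content of Proposition~\ref{symmetry} and Corollary~\ref{corol:focksymmetry}, and it rests on the regularity result Proposition~\ref{regprop} (that $B\varphi=0$ forces $\varphi\in\Hz$ for $\varphi\in D_n^*$), whose proof occupies Appendix~\ref{sec:appreg}. Without it, the integration-by-parts identity on sectors $n\geq 2$ is unavailable, and all you have shown is that $H_\mathrm{IBC}$ \emph{restricted to} $D(H_\infty)$ is self-adjoint---which says nothing about whether $D_\mathrm{IBC}$ is strictly larger. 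If you are willing to cite Theorem~\ref{thm:main} for the self-adjointness of $(H_\mathrm{IBC},D_\mathrm{IBC})$ then the reverse inclusion is immediate, but you should say so; as written the argument is circular or incomplete at exactly the point where the paper invests its main technical effort.
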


Theorem~\ref{thm:renorm} is established in Section~\ref{sec:ESA}.

\subsection{Remarks on the renormalization procedure}
\label{sec:renotechnique}

The above described renormalization scheme is a particularly simple case of a somewhat more general
 renormalization procedure 
that can be applied  to a wider class of UV divergent Hamiltonians with the following common structure. There is a self-adjoint operator   $(H_0,D(H_0))$ and a sequence of operators $H_{I_n}$ that are small perturbations of $H_0$ in the sense that
\be 
H_n := H_0 + H_{I_n} 
\ee 
is self-adjoint on $D(H_0)$. If the interaction operator $H_{I_n}$ converged as $n\to\infty$ to an operator that is relatively (form-)bounded by $H_0$ with relative bound smaller than one, then no renormalization would be necessary. In a typical manifestation of the UV problem, however, $H_{I_n}$ does not converge. But in the cases of interest, there is a sequence of numbers $E_n\to\pm\infty$ such that
$H_\infty=\lim_{n\to\infty} (H_n - E_n)$
exists in the strong resolvent sense.

In the examples we have in mind, the essential steps in finding this sequence $E_n$ and proving the convergence of $H_n-E_n$ are, first, to construct a certain sequence of unitary operators $W_n$ on Fock space, called dressing transformations, such that $W_nH_n W_n^*$ assumes a manageable form; second, to split $W_nH_n W_n^*$ into  
\begin{equation}
\label{eq:Hfhat}
W_n H_n  W^*_n = H_n' +E_n
\end{equation}
such that $H_n'$ converges   in the strong resolvent sense  to a well defined operator $H'_\infty$. 
Third, one shows that $W_n$ has a strong limit $W_\infty$ (which is automatically unitary). Then it follows that
\begin{equation}
H_n-E_n = W_n^* H_n' W_n \xrightarrow{n\to\infty} W_\infty^* H_\infty' W_\infty = H_\infty
\end{equation}
in the strong resolvent sense.

Depending on the concrete model, the determination  of the limiting Hamiltonian $H'_\infty = \lim_{n\to\infty}H'_n$ can be more or less tricky and, as a consequence, its domain can be more or less explicit. In all examples discussed in the following, $W_n$ leaves invariant the domain $D(H_0)$, but this is no longer true for $W_\infty$. 

In his seminal paper \cite{Nel64}, Nelson showed that the model nowadays named after him can be renormalized according to the general scheme just sketched. He used the so-called Gross transformation for $W_n$ and was able to 
characterize $(H_\infty', D(H_\infty'))$ 
as a form perturbation of $H_0$. Hence, he could not explicitly determine $D(H'_\infty)$ but merely conclude that $D(H_\infty') \subset D(H_0^{1/2})$. 

Whenever $H_\infty'$ is an operator-bounded perturbation of $H_0$, one has $ D(H_\infty') = D(H_0)$ and $D(H_\infty)=W_\infty^*D(H_0)$ can be determined through the mapping properties of $W_\infty^*$.
Recently, Griesemer and W\"unsch \cite{GrWu16a} proved that the Fr\"ohlich Hamiltonian, which describes polarons, is of that type.
 In this case, one can 
define $H_\infty$ also directly via its quadratic form without the detour via the dressing transformation. However, then the domain of $H_\infty$ remains unknown,
while the result of \cite{GrWu16a} provides an explicit characterization of it.   
In our model \eqref{Hdelta},
the situation is even simpler, since it turns out that $H_n'=H_\infty'=H_0$.

After the existence of a self-adjoint renormalized Hamiltonian $H_\infty$ is established, two questions remain in general open. First, is there a direct characterization of  the domain $D(H_\infty) = W^*_\infty D(H_\infty')$? And second, how does $H_\infty$ act explicitly? As Nelson \cite{Nel64} put it:
\begin{quote}
It would be interesting to have a direct description of the operator $H_\infty$. Is $D(H_\infty) \cap D(H_0^{1/2}) = 0$?
\end{quote} 
The answer to the last question 
has been given by Griesemer and W\"unsch for the Fr\"ohlich Hamiltonian in \cite{GrWu16a} and for the massive  Nelson model in \cite{GrWu16b}. 
For our model \eqref{Hdelta}, we answer both of Nelson's questions in Theorem~\ref{thm:renorm}
in terms of $(H_\mathrm{ IBC}, D(H_\mathrm{ IBC}))$.

Here is what the dressing transformation $W_n$ looks like for our model \eqref{Hdelta}. 
Since ${h}^{-1}  \seq_n \in \Lz(\R^3)$ for $n \leq \infty$, the field operator
\begin{equation}
\label{eq:fieldop}
\Phi({h}^{-1}  \seq_n) := a({h}^{-1} \seq_n ) + a^*({h}^{-1}  \seq_n )
\end{equation}
is self-adjoint. Therefore,
\begin{equation} 
W_n := \E^{-\I \Phi(\I g {h}^{-1} \seq_n )} 
\end{equation}
is unitary for all $n\leq \infty$. It is straightforward to show that~\eqref{eq:Hfhat} now holds with
$E_n$ as in \eqref{Endef} and $H_n' :=  \D \Gamma (  h) $.  The proof can be found in Section~\ref{subsec:renorm},  or, for example, also in \cite{Deck04,Der03}.
Then $\lim_{n\to \infty} E_n= -\infty$, and $H'_\infty=\lim_{n\to \infty} H_n'=  \D \Gamma ( h)$ clearly exists. As a consequence,
\be
H_\infty = W^*_\infty \, \D \Gamma (   h) \, W_\infty \qquad\mbox{on}\quad D(H_\infty) = W^*_\infty D(\D \Gamma (   h))\,.
\ee

\section{Variants of the IBC Hamiltonian}
\label{sec:variants}

\subsection{General interior-boundary conditions}

The IBC $B\psi = g\psi$ discussed in the previous sections is not the only possibility of implementing  interior-boundary conditions for the Laplacian.  
In this section we present a four-parameter family of different interior-boundary conditions  that all lead to a self-adjoint Hamiltonian on Fock space. In a certain sense, 
this family covers all possible 
 types of IBCs.

The wider class of IBCs involves, instead of the values of the wave function on the boundary (like a Dirichlet boundary condition), a linear combination of the values and the derivative of the wave function on the boundary (like a Robin boundary condition); such IBCs were formulated in \cite{TeTu14,TT15b} for boundaries of codimension 1 (and are also considered in \cite{STT16a} for particle creation, where the boundary has codimension 3). Specifically, in this wider class, we replace
\be
B \to \E^{\I\theta}(\alpha B + \beta A)\,,\quad
A \to \E^{\I\theta}(\gamma B + \delta A)\,,
\ee
where $\theta\in[0,2 \pi)$ and $\alpha,\beta,\gamma,\delta\in\R$ are such that
\be
\alpha\delta - \beta \gamma =1\,,
\ee
 so that four of the five parameters can be chosen independently. We absorb the coupling constant $g$ into the constants $\alpha,\beta,\gamma,\delta$.  That is, we replace the IBC $B\psi=g\psi$ by 
\begin{equation}\label{IBC9}
\E^{\I\theta}(\alpha B + \beta A)\psi = \psi
\end{equation}
and the Hamiltonian
$
{H}_\mathrm{ IBC} = -\lapadjf +\D\Gamma(E_0)+gA
$ by
\be\label{H9}
\tilde{H}_\mathrm{ IBC} =-\lapadjf +\D\Gamma(E_0)+ \E^{\I\theta}(\gamma B + \delta A)\,.
\ee
The previous IBC \eqref{eq:IBC8} and Hamiltonian \eqref{eq:H8} are obviously contained in this scheme by chosing   $\theta=0=\beta=\gamma$ and $\alpha^{-1}=g=\delta$. As discussed in detail in \cite{STT16a}, the phase~$\theta$ can be removed by means of the gauge transformation $\psi^{(n)}\to \E^{-\I\theta n}\psi^{(n)}$ if there is a single source, but not if there are several sources with different $\theta$'s, a situation that we consider in the next section. We refrain from stating and proving the analogue to Theorem~\ref{thm:main}  also for  $\tilde H_\mathrm{ IBC}$, although it could be proved along the same lines as for $H_\mathrm{ IBC}$.  Instead, Theorem~\ref{thm:variant1} below implies already a statement that is merely slightly weaker, namely that, for $E_0>0$, $\tilde H_\mathrm{ IBC}$ is essentially self-adjoint on a dense domain satisfying the IBC  \eqref{IBC9}.

 To which extent does  the family $\tilde H_\mathrm{ IBC}$ cover all possible Hamiltonians  
 with IBCs?  
Yafaev \cite{Yaf92} showed that for the model on the truncated Fock space $\C\oplus \Lz(\R^3)$ with either zero or one particle all possible extensions of the (not densely defined) operator 
\be
H^\circ = ( 0, - \Delta) \quad\mbox{on}\quad D(H^\circ) = \{0\}\oplus C_0^\infty(\R^3\setminus \{0\})
\ee
are of the above type. On Fock space, however, one has in principle much more freedom. We could connect different sectors by different IBCs, i.e., make $\theta, \alpha,\beta,\gamma,\delta$ all depend on $n$, or even let them depend on the configuration of the other particles.  But if we exclude such a dependence, then Yafaev's result shows that the family $\tilde H_\mathrm{ IBC}$ is complete.
 
\subsection{IBCs for multiple sources}

We now consider a finite number $N$ of sources fixed at (pairwise distinct) locations $\xi_1,\ldots,\xi_N\in \R^3$. 
To keep things simple, we assume $E_0>0$ for the remainder of this section. For each source $\xi_i$, $1\leq i \leq N$, we choose parameters
\be
v_i := (\theta_i, \alpha_i, \beta_i , \gamma_i, \delta_i) \in [0, 2\pi) \times \R^4
\ee
which fullfill separately
\begin{equation}
\label{vcond1}
\alpha_i \delta_i - \beta_i \gamma_i = 1 \qquad 1\leq i \leq N \,.
\end{equation} 
We write $v$ for $(v_1,\ldots,v_N)$.
For suitable $\psi\in\Hio$, define
\be
A_i \psi := \lim_{x \rightarrow \xi_i} \partial_{r_i} (r_i \psi(x))\,, \qquad
B_i \psi := - 4 \pi \lim_{x \rightarrow \xi_i} (r_i \psi(x))\,, \qquad
\text{where }  r_i :=\vert x - \xi_i \vert \,,
\ee
and
\be
X_i :=\E^{\I \theta_i}(\alpha_i B_i + \beta_i A_i)\,, \qquad
Y_i :=\E^{\I \theta_i}(\gamma_i B_i  + \delta_i A_i)\,, \qquad 1\leq i\leq N \,.
\ee
The corresponding Fock space operators
\be
X_i^\Fock \big \vert_{\Hio^{n+1}} := \sqrt{n+1}\, X_i \otimes \Id_{\Hio^n}\,, \qquad
Y_i^\Fock \big \vert_{\Hio^{n+1}} := \sqrt{n+1}\, Y_i \otimes \Id_{\Hio^n}
\ee
are densely defined in $\Fock$. Then $(\lapadj , \Ddelta) := (\lapo , C_0^\infty(\R^3 \setminus \lbrace \xi_1,\xi_2 , \dots , \xi_N \rbrace))^*$ is a closed but non-symmetric operator on $\Hio$. Nevertheless, we will use the symbol $\D \Gamma(-\lapadj)$ to denote the operator which acts as $-\sum_{j=1}^n \Id_{1, \dots , j-1} \otimes \lapadj \otimes \Id_{j+1, \dots ,n}$ on the $n$-th sector of Fock space. It is well known  \cite{AGHH88,DFT08}  that 
\be \tilde h := -\lapadj + E_0 \quad \text{on} \quad U(v):= \bigcap_{i=1}^N \mathrm{ ker} X_i \subset D(\lapadj) \ee
is a self-adjoint operator that is bounded from below. It is called the $N$-center point interaction with energy offset $E_0$ and parameters $a_i := \frac{\alpha_i}{\beta_i}$, where $\beta_i=0$ corresponds to $a_i = + \infty$.
\begin{thm}
\label{thm:variant1}
Let $E_0 >0$ and $v$ be any set of parameters obeying the condition \eqref{vcond1} given above. There exists a dense subspace $\tilde{D}_\mathrm{ IBC} \subset \Fock$ such that for $\psi \in \tilde{D}_\mathrm{ IBC}$ the IBCs
\be X_i^\Fock \psi = \psi \qquad \forall   \ 1 \leq i \leq N   \ee 
hold and such that
\be
\tilde{H}_\mathrm{ IBC} := \D\Gamma(-\lapadj+E_0) + \sum_{i=1}^N Y_i ^\Fock 
\ee
is essentially self-adjoint on $\tilde{D}_\mathrm{ IBC}$.
 If $\tilde h$ is strictly positive\footnote{i.e., there is a positive constant $c$ such that $\tilde h\geq c$.}, then $\tilde H_\mathrm{ IBC}$ is bounded from below and possesses a unique ground state. 
\end{thm}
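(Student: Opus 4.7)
The plan is to reduce Theorem~\ref{thm:variant1} to essential self-adjointness of a standard Fock-space second quantization via an explicit unitary dressing, in close analogy with the identity $H_\infty = W_\infty^* \D\Gamma(h) W_\infty$ from Section~\ref{sec:renotechnique}. A preliminary observation is that the symplectic constraint $\alpha_i\delta_i - \beta_i\gamma_i = 1$ is precisely what is needed so that the pair $(X_i, Y_i)$ satisfies the same boundary Green's identity at $\xi_i$ as $(B, A)$ does at the origin in Proposition~\ref{symmetry}. Consequently the formal symmetry argument leading to~\eqref{eq:asymmn1} carries over source by source, and $\tilde H_\mathrm{IBC}$ is formally symmetric on any domain on which $X_i^\Fock \psi = \psi$ holds for each $i$.

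For each source $\xi_i$, let $\eta_i \in \Lz(\R^3)$ denote the Green's function of $h$ translated to $\xi_i$ and rescaled according to $v_i$ (well defined in $\Lz$ since $E_0 > 0$), and set
\be
W := \E^{-\I\,\Phi(\sum_{i=1}^N \eta_i)}
\ee
together with $\tilde D_\mathrm{IBC} := W\,\mathcal{D}_0$, where $\mathcal{D}_0$ is a standard core of $\D\Gamma(\tilde h)$ built from finite-particle vectors whose $n$-sector components lie in the algebraic symmetric tensor product of a core of $\tilde h$. The main technical step, performed sector by sector, is the intertwining identity
\be
W^*\,\tilde H_\mathrm{IBC}\,W \;=\; \D\Gamma(\tilde h) \;+\; C(v,\xi)\,\mathbf{1}_\Fock
\ee
on $\mathcal{D}_0$, with a finite constant $C(v,\xi)$ comprising the single-source renormalization energies together with finite cross contributions from evaluating $\eta_i$ at $\xi_j$ for $i\neq j$, all finite because $|\xi_i - \xi_j| > 0$. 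Essential self-adjointness of $\tilde H_\mathrm{IBC}$ on $\tilde D_\mathrm{IBC}$ then follows from essential self-adjointness of $\D\Gamma(\tilde h)$ on $\mathcal{D}_0$ (which uses the known self-adjointness of the $N$-center point interaction $\tilde h$); density of $\tilde D_\mathrm{IBC}$ is immediate from unitarity of $W$; and the IBCs $X_i^\Fock \psi = \psi$ on $\tilde D_\mathrm{IBC}$ are verified by reading off the $1/r_i$ singularity of $(W\varphi)^{(n+1)}$ at $\xi_i$, which is built into the choice of $\eta_i$ and the symplectic normalisation of $v_i$. Finally, when $\tilde h$ is strictly positive, $\D\Gamma(\tilde h) \geq 0$ with one-dimensional kernel $\C\Omega$, so $\tilde H_\mathrm{IBC} \geq C(v,\xi)$ with unique ground state $W\Omega$.

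The hard part will be making the intertwining identity rigorous, with three intertwined difficulties. First, the commutator $[\lapadjf, W]$ produces UV-singular contributions at each $\xi_i$ which must cancel exactly against the creation-annihilation term $Y_i^\Fock$; this cancellation fixes the scaling of $\eta_i$ and yields the diagonal part of $C(v,\xi)$, reproducing in the multi-source setting the mechanism behind $E_n$ in~\eqref{Endef}. Second, cross terms between different sources contribute finite additive constants to $C(v,\xi)$; their finiteness rests on the pointwise smoothness of $\eta_i$ away from $\xi_i$ together with $|\xi_i - \xi_j| > 0$. Third, one must verify sector by sector, via the pull-through formula for creation operators against the Weyl unitary, that $W$ maps the regular core $\mathcal{D}_0$ into a dense subspace on which each dressed vector acquires precisely the $1/r_i$ asymptotics at $\xi_i$ that is demanded by $X_i^\Fock\psi = \psi$, and that this asymptotic behaviour is compatible with membership in $\Fock$ on all higher sectors where the planes $\{x_j = \xi_i\}$ intersect.
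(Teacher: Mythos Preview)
Your overall strategy---conjugate by a Weyl operator built from Green's functions so that $\tilde H_{\mathrm{IBC}}$ becomes $\D\Gamma(\tilde h)$ plus a constant---is exactly the mechanism behind the paper's proof, but as written it only establishes the special case covered by Theorem~\ref{thm:variant2}, not Theorem~\ref{thm:variant1} in full.

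The gap is this. Your $\eta_i$ are Green's functions of $h=-\Delta+E_0$, so $\phi:=\sum_i \eta_i$ satisfies $(-\lapadj+E_0)\phi=0$. That is precisely what kills the residual $a^*+a$ terms in the pull-through and gives your clean identity $W^*\tilde H_{\mathrm{IBC}}W=\D\Gamma(\tilde h)+C\,\Id_\Fock$. But for the IBCs $X_i^\Fock\psi=\psi$ to hold on $W\mathcal D_0$ you need $X_i(\phi)=1$ for every $i$; writing $\phi=\sum_l c_l\, w_l^{E_0}$, this condition reads $S(E_0)c=(1,\dots,1)^T$. Solvability of that linear system is the \emph{extra} hypothesis of Theorem~\ref{thm:variant2} and is not guaranteed in Theorem~\ref{thm:variant1}: $S(E_0)$ fails to be invertible exactly when $-E_0$ is an eigenvalue of the $N$-center point interaction $(-\lapadj,U(v))$, and then $(1,\dots,1)^T$ may lie outside its range. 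No ``rescaling according to $v_i$'' can repair this, since that just amounts to choosing the coefficients $c_l$.

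The paper handles the general case by taking $\phi=\phi(\lambda)=\sum_l c_l\, w_l^\lambda$ for some $\lambda>0$ with $S(\lambda)$ invertible (always possible, as invertibility fails only for finitely many $\lambda$). Then $X_i(\phi)=1$ holds and the IBCs are satisfied, but now $(-\lapadj+E_0)\phi=(E_0-\lambda)\phi\neq 0$, so after conjugation one obtains
\[
W(-\phi)\,\tilde H_{\mathrm{IBC}}\,W(\phi)=\D\Gamma(\tilde h)+(E_0-\lambda)\bigl(a^*(\phi)+a(\phi)\bigr)+C\,\Id_\Fock
\]
on $E(U)$, with a genuine field-operator perturbation left over. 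Essential self-adjointness is then obtained via Nelson's Commutator Theorem, not merely from self-adjointness of $\D\Gamma(\tilde h)$. Your argument does go through unchanged for the last assertion of the theorem, since strict positivity of $\tilde h$ forces $S(E_0)$ to be invertible and one may take $\lambda=E_0$.

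A secondary point: the paper takes $\mathcal D_0=E(U)$, the span of coherent vectors over $U(v)$, rather than finite-particle vectors. Because $W(\phi)\varepsilon(u)$ is a scalar multiple of $\varepsilon(u+\phi)$, the verification that $W\mathcal D_0$ lies sector-wise in $D(\lapadjn)$ and satisfies each IBC is a one-line computation; with finite-particle vectors your ``third difficulty'' would require substantially more bookkeeping.
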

\begin{remark} Suppose that $\beta_i = 0$ for all $1\leq i \leq N$. Then 
\be \tilde h = h = (-\lapadj + E_0, \Hz(\R^3)) \ee
is the free one-particle operator, which is strictly positive. In this case $\tilde H_\mathrm{ IBC}$ is bounded from below for any choice of distinct points $\xi_1, \dots, \xi_N$.
\end{remark}
\begin{remark}
Let $N=1$. In this case, for all values of $a_1 = \frac{\alpha_1}{\beta_1} \in (- \infty, \infty]$, the essential spectrum of the point-interaction operator is $\sigma_{ess}(\tilde h) = [ E_0, \infty)$, cf. \cite{AGHH88}. If $a_1 \geq 0$, then $\tilde{h}$ has no point spectrum. If $a_1 < 0$, then there is exactly one eigenvalue $\lambda_0$ of $\tilde h$. It is explicitly given as $ \lambda_{0} = E_0 - 16 \pi^2 a_1 ^2$. Therefore $\tilde H_\mathrm{ IBC}$ is bounded from below if $a_1  > 0$ or if $a_1  \leq 0$ but still $a_1  > - \frac{\sqrt{E_0}}{4 \pi}$. 
\end{remark}
Under certain assumptions on $v$ and $E_0$, we are able to further characterize $\tilde H_\mathrm{ IBC}$. In order to state the theorem, we have to introduce some abbreviations: \\ For any $\lambda >0$ let 
\be
w^{\lambda}_i (x) :=  f_{\sqrt{\lambda}}(x-\xi_i) = - \frac{\E^{-\sqrt{\lambda} \vert x -\xi_i \vert}}{{4 \pi} \vert x - \xi_i \vert}  \in \Lz(\R^3)\,,
\ee
and define the   matrices 
\be
G^\lambda_{ij}:=  w^\lambda_i(\xi_j) =  w_j^\lambda(\xi_i)\,,
\ee
and 
\be
S_{ij}(\lambda) := \delta_{ij} \E^{\I \theta_i}\Bigl(\alpha_i + \frac{\sqrt{\lambda}}{4 \pi} \beta_i\Bigr) +(1- \delta_{ij})  \E^{\I \theta_i} \beta_i G^\lambda_{ij} \,,
\ee
where $\delta_{ij}$ denotes the Kronecker symbol. 
 Note that $S$ depends on all of $\lambda,\xi_1,\ldots,\xi_N$, $v_1,\ldots,v_N$.
\begin{thm}
\label{thm:variant2}
Let $(\tilde{H}_\mathrm{ IBC} , \tilde{D}_\mathrm{ IBC})$ also denote the unique self-adjoint extension that has been constructed in Theorem \ref{thm:variant1}.
If   the vector $(1,1, \dots, 1)^{T}$ lies in the range of $S(E_0)$, then there exists $\cloud \in D(\lapadj) \subset \Hio$ such that we have the equality 
\begin{align}
\label{unitaryeq}
 \E^{ \I \Phi(\I \cloud)} \, \tilde H_\mathrm{ IBC} \, \E^{- \I \Phi(\I \cloud)} =  \D \Gamma(\tilde h) + C(\cloud) \Id_\Fock 
\end{align}
as self-adjoint operators on Fock space $\Fock$. Here $C(\cloud) \in \R$ is a constant, $\Phi$ has been defined in \eqref{eq:fieldop} and $\D\Gamma(\tilde h)$ denotes the second quantization of $\tilde h = (-\lapadj +E_0,U)$.
\end{thm}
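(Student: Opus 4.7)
The plan is to carry out a Gross--Nelson type dressing transformation, adapted to the IBC setting. The guiding idea is that the Weyl unitary $W := \E^{\I\Phi(\I\cloud)}$, for a suitable one-particle cloud $\cloud$, should intertwine the inhomogeneous IBCs $X_i^\Fock\psi=\psi$ with the homogeneous conditions $X_i^\Fock\tilde\psi=0$ that define the domain $U(v)$ of the $N$-centre point-interaction operator $\tilde h$ on each one-particle factor.

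\textbf{Construction of the cloud.} The hypothesis yields $c=(c_1,\ldots,c_N)\in\C^N$ with $S(E_0)c = (1,\ldots,1)^T$. Set $\cloud := \sum_{j=1}^N c_j\, w^{E_0}_j \in D(\lapadj)$. Since $\lapadj w^{E_0}_j = E_0\, w^{E_0}_j$ on $\R^3\setminus\{\xi_j\}$, we have $(-\lapadj+E_0)\cloud = 0$. A direct computation of the boundary functionals yields $B_i\cloud = c_i$ and $A_i\cloud = \tfrac{\sqrt{E_0}}{4\pi}c_i + \sum_{j\neq i}c_j\, G^{E_0}_{ij}$, which combine to $X_i\cloud = [S(E_0)c]_i = 1$ for each $i$ -- the crucial algebraic identity that makes the transformation work.

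\textbf{Intertwining of boundary conditions.} From the canonical commutation relations and a BCH expansion one has the standard Weyl shift $W a(f) W^{-1} = a(f) + \langle f,\cloud\rangle$ for $f\in\Hio$. Viewing $A_i^\Fock, B_i^\Fock$ as generalised annihilation operators, the analogous identities $W A_i^\Fock W^{-1} = A_i^\Fock + A_i\cloud$ and $W B_i^\Fock W^{-1} = B_i^\Fock + B_i\cloud$ hold on a dense core. It follows that $W X_i^\Fock W^{-1} = X_i^\Fock + \Id_\Fock$ and $W Y_i^\Fock W^{-1} = Y_i^\Fock + (Y_i\cloud)\Id_\Fock$. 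In particular, if $\tilde\psi\in D(\D\Gamma(\tilde h))$, i.e.\ $X_i^\Fock\tilde\psi=0$ for all $i$, then $\psi := W^{-1}\tilde\psi$ satisfies the IBCs $X_i^\Fock\psi = \psi$ and thus lies in the domain of $\tilde H_\mathrm{IBC}$.

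\textbf{Transformation of the Hamiltonian and main obstacle.} On such states one then computes $W \tilde H_\mathrm{IBC} W^{-1}$. For the kinetic term, the first-order commutator of $\I\Phi(\I\cloud)$ with $\D\Gamma(-\lapadj+E_0)$ produces a linear contribution involving $(-\lapadj+E_0)\cloud = 0$, so no operator-valued correction of that form survives. The remaining contributions are boundary defects arising from the non-symmetry of $\lapadj$ on $D(\lapadj)$; these cancel exactly the operator-valued part of $\sum_i W Y_i^\Fock W^{-1}$, leaving only the constant $C(\cloud) := \sum_{i=1}^N Y_i\cloud$. Reality of $C(\cloud)$ is then forced by the self-adjointness of both sides (Theorem \ref{thm:variant1} for the LHS and standard for $\D\Gamma(\tilde h) + C(\cloud)\Id_\Fock$ on the RHS). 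The hard part is making this cancellation fully rigorous: $W$ does not preserve $D(\lapadj)$-type regularity in any simple way, so one has to work on a carefully chosen dense core of $D(\D\Gamma(\tilde h))$ (e.g.\ vectors with rapidly decreasing particle number and extra smoothness away from the sources) on which the series defining $W$ converges strongly when applied, verify the identity sector by sector using the explicit near-source asymptotics of $\cloud$, and then invoke the essential self-adjointness provided by Theorem \ref{thm:variant1} to extend the identity to the full self-adjoint operators.
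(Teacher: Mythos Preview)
Your overall architecture is right and coincides with the paper's: construct the cloud $\cloud=\sum_j c_j w_j^{E_0}$ with $X_i\cloud=1$ and $(-\lapadj+E_0)\cloud=0$, conjugate by the Weyl unitary, and show that the remaining boundary defect cancels the $\sum_i Y_i^\Fock$ contribution up to the constant $C(\cloud)=\sum_i Y_i\cloud$. Where you and the paper part ways is in the \emph{implementation}, and your choices create avoidable difficulties.

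The paper never invokes BCH or commutator identities. Instead it works entirely on the coherent domain $E(U)$ over $U=\bigcap_i\ker X_i$: on $\coh(\psi)$ with $\psi\in U$ one has $W(\cloud)\coh(\psi)=\text{const}\cdot\coh(\psi+\cloud)$ and $Y_i^\Fock\coh(\psi+\cloud)=Y_i(\psi+\cloud)\coh(\psi+\cloud)$, so the conjugated Hamiltonian is computed in closed form by the one-line pullthrough (Lemma~\ref{lemma:pulltrough}). The cancellation of the $\psi$-dependent scalar is then exactly Lemma~\ref{lem:rewrite2}(a), which rewrites the asymmetry defect $\langle\cloud,(-\lapadj+E_0)\psi\rangle-\langle(-\lapadj+E_0)\cloud,\psi\rangle$ as $\sum_i Y_i(\psi)$, and reality of $C(\cloud)$ is Lemma~\ref{lem:rewrite2}(b). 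Essential self-adjointness on $E(U)$ (Proposition~\ref{prop:esasecquant}) then promotes the identity to the self-adjoint closures.

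Your proposed core of ``vectors with rapidly decreasing particle number and extra smoothness'' is awkward here: $W$ does not preserve any finite-particle subspace, so the image of such a core under $W^{-1}$ has infinitely many nonzero sectors, and a ``sector-by-sector'' verification is not really sector-wise. More importantly, your shift identities $W A_i^\Fock W^{-1}=A_i^\Fock+A_i\cloud$ and $W B_i^\Fock W^{-1}=B_i^\Fock+B_i\cloud$ are \emph{not} consequences of the CCR, because $A_i^\Fock$ and $B_i^\Fock$ are not of the form $a(f)$ for any $f\in\Hio$; they hold precisely because coherent vectors are eigenvectors of these functionals, so you are implicitly relying on the coherent-vector calculus anyway. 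Likewise, the BCH approach for $\D\Gamma(-\lapadj+E_0)$ is delicate since $-\lapadj+E_0$ is not symmetric on $D(\lapadj)$, and the commutator series does not terminate cleanly; the paper's direct computation on $\coh(\psi)$ avoids this entirely. I would recommend reworking your argument on $E(U)$: the algebra you outline then becomes a two-line computation, and the rigorization step (``extend by essential self-adjointness'') goes through without any domain gymnastics.
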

The definition of $\tilde{D}_\mathrm{ IBC}$ in terms of coherent states obtained from vectors in $\Ddelta$, as well as the proof of Theorems~\ref{thm:variant1} and~\ref{thm:variant2} and the explicit form of the ground state, of $\cloud$ and of $C(\cloud)$ are given in Section~\ref{sec:generalized}. As discussed in detail in \cite{STT16a}, $\tilde H_\mathrm{ IBC}$ is time reversal invariant if and only if all $\theta_i$ coincide up to addition of an integer multiple of $\pi$.

\section{Symmetry of $H_\mathrm{ IBC}$ }
\label{sec:symmetry}

In this section we prove symmetry of $(H_\mathrm{ IBC}, D_\mathrm{ IBC})$. The main ingredient is Equation~\eqref{eq:asymmn1}, which will be proved in Proposition~\ref{prop:asymn1} below, and its generalization to $n\geq 2$.

\begin{prop}
\label{prop:asymn1}
For $n=1$ the maps $A$ and $B$ are well-defined continuous linear functionals on $\Ddelta$ and for any $\ph, \psi \in \Ddelta$ we have
\be 
 \langle \ph,  \lapadj \psi \rangle_{\Hio} - \langle  \lapadj \ph, \psi \rangle_{\Hio} = \langle B \ph , A \psi \rangle_{\C}- \langle A \ph , B \psi \rangle_{\C} \, .
\ee 
\end{prop}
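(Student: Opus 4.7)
The plan is to reduce the asymmetry identity to a calculation on the two-dimensional space $V$ via the decomposition $D(\lapadj) = D(\lapo) \oplus V$ from \eqref{eq:vonneumann1}. Since $\lapo$ is closed and $V$ is finite-dimensional, this is a topological direct sum in the graph norm of $\lapadj$ (indeed orthogonal in the graph inner product by the von Neumann formula), so the projection $P_V$ onto $V$ is continuous.

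For well-definedness and continuity of $A$ and $B$, I would first note that both functionals vanish on $C_0^\infty(\R^3 \setminus \{0\})$, while the paper already observes that $A\psi = \psi(0)$ and $B\psi = 0$ for $\psi \in \Hz(\R^3)$. Since $D(\lapo)$ is by definition the $\Hz$-closure of $C_0^\infty(\R^3\setminus\{0\})$ and $\Hz(\R^3) \hookrightarrow C^0(\R^3)$ by Sobolev embedding in three dimensions, the evaluation $\psi \mapsto \psi(0)$ is $\Hz$-continuous and annihilates the approximating sequence; hence $A|_{D(\lapo)} = B|_{D(\lapo)} = 0$. On the finite-dimensional $V$, $A$ and $B$ are automatically bounded, so writing $A = (A|_V)\circ P_V$ and similarly for $B$ yields well-defined continuous functionals on all of $D(\lapadj)$.

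Both sides of the asymmetry identity are continuous sesquilinear forms on $D(\lapadj)\times D(\lapadj)$. Decomposing $\ph = \ph_0 + \ph_1$, $\psi = \psi_0 + \psi_1$ with $\ph_0,\psi_0 \in D(\lapo)$ and $\ph_1,\psi_1 \in V$, the right-hand side depends only on $\ph_1,\psi_1$ because $A, B$ vanish on $D(\lapo)$; for the left-hand side, the cross terms involving a $D(\lapo)$ factor vanish via the defining adjoint relation $\langle \ph, \lapo \psi_0\rangle_\Hio = \langle \lapadj \ph, \psi_0\rangle_\Hio$. The problem therefore reduces to $\ph,\psi \in V$, and by sesquilinearity to the basis case $\ph = f_\alpha$, $\psi = f_\beta$ with $\alpha,\beta \in \{(1\pm\I)/\sqrt{2}\}$. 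Using $\lapadj f_\gamma = \gamma^2 f_\gamma$, the elementary limits $Af_\gamma = \gamma/(4\pi)$ and $Bf_\gamma = 1$ read off directly from the series expansion $f_\gamma(r\omega) = -(4\pi r)^{-1} + \gamma/(4\pi) + O(r)$, together with the inner product $\langle f_\alpha, f_\beta\rangle_\Hio = [4\pi(\bar\alpha+\beta)]^{-1}$ obtained by integration in spherical coordinates, the left-hand side equals $(\beta^2 - \bar\alpha^2)[4\pi(\bar\alpha+\beta)]^{-1} = (\beta-\bar\alpha)/(4\pi)$, matching the right-hand side $(\beta-\bar\alpha)/(4\pi)$.

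The main obstacle, though modest, is the continuity claim: the limits in the definitions of $A$ and $B$ are not a priori controlled by the graph norm of $\lapadj$, so one cannot avoid exploiting the topological direct sum structure from \eqref{eq:vonneumann1} together with the already-known behavior on $\Hz(\R^3)$ to reduce boundedness to the finite-dimensional piece $V$.
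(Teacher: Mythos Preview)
Your proof is correct and follows essentially the same approach as the paper's own proof: reduce to the finite-dimensional piece $V$ via the von Neumann decomposition, verify $A=B=0$ on $D(\lapo)$ using that point evaluation is continuous on $\Hz(\R^3)$, deduce continuity of $A$ and $B$ from continuity of the projection to $V$, and then check the identity on pairs $f_{\gamma_1},f_{\gamma_2}$ by direct computation of $Af_\gamma=\gamma/4\pi$, $Bf_\gamma=1$, and $\langle f_{\gamma_1},f_{\gamma_2}\rangle=[4\pi(\overline{\gamma_1}+\gamma_2)]^{-1}$. The only cosmetic difference is that the paper phrases the continuity step via the quotient $D(\lapadj)/\Hzo$ rather than the projection $P_V$, which is the same thing.
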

\begin{proof}
Recall that $\Ddelta= D(\lapo) \oplus V$ with $V= \mathrm{ span} \left\{ f_\gamma\,\Big|\, \gamma\in \{ (1\pm\I)/\sqrt{2} \}\right\}$. On the functions $f_\gamma$ one easily evaluates 
\be
A f_\gamma = \frac{\gamma}{4\pi} \quad\mbox{and}\quad B f_\gamma =1 \,.
\ee
On $D(\lapo)$ we have $A=0$, since for $\psi\in C^1(\R^3)$
\be
A\psi = \frac{1}{4\pi} \lim_{r\to 0}   \int_{\Sp^2}\left( \psi(r\omega )  + r\omega\cdot\nabla\psi(r\omega)\right) \, \D \omega = \psi(0)\,,
\ee
and the point evaluation is continuous on $D(\lapo)= \Hzo(\R^3\setminus\{0\})$. Clearly also $B=0$ on $D(\lapo)$. Now since $\Hzo$ is a closed subspace of $\Ddelta$, the projection $p:  \Ddelta \to \Ddelta/ \Hzo \cong V$ is continuous.  Thus $A,B: \Ddelta\to \C$ are continuous as they can be written as the composition of $p$ with a linear functional on a finite dimensional space.

The difference on the left hand side of~\eqref{eq:asymmn1} vanishes if either $\varphi$ or $\psi$ are elements of $\Hz_0(\R^3\setminus \lbrace 0\rbrace)$, and so does the right hand side by the considerations above. Thus, it is sufficient to verify the claim for $\ph=f_{\gamma_1}$, $\psi=f_{\gamma_2}$. As noted before we have $\lapadj f_\gamma=\gamma^2 f_\gamma$ and
\be 
 \langle f_{\gamma_1}, f_{\gamma_2} \rangle = \frac{1}{4\pi}\int_0^\infty \D r\,\E^{-(\overline\gamma_1+\gamma_2)r}
 =\frac{1}{4\pi(\overline\gamma_1+\gamma_2)}\,.
\ee 
Thus 
\begin{eqnarray}
 \langle f_{\gamma_1}, \lapadj f_{\gamma_2} \rangle - \langle \lapadj f_{\gamma_1}, f_{\gamma_2} \rangle
&=&\frac{\gamma_2^2- \overline\gamma_1^2}{4\pi(\overline\gamma_1+\gamma_2)}  
\; =\;\frac{\gamma_2 -\overline\gamma_1}{4\pi} \nonumber\\[1mm]&=& \overline{B f_{\gamma_1}}Af_{\gamma_2}- \overline{A f_{\gamma_1}}Bf_{\gamma_2} \, .  
\end{eqnarray}
\end{proof}

Proposition~\ref{prop:asymn1} can be understood as a generalized integration-by-parts formula for the singular functions in $D(\lapadj)$. Its generalization to the case $n \geq 2$, given in Proposition~\ref{symmetry} below, requires knowledge of the regularity properties of functions in $D(\lapadjn)$. These are rather subtle, as the following example shows:

Let $f \in \mathrm{H}^{-1/2}(\R^3)$, and set 
\begin{equation}\label{eq:sing_ex}
 \psi(x,y)=- \frac{\E^{T |x|}}{4 \pi |x|}f(y)\,, 
\end{equation}
where $\E^{T |x|}$ denotes the contraction semi-group with generator $T=-\sqrt{-\Delta_y+1}$,  $D(T)= \He(\R^3)$, acting on $\Lz(\R^3_y)$. One easily checks that $\psi \in \Lz(\R^6)$ with norm proportional to $\|f\|_{\mathrm{H}^{-1/2}}$. By the smoothing properties of the semi-group, $\psi$ is a smooth function on $\R^6\setminus\{x=0\}\supset \R^6\setminus\Col^2$. The action of $\Delta^{*}_2$ on $\psi$ is thus given by differentiating on $\R^6\setminus\Col^2$ and yields
\be 
 \Delta^{*}_2 \psi= \psi\,,
\ee 
so $\psi\in D(\Delta^*_2)$ is an eigenfunction of $\Delta^{*}_2$ with eigenvalue one. However, applying only the differential expression $\Delta_x$ gives $\Delta_x \psi=T^2 \psi$, which is not an element of $\psi \in \Lz(\R^6)$ unless $f\in \mathrm{H}^{3/2}(\R^3)$. Thus we have $\psi \in D(\Delta^*_2)$, but applying the Laplacian in only one of the variables does not give a square-integrable function, i.e.~$\psi\notin D(\Delta_1^* \otimes 1)$.
Furthermore, the formula for $\psi$ suggests that $B\psi=\sqrt{2} f \in \mathrm{H}^{-1/2}(\R^3)$ is a distribution, so the ``boundary values'' of $\psi$ on the collision configurations $\Col^2$ will be of low regularity.

We now state our results concerning the definition of the operators $A$ and $B$ on $D(\lapadjn)$, which we prove in Appendix~\ref{sec:appreg}. 
To allow for a lighter notation, we will use the symbol $\Omega_n$ to denote the configuration space of $n$ particles, that is $\Omega_n := \R^{3n} \setminus \Col^n$.

\begin{lemma}
\label{lem:ABcont}
For any $n\in\N$, every $\ph\in D( \lapadjn)$ has a representative for which the limits
\be
A^{(n)} \ph  := \frac{\sqrt{n}}{4\pi} \lim_{r\to 0}  \partial_r \int_{\Sp^2}r \ph(r\omega ,x_2,\ldots, x_n) \, \D \omega 
\ee
and 
\be
B^{(n)} \ph := -4\pi \sqrt{n} \lim_{r\to 0} r \ph(r\omega ,x_2,\ldots, x_n)
\ee
 exist in  $\Hzm(\Rminusnme)$ and this defines continuous linear maps 
\be 
 A^{(n)} ,B^{(n)} :  D( \lapadjn)\to \Hzm(\Rminusnme)\,.
\ee 
Furthermore, $B^{(n)}$ vanishes on $\mathrm{H}^1(\R^{3n}) \cap D(\lapadjn)$ and the restriction of $A^{(n)}$ to $\Hz(\R^{3n})$ is given by the Sobolev-trace on $\{x_1=0\}$.
\end{lemma}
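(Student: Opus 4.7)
The plan is to generalize the explicit scalar calculation of Proposition~\ref{prop:asymn1} to an operator-valued setting in which $y:=(x_2,\ldots,x_n)$ is treated as a parameter and the complex mass $\gamma$ is replaced by the self-adjoint operator $T:=\sqrt{1-\Delta_y}$ on $\Lz(\R^{3(n-1)})$. For $\ph\in D(\lapadjn)$, setting $g:=(-\lapadjn+1)\ph\in\Lz$, the function $\ph$ solves the operator-valued Helmholtz equation $(-\Delta_{x_1}+T^2)\ph=g$ distributionally on $(\R^3_{x_1}\setminus\{0\})\times \Rminusnme$. A direct spherical-coordinate computation shows that $(4\pi|x_1|)^{-1}\E^{-T|x_1|}$ is the $\Lz$-valued Green's function of $-\Delta_{x_1}+T^2$, so — after localizing in $y$ against a test function $\chi\in\Hzo(\Rminusnme)$, chosen so as to avoid the other collision hyperplanes $\{x_j=0\}$, $j\geq 2$ — one obtains the decomposition
\begin{equation*}
\ph(x_1,y) = \ph_{\mathrm{reg}}(x_1,y) - \frac{\E^{-T|x_1|}}{4\pi|x_1|}\,\sigma(y)
\end{equation*}
in analogy with~\eqref{eq:vonneumann1}, where $\ph_{\mathrm{reg}}$ is the convolution particular solution (belonging to $\Hz$ in $x_1$ with values in $\Lz$ in $y$) and $\sigma(\ph)$ is a distribution in $y$, of regularity $\mathrm{H}^{-1/2}$ on the support of $\chi$ (this bound coming from the requirement that the singular part lie in $\Lz$).

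The limits defining $B^{(n)}$ and $A^{(n)}$ are then elementary operations on this decomposition: since the singular part is radial in $x_1$, the limit $\lim_{r\to 0}r\ph(r\omega,y)$ exists, is $\omega$-independent, and equals $-(4\pi)^{-1}\sigma$ in $\mathrm{H}^{-1/2}_y$, so $B^{(n)}\ph=\sqrt{n}\,\sigma$; computing $\partial_r\int_{\Sp^2}r\ph(r\omega,y)\,\D\omega$ analogously combines the $\Lz_y$-trace $4\pi\,\ph_{\mathrm{reg}}(0,y)$ from the regular part with $T\sigma$ from the singular part, yielding $A^{(n)}\ph=\sqrt{n}\bigl(\ph_{\mathrm{reg}}(0,\cdot)+(4\pi)^{-1}T\sigma\bigr)$. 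Both lie in $\Hzm(\Rminusnme)$, and continuity of $A^{(n)},B^{(n)}\colon D(\lapadjn)\to\Hzm(\Rminusnme)$ follows from the continuity of the extraction maps $\ph\mapsto\sigma$ and $\ph\mapsto\ph_{\mathrm{reg}}$, which is read off the Green's function representation and formalized via the closed-graph theorem using the graph norm of $\lapadjn$. The last two assertions of the lemma are then immediate: for $\ph\in\He(\R^{3n})$ one has $\sigma=0$, because $|x_1|^{-1}\notin\He_{\mathrm{loc}}(\R^3)$, so $B^{(n)}\ph=0$; for $\ph\in\Hz(\R^{3n})$ additionally $\ph_{\mathrm{reg}}=\ph$ and $T\sigma=0$, whence $A^{(n)}\ph=\sqrt{n}\,\ph(0,\cdot)$, which is exactly the Sobolev trace on $\{x_1=0\}$.

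The principal obstacle I anticipate is justifying the operator-valued decomposition rigorously when $\ph$ carries singularities on the other collision hyperplanes $\{x_j=0\}$, $j\geq 2$, as in the example~\eqref{eq:sing_ex}: in that regime $\ph$ is \emph{not} a literal $\Lz_y$-valued function of $x_1$, so the decomposition must be interpreted distributionally after pairing with $\chi$, and one has to show that the $\sigma$ obtained from different local choices of $\chi$ patches consistently into a global element of $\Hzm(\Rminusnme)$. This is the same technical difficulty that makes multi-particle point interactions delicate (cf.~\cite{Min11,MiOt17}); it is ultimately handled by a careful elliptic regularity argument near the intersections of the hyperplanes, which also explains why the target space of $A^{(n)}$ and $B^{(n)}$ is only $\Hzm$ rather than the $\mathrm{H}^{-1/2}$ that one would expect from a single-hyperplane analysis.
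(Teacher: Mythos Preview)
Your approach differs substantially from the paper's, and the obstacle you flag in your last paragraph is a genuine gap, not a technicality. The global decomposition $\ph=\ph_{\mathrm{reg}}-(4\pi|x_1|)^{-1}\E^{-T|x_1|}\sigma$ does not hold for arbitrary $\ph\in D(\lapadjn)$: an element of $\ker(-\lapadjn+1)$ can carry singularities on \emph{any} of the hyperplanes $\{x_j=0\}$, so the singular part of $\ph-\ph_{\mathrm{reg}}$ is in general not of this form. Your localization with $\chi\in\Hzo(\Rminusnme)$ sidesteps this for fixed $\chi$, but then the ``patching'' step --- showing that $\chi\mapsto\bigl(B^{(1)}\langle\ph,\chi\rangle,\, A^{(1)}\langle\ph,\chi\rangle\bigr)$ extends to a continuous linear map on all of $\Hzo(\Rminusnme)$ --- is the entire content of the lemma, and you do not supply it. The ``elliptic regularity near the intersections'' you invoke is neither stated precisely nor actually needed.

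The paper's route avoids both the decomposition and the patching. The key preparatory step (Lemma~\ref{lem:domain}) shows directly that $\Delta_{x_1}\ph\in\Lz\bigl(\R^3_{x_1},\Hzm(\Rminusnme)\bigr)$ with a bound by the graph norm; the dual space $\Hzm(\Rminusnme)=(\Hzo(\Rminusnme))'$ absorbs the singularities at the other hyperplanes by construction. One then expands $\ph$ in an orthonormal basis $(\eta_k)$ of the Hilbert space $\Hzm(\Rminusnme)$: each scalar coefficient $\ph_k(x_1)=\langle\eta_k,\ph(x_1,\cdot)\rangle_{\Hzm}$ lies in $D(\lapadj)$, so Proposition~\ref{prop:asymn1} applies componentwise, and a uniform H\"older estimate on $r\mapsto\partial_r\int_{\Sp^2}r\ph_k(r\omega)\,\D\omega$ yields convergence of $\sum_k(A^{(1)}\ph_k)\eta_k$ in $\Hzm$. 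Continuity of $A^{(n)},B^{(n)}$ is then an immediate norm estimate --- no closed-graph argument required --- and the statements about $\He$ and $\Hz$ follow because the $\ph_k$ inherit that regularity. Note also that the target $\Hzm$ is an artifact of this method rather than of the intersections: Remark~\ref{remarkgeneral} records that $B^{(n)}$ in fact lands in $\mathrm{H}^{-1/2}$.
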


In the following we will drop the superscript from $A^{(n)}$ and $B^{(n)}$ for better readability. 
Let
\begin{equation}\label{Dn*def} 
D^*_n := \Bigl\{\psi\in D(\lapadjn)\cap \Hio^n \,\Big|\, A \psi \in \Lz(\R^{3n-3})\,,\; B \psi\in \Lz(\R^{3n-3})\Bigr\} \subset \Hio^n \, .
\end{equation}
and equip this space with the norm $\Vert \psi \Vert_{\Hio^n}  + \Vert \lapadjn \psi \Vert_{\Hio^n}+ \Vert A \psi \Vert_{\Hio^{n-1}} + \Vert B \psi \Vert_{\Hio^{n-1}}$. The following Proposition characterizes $\Hz\subset D_n^*$ in terms of boundary values.

\begin{prop}\label{regprop}
Let $\ph\in D^*_n$. Then $B \ph=0$ if and only if  $\ph\in \Hz(\R^{3n})$.
\end{prop}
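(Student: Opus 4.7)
The plan is to prove the two directions separately. The ``$\Leftarrow$'' direction is immediate from Lemma~\ref{lem:ABcont}: any $\ph\in\Hz(\R^{3n})$ lies in $\He(\R^{3n})\cap D(\lapadjn)$ (with $\lapadjn\ph=\Delta\ph$), so the lemma gives $B\ph=0$; the same lemma identifies $A\ph$ with the Sobolev trace on $\{x_1=0\}$, which is in $\mathrm{H}^{1/2}(\R^{3(n-1)})\subset\Lz$, so the defining conditions of $D^*_n$ are met.

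For the ``$\Rightarrow$'' direction I would show that the distributional Laplacian of $\ph$ on all of $\R^{3n}$ coincides with $\lapadjn\ph\in\Lz$; since $\ph\in\Lz$ as well, this forces $\ph\in\Hz(\R^{3n})$ by the standard characterization of $\Hz$. To verify this equality of distributions, test against $\chi\in C_0^\infty(\R^{3n})$ and insert a smooth cutoff $\eta_\epsilon(x):=\prod_{i=1}^n \eta(|x_i|/\epsilon)$, with $\eta\in C^\infty(\R_{\geq 0})$ transitioning smoothly from $0$ near the origin to $1$. Since $\chi\eta_\epsilon\in D(\lapon)$, the definition of the adjoint yields
\begin{equation*}
\int(\lapadjn\ph)\,\chi\eta_\epsilon=\int\ph\,\Delta(\chi\eta_\epsilon)=\int\ph\bigl[\eta_\epsilon\Delta\chi+2\nabla\chi\cdot\nabla\eta_\epsilon+\chi\Delta\eta_\epsilon\bigr].
\end{equation*}
As $\epsilon\to 0$, dominated convergence gives $\int(\lapadjn\ph)\chi\eta_\epsilon\to\int(\lapadjn\ph)\chi$ and $\int\ph\,\eta_\epsilon\Delta\chi\to\int\ph\,\Delta\chi$, and the proof reduces to showing that the two ``cutoff-derivative'' terms vanish in the limit.

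The pointwise bounds $|\nabla\eta_\epsilon|\lesssim\epsilon^{-1}$ and $|\Delta\eta_\epsilon|\lesssim\epsilon^{-2}$ hold on their supports, which lie in $\bigcup_i\{|x_i|\leq 2\epsilon\}\cap\mathrm{supp}\,\chi$, a region of volume $O(\epsilon^3)$ per sheet. Cauchy--Schwarz then reduces the entire task to the local estimate $\|\ph\|_{\Lz(\{|x_i|\leq 2\epsilon\}\cap\mathrm{supp}\,\chi)}=o(\epsilon^{1/2})$ for each $i$. This is precisely where the hypothesis $B\ph=0$ enters: for a generic element of $D(\lapadjn)$ the $1/|x_i|$-singularity at $\{x_i=0\}$ would only give $O(\epsilon^{1/2})$, which is insufficient; but the singularity analysis underlying Lemma~\ref{lem:ABcont} (whose proof in Appendix~\ref{sec:appreg} is paradigmatically illustrated by~\eqref{eq:sing_ex}) shows that the leading $1/|x_i|$-coefficient of $\ph$ is captured by $B\ph$, and under $B\ph=0$ this coefficient vanishes, leaving a remainder of higher regularity for which $\|\ph\|_{\Lz(\{|x_i|\leq 2\epsilon\}\cap\mathrm{supp}\,\chi)}=O(\epsilon^{3/2})$.

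The main obstacle I anticipate is the interaction of several singular sheets at the corners $\{x_i=x_j=0\}$ of $\Col^n$: the trace $B\ph$ on $\{x_i=0\}$ might itself be singular along the lower-dimensional collision set $\{x_j=0\}$, and the cutoff $\eta_\epsilon$ has compounded derivatives there. However, these corners form a codimension-$6$ set of volume $O(\epsilon^6)$, and the putative corner singularities are controlled recursively by the iterative structure of $D(\lapadjn)$ provided in the appendix. Thus the corner contributions are of strictly lower order in $\epsilon$, vanish in the limit, and the identification $\Delta\ph=\lapadjn\ph$ on $\R^{3n}$ goes through to yield $\ph\in\Hz(\R^{3n})$.
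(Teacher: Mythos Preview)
Your ``$\Leftarrow$'' direction is fine and matches the paper. The ``$\Rightarrow$'' direction, however, has a genuine gap at its central step.

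The entire cutoff argument rests on the estimate $\|\ph\|_{\Lz(\{|x_i|\leq 2\epsilon\}\cap\mathrm{supp}\,\chi)}=O(\epsilon^{3/2})$, which you justify by saying that ``the singularity analysis underlying Lemma~\ref{lem:ABcont} \ldots\ shows that the leading $1/|x_i|$-coefficient of $\ph$ is captured by $B\ph$, and under $B\ph=0$ this coefficient vanishes, leaving a remainder of higher regularity''. But Lemma~\ref{lem:ABcont} does not provide such a decomposition. It only asserts that the limit defining $B\ph$ exists in $\Hzm(\Omega_{n-1})$; it gives no splitting of $\ph$ into a singular part with coefficient $B\ph$ plus a quantitatively more regular remainder. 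The conclusion that $\ph$ is, say, locally bounded near $\{x_i=0\}$ once $B\ph=0$---which is what $O(\epsilon^{3/2})$ amounts to---is essentially the regularity statement you are trying to prove. For $n=1$ the explicit two-dimensional deficiency space makes this easy, but for $n\geq 2$ the analogue (see the example~\eqref{eq:sing_ex}) involves coefficients that are merely $\mathrm{H}^{-1/2}$-distributions in the remaining variables, and vanishing of $B\ph$ in $\Lz$ does not translate into a pointwise or $\Lz$-smallness bound for $\ph$ without further work. Your corner discussion compounds the problem: the ``iterative structure of $D(\lapadjn)$ provided in the appendix'' you invoke does not exist in the form you need.

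The paper's proof takes a completely different route. It first establishes (Lemma~\ref{lem:basic}, via H\"ormander's boundary regularity Theorem~\ref{thm:hoermander}) that the radial-in-$x_1$ part of $\ph$, cut off away from the other planes $\{x_j=0\}$, lies in $\Hz(\R^{3n})$; this is a genuine bootstrap from $H_{(2,-2)}$ through $H_{(2,-1)}$ to $H_{(2,0)}$ using that the boundary datum $\tilde\ph(0)=0$ is trivially in every $\mathrm{H}^{s+3/2}$. It then decomposes $\ph=\sum_{I}P_IQ_{I^c}\ph$ into pieces radial in the variables $I$ and orthogonal-to-radial in $I^c$, and runs an induction over $|I|$: the $Q_{I^c}$-projection automatically kills the trace on $\Col^{I^c}$, so one only needs $P_I\ph\in D^*(\Hzo(\R^{3n}\setminus\Col^{I^c}))$, and the inductive step reduces to a density statement about sums of $\Hzo$-spaces that follows from codimension-six removability. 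This machinery is precisely what replaces the unproven $O(\epsilon^{3/2})$ estimate in your argument.
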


With this a-priori information on the functions in $D_n^*$ we can now characterize the asymmetry of $\lapadjn$ in terms on the operators $A$ and $B$.

\begin{prop}
\label{symmetry}
For all $\psi, \varphi \in D^*_n$ we have that
\begin{equation}\label{eq:symmetry}
\langle \lapadjn \psi, \varphi \rangle_{\Hio^{n}}  - \langle  \psi, \lapadjn \varphi \rangle_{\Hio^{n}} = \langle A \psi, B \varphi \rangle_{\Hio^{n-1}} -\langle B \psi, A \varphi \rangle_{\Hio^{n-1}}  \,.
\end{equation}
\end{prop}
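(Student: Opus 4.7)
The plan is to decompose each function into a regular $\mathrm{H}^2$ part and an explicit singular part that captures its boundary trace, so that Equation~\eqref{eq:symmetry} reduces to a boundary computation parallel to the one carried out for $n=1$ in Proposition~\ref{prop:asymn1}.

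\textbf{Decomposition.} For $\psi\in D_n^*$, the trace $B\psi$ lies in $\Lz(\R^{3(n-1)})$ and is symmetric in its arguments by the symmetry of $\psi$. I would build an explicit symmetric singular ansatz from Yukawa kernels,
\[
  \psi_s(x_1,\ldots,x_n) := -\frac{1}{4\pi\sqrt{n}} \sum_{j=1}^n \frac{\E^{-|x_j|}}{|x_j|}\, (B\psi)\bigl(x_1,\ldots,\widehat{x_j},\ldots,x_n\bigr),
\]
and verify directly that $\psi_s\in D(\lapadjn)\cap\Hio^n$ and $B\psi_s=B\psi$ (the terms with $j\geq 2$ drop out in the $r_1\to 0$ limit defining $B$). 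Then $\psi_r:=\psi-\psi_s$ satisfies $B\psi_r=0$, so by Proposition~\ref{regprop} we get $\psi_r\in\Hz(\R^{3n})$. Decompose $\varphi=\varphi_r+\varphi_s$ analogously.

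\textbf{Reduction and boundary computation.} Expanding the asymmetry form into the four combinations, the $(r,r)$ term vanishes by the ordinary Green identity on $\R^{3n}$ (no boundary, both factors in $\Hz$). For each of the three remaining terms, I would apply Green's identity on $\Rminusn^{\epsilon}:=\{x\in\R^{3n}:|x_j|>\epsilon\text{ for all }j\}$, where the integrand is smooth, and let $\epsilon\to 0$. Bulk integrals converge by dominated convergence since $\lapadjn\psi,\lapadjn\varphi\in\Lz$. The boundary $\partial\Rminusn^{\epsilon}$ is a union of $n$ pieces, one near each hyperplane $\{x_j=0\}$, all contributing equally by symmetry of $\psi,\varphi$. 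On the piece $\{|x_1|=\epsilon\}$, substitute the explicit form of $\psi_s,\varphi_s$; the "diagonal" terms (both singularities in $x_1$) reproduce the one-particle calculation of Proposition~\ref{prop:asymn1} pointwise in $(x_2,\ldots,x_n)$, giving the integrand $\frac{1}{n}\bigl(\overline{B\psi}\,A\varphi-\overline{A\psi}\,B\varphi\bigr)$. Integrating in the remaining variables and multiplying by $n$ for the $n$ boundary pieces yields exactly the right-hand side of \eqref{eq:symmetry}.

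\textbf{Main obstacle.} The actual difficulty is controlling the "off-diagonal" contributions to the boundary integral on $\{|x_1|=\epsilon\}$, i.e.\ those involving a singular factor $\E^{-|x_k|}/|x_k|$ with $k\neq 1$ coming from the symmetrization in $\psi_s$ or $\varphi_s$. These blow up near the codimension-six intersection $\{x_1=x_k=0\}$—precisely the higher-codimension singularity flagged in the introduction and illustrated by the example~\eqref{eq:sing_ex}. To show these contributions vanish as $\epsilon\to 0$, I would combine the local $\Lz$-integrability of $1/|x_k|^2$ in $\R^3$ with the $\Lz$-bounds on $B\psi,A\psi$ (and on $\lapadjn\psi$, and analogously for $\varphi$) built into $D_n^*$, to obtain an $O(\sqrt{\epsilon})$ estimate after integration in $(x_2,\ldots,x_n)$. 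This is the step that goes beyond the classical one-codimension theory and, in parallel with the proof of Proposition~\ref{regprop} in the appendix, is where the bulk of the technical work resides.
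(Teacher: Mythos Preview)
Your decomposition fails at the first step: the ansatz $\psi_s$ is in general \emph{not} an element of $D(\lapadjn)$. Since $B\psi$ is only known to lie in $\Lz(\R^{3(n-1)})$, applying the full Laplacian to the $j$th summand of $\psi_s$ produces, besides the harmless radial contribution $\Delta_{x_j}\bigl(\tfrac{\E^{-|x_j|}}{|x_j|}\bigr)(B\psi)$, a transverse term $\tfrac{\E^{-|x_j|}}{|x_j|}\,\Delta_{\hat x_j}(B\psi)$, and $\Delta_{\hat x_j}(B\psi)$ is merely an $\Hzm$-distribution. Thus $\lapadjn\psi_s\notin\Lz$, so $\psi_s\notin D(\lapadjn)$, hence $\psi_r=\psi-\psi_s\notin D(\lapadjn)$ either, and Proposition~\ref{regprop} cannot be invoked to place $\psi_r$ in $\Hz(\R^{3n})$. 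For the same reason the Green-identity step is ill-posed: on $\Rminusn^{\epsilon}$ the integrand is not smooth, and $\Delta\psi_s$ is not square-integrable there. The fix is precisely the construction in~\eqref{eq:sing_ex}: replace $\E^{-|x_j|}$ by the semigroup $\E^{-\sqrt{1-\Delta_{\hat x_j}}\,|x_j|}$, which makes $\psi_s$ a genuine element of $\ker(\lapadjn-1)$. But then the kernel is nonlocal in $(x_1,\ldots,\hat x_j,\ldots,x_n)$, the boundary computation on $\{|x_1|=\epsi\}$ no longer reduces ``pointwise in the remaining variables'' to Proposition~\ref{prop:asymn1}, and one also has to check separately that $A\psi_s\in\Lz$ (which requires more than $B\psi\in\Lz$) before Proposition~\ref{regprop} applies to $\psi_r$.

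The paper avoids all of this by never decomposing $\psi$. It observes that the asymmetry form $\beta(\psi,\varphi)=\langle\lapadjn\psi,\varphi\rangle-\langle\psi,\lapadjn\varphi\rangle$ must factor through the boundary map $\mathfrak{B}:\psi\mapsto(B\psi,A\psi)$, because Proposition~\ref{regprop} identifies $\ker\mathfrak{B}=\Hzo(\Rminusn)$, on which $\beta$ vanishes by definition of the adjoint. It then suffices to verify~\eqref{eq:symmetry} on any subspace of $D_n^*$ whose $\mathfrak{B}$-image is dense in $\Hio^{n-1}\oplus\Hio^{n-1}$; the paper chooses symmetric tensor products of one-particle functions from $\ker A$ and $\ker B\subset D(\lapadj)$, on which $\lapadjn$ acts as $\sum_j(\lapadj)_{x_j}$ and Proposition~\ref{prop:asymn1} applies directly---no $\epsi\to0$ limits and no off-diagonal terms.
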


\begin{proof}
By definition of the norm on $D^*_n$, the maps $A,B: D^*_n\to \Hio^{n-1} $ are continuous, and so is the map 
\be 
 \mathfrak{B} : D^*_n \to \Hio^{n-1}\oplus \Hio^{n-1}\,,\qquad \psi\mapsto (B\psi, A\psi)\,. 
\ee 
The skew-hermitean sesquilinear form
\be 
 \beta(\psi, \varphi):=\langle \lapadjn \psi, \varphi \rangle - \langle  \psi, \lapadjn \varphi \rangle
\ee 
is also continuous on $D^*_n$.  Suppose for the moment that there exists a continuous, skew-hermitean sesquilinear form $\alpha$ on $\mathrm{ran } \, \mathfrak{B} \subset \Hio^{n-1} \oplus \Hio^{n-1}$ such that $\beta = \alpha\circ \mathfrak{B}$. 
 Any continuous sesquilinear form  on $\mathrm{ran }\, \mathfrak{B}$ is already determined by its values on any subspace of $\mathrm{ran } \, \mathfrak{B}$ which is dense in the $\Vert \cdot \Vert_{n-1} + \Vert \cdot \Vert_{n-1}$-norm. Therefore, $\beta$ is already determined by its values on a subspace $D_0$ whose image $\mathfrak{B}(D_0)$ is dense in $\Hio^{n-1} \oplus \Hio^{n-1}$.  That is, it suffices to verify \eqref{eq:symmetry} on $D_0$.  Such a subspace is given by
\be D_0 := \lbrace \psi \in D^*_n \vert \psi = \psi_A + \psi_B, \,  \psi_{A} \in D^n_A , \, \psi_{B} \in D^n_B  \rbrace  \qquad D_{A/B} := \ker A/B \subset \Ddelta \, .
\ee
Here $D^n_A$ and $D^n_B$ are the spans of symmetric $n$-fold tensor products of elements of $\ker A$ and $\ker B$ on $\Ddelta$. These kernels are the domains of self-adjoint extensions of $\lapo$; in fact $\ker B = \Hz(\R^3)$, and $\ker A$ is the domain of a point source with infinite scattering length. We have
\be \mathfrak{B}(D_0) = (B(D^n_A),A(D^n_B)) = \left( D^{n-1}_{A}, D^{n-1}_{B}\right) \subset \Hio^{n-1} \oplus \Hio^{n-1} \, ,\ee
so $\mathfrak{B}(D_0)$ is in fact dense. Now because $(\lapadj , D_{A})$ and $(\lapadj , D_{B})$ are symmetric operators and $\beta$ is skew-hermitean, it is even sufficient to compute only one cross-term $\beta(\psi_A, \ph_B)$. For tensor products, however, Proposition \ref{prop:asymn1} can be applied and yields
\begin{align}
\beta(\psi_A, \varphi_B)&=\sum_{i=1}^n  \langle (\lapadj)_{x_i} \psi_A,\varphi_B \rangle_{\Hio^n} - \langle  \psi_A, (\lapadj)_{x_i}\varphi_B \rangle_{\Hio^n}\nonumber\\
  &= n\left( \langle (\lapadj)_{x_1} \psi_A, \varphi_B \rangle_{\Hio^n} - \langle  \psi_A, (\lapadj)_{x_1}\varphi_B\rangle_{\Hio^n}\right) \nonumber\\
  &=\langle A \psi_A , B \varphi_B\rangle_{\Hio^{n-1}}-\langle B \psi_A , A \varphi_B \rangle_{\Hio^{n-1}} \nonumber  \\
  &= - \langle B\psi_A, A\varphi_B\rangle_{\Hio^{n-1}} \, .
\end{align}
We still have to construct an $\alpha$ with $\beta = \alpha\circ \mathfrak{B}$. Here Proposition~\ref{regprop} enters as the key ingredient: we have that
\begin{equation}
 \ker \mathfrak{B}= \ker B\cap\ker A =\lbrace \psi \in \Hz(\R^{3n})\cap \Hio^n\vert A\psi=\psi\vert_{\Col^{n}}=0\rbrace
=\Hzo(\Rminusn) \,. \label{eq:kerneloffrakb}
\end{equation} 
As a consequence $\beta(\psi, \varphi)=0$ for all $\varphi \in D^*_n$ if $\psi\in \ker \mathfrak{B}$.
Thus we can define on the quotient the sesquilinear form
\be
\tilde \alpha: D^*_n/ \ker \mathfrak{B}\times  D^*_n/ \ker \mathfrak{B}\to \C\,,\quad ([\psi],[\ph])\mapsto \beta(\psi,\ph)\,
\ee
and \eqref{eq:kerneloffrakb} guarantees that this is well defined. Let $\pi$ denote the quotient map. Then $\beta = \tilde{\alpha} \circ \pi
$, which means that $\tilde{\alpha}$ is continuous in the quotient topology. There exists a unique continuous isomorphism  $\mathfrak{B}': D^*_n/ \ker \mathfrak{B} \rightarrow \mathrm{ran}\, \mathfrak{B}$ such that $\mathfrak{B} = \mathfrak{B}' \circ \pi$. Inserting the identity we get
\be \beta = \tilde{\alpha} \circ \pi = \tilde{\alpha} \circ (\mathfrak{B}')^{-1} \circ \mathfrak{B}' \circ \pi = \tilde{\alpha} \circ (\mathfrak{B}')^{-1} \circ \mathfrak{B}  \,.  \ee
If we define $ \alpha := \tilde{\alpha} \circ (\mathfrak{B}')^{-1}$, it is obviously continuous. This proves the claim.
\end{proof}
\begin{corol}
\label{corol:focksymmetry}
$(H_\mathrm{ IBC}, D_\mathrm{ IBC})$ is symmetric for all $E_0 \in \R$.
\end{corol}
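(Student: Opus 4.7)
The plan is to mimic, sector by sector, the truncated-Fock-space computation for $H^{(1)}_\mathrm{IBC}$ displayed in Section~\ref{sec:model}, with Proposition~\ref{symmetry} in the role that Proposition~\ref{prop:asymn1} played there. Fix $\varphi, \psi \in D_\mathrm{IBC}$. First I would expand $\langle \varphi, H_\mathrm{IBC} \psi\rangle_\Fock - \langle H_\mathrm{IBC} \varphi, \psi\rangle_\Fock$ as a sum over sectors, observe that the $\D\Gamma(E_0)$ contribution vanishes since $E_0 \in \R$, and then show that the asymmetries coming from $-\lapadjf$ and $gA$ cancel term by term.

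The essential preliminary is to verify that the hypothesis of Proposition~\ref{symmetry} holds in each sector, i.e.\ that $\psi^{(n)} \in D^*_n$ for every $n\geq 1$. Reading off the definitions \eqref{Adef} and \eqref{Bdef} at the Fock-space level gives the identifications $(A\psi)^{(n-1)} = A \psi^{(n)}$ and $(B\psi)^{(n-1)} = B \psi^{(n)}$, so the four conditions in \eqref{eq:IBC def} translate immediately into $A\psi^{(n)} \in \Lz(\R^{3n-3})$ (from $A\psi \in \Fock$) and $B\psi^{(n)} = g\psi^{(n-1)} \in \Lz(\R^{3n-3})$ (from the IBC combined with $\psi^{(n-1)} \in \Hio^{n-1}$). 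Together with $\psi^{(n)} \in D(\lapadjn) \cap \Hio^n$, these place $\psi^{(n)}$ in $D^*_n$, and likewise for $\varphi^{(n)}$.

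With this in hand, Proposition~\ref{symmetry} converts the antisymmetric Laplacian contributions into boundary pairings $\langle A\varphi^{(n)}, B\psi^{(n)}\rangle_{\Hio^{n-1}} - \langle B\varphi^{(n)}, A\psi^{(n)}\rangle_{\Hio^{n-1}}$. Substituting $B\varphi^{(n)} = g\varphi^{(n-1)}$ and $B\psi^{(n)} = g\psi^{(n-1)}$ by the IBC turns these into $g\langle A\varphi^{(n)}, \psi^{(n-1)}\rangle - g\langle \varphi^{(n-1)}, A\psi^{(n)}\rangle$. Reindexing the $gA$-sum via $(A\varphi)^{(n-1)} = A\varphi^{(n)}$ rewrites the contribution $g\sum_n [\langle \varphi^{(n)}, (A\psi)^{(n)}\rangle - \langle(A\varphi)^{(n)}, \psi^{(n)}\rangle]$ in precisely the same form but with the opposite sign, so the two contributions cancel sector by sector, exactly mirroring the four-line calculation for $H^{(1)}_\mathrm{IBC}$.

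The main obstacle is bookkeeping rather than analysis: one must ensure all sums over sectors converge absolutely so that termwise rearrangement and the index shift are permitted. This follows from $\psi, \varphi \in \Fock$, together with $A\psi, A\varphi \in \Fock$ and the IBC (which makes $B\psi, B\varphi$ square-summable across sectors), via Cauchy--Schwarz. No further analytic input is required, because the genuinely delicate work---existence and continuity of the boundary functionals $A$ and $B$ on $D(\lapadjn)$ and the generalized integration-by-parts identity---has already been carried out in Proposition~\ref{symmetry}.
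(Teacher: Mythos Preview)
Your proposal is correct and follows essentially the same approach as the paper's proof: split $H_\mathrm{IBC}$ into its three constituents, apply Proposition~\ref{symmetry} sector by sector to convert the Laplacian asymmetry into boundary pairings, substitute the IBC, and reindex the $gA$ sum to obtain cancellation. If anything, you are more explicit than the paper in two respects---verifying that $\psi^{(n)}\in D^*_n$ so that Proposition~\ref{symmetry} applies, and addressing absolute summability so the reindexing is legitimate---but the underlying argument is identical.
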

\begin{proof}
Recall the definition of the domain
\be 
D_\mathrm{ IBC} := \left\{ \psi\in \Fock\;\left|
\begin{split}
&\psi^{(n)}\in D(\lapadjn)\cap \Hio^n \text{ for all }n\in\N\,,\\
&H \psi\in\Fock\,,\: A \psi\in\Fock\,,\:\text{and} \ B \psi  = g \psi 
\end{split}
\right.\right\}\,.
\ee
Now $H \psi \in \Fock$ together with $A \psi \in \Fock$ clearly implies $(-\lapadjf+ \D \Gamma(E_0)) \psi \in \Fock$, so we may split the operator and compute with the help of Proposition \ref{symmetry}:
\begin{align}
\langle \ph , H \psi \rangle_\Fock \nonumber
&= \langle \ph , (-\lapadjf + \D \Gamma(E_0)) \psi \rangle_\Fock + \langle \ph ,  g A \psi \rangle_\Fock \\
&= \sum_{n\in \N} \langle \ph^{(n)} , -\lapadjn  \psi^{(n)} \rangle_{{n}} + \langle \ph , \D \Gamma(E_0) \psi \rangle_\Fock+  \langle \ph , g A \psi \rangle_\Fock &\nonumber\\
&\hspace{-3pt}\stackrel{\eqref{eq:symmetry}}{=}
\begin{aligned}[t]
 &\sum_{n\in \N} \langle - \lapadjn  \ph^{(n)} ,  \psi^{(n)} \rangle_{{n}} +   \langle A \ph^{(n)}, B \psi^{(n)} \rangle_{{n-1}} - \langle B \ph^{(n)} , A \psi^{(n)} \rangle_{{n-1}}\\
 & +\langle \ph , \D \Gamma(E_0) \psi \rangle_\Fock+  \langle \ph , g A \psi \rangle_\Fock
\end{aligned}
\nonumber\\
&\hspace{-3pt}\stackrel{\text{IBC}}{=} 
\begin{aligned}[t]
&\langle (\lapadjf + \D \Gamma(E_0)) \ph,  \psi \rangle_\Fock + \langle \ph , g A \psi \rangle_\Fock \\
&+ \sum_{n\in \N}  \langle A \ph^{(n)}, g\psi^{(n-1)} \rangle_{{n-1}} - \langle g \ph^{(n-1)} , A \psi^{(n)} \rangle_{{n-1}}
\end{aligned}
\nonumber \\
&=  \langle (\lapadjf + \D \Gamma(E_0)) \ph,  \psi \rangle_\Fock + \langle g A \ph, \psi\rangle_{\Fock} = \langle H \ph,  \psi \rangle_\Fock \, . 
\end{align}
\end{proof}
Another simple corollary of our results in this section is the fact that, for $E_0>0$ and $g\neq 0$, the intersection of $D_\mathrm{ IBC}$ and the form-domain of the free operator $\D\Gamma(h)$ contains only the zero vector. More precisely:
\begin{corol}\label{corol:form-domain}
 Let $g\neq 0$ and set $h_+=-\Delta + |E_0|$, then for any $E_0\in \R$ we have 
 \be 
  D_\mathrm{ IBC}\cap D\left(\D \Gamma (h^{1/2}_+)\right) = \{0\}\,.
 \ee 
\end{corol}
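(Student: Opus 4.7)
The strategy is to exploit the tension between the interior--boundary condition, which forces every nonzero sector of $\psi$ to produce a $1/r$--singularity in the sector above, and the form domain condition, which forces all sectors to be $\mathrm{H}^1$. Concretely, the plan is to show that $\psi \in D_{\mathrm{IBC}} \cap D(\D\Gamma(h_+^{1/2}))$ implies $\psi^{(n)} \in \mathrm{H}^1(\R^{3n})$ for every $n$, and then apply Lemma~\ref{lem:ABcont} together with the IBC to propagate this to $\psi^{(n-1)} = 0$ for all $n$.

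First I would identify the form domain sector-wise. Since $h_+ = -\Delta + |E_0|$ is non-negative and self-adjoint on $\mathrm{H}^2(\R^3)$, the form domain of $h_+^{1/2}$ is $\mathrm{H}^1(\R^3)$ with the quadratic form $q(f) = \|\nabla f\|^2 + |E_0| \|f\|^2$. On the $n$-particle sector, $\D\Gamma(h_+)$ acts as $\sum_{j=1}^{n} h_{+,j}$, whose quadratic form is the sum of the $q_j$'s; its closure has domain $\mathrm{H}^1(\R^{3n}) \cap \Hio^n$. Therefore $\psi \in D(\D\Gamma(h_+^{1/2}))$ forces each sector $\psi^{(n)}$ to lie in $\mathrm{H}^1(\R^{3n})$.

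Now take any $\psi \in D_{\mathrm{IBC}} \cap D(\D\Gamma(h_+^{1/2}))$. By the definition of $D_{\mathrm{IBC}}$, each $\psi^{(n)}$ is in $D(\lapadjn) \cap \Hio^n$, and by the previous step it also lies in $\mathrm{H}^1(\R^{3n})$. Lemma~\ref{lem:ABcont} asserts that $B^{(n)}$ vanishes on $\mathrm{H}^1(\R^{3n}) \cap D(\lapadjn)$, so we obtain $B\psi^{(n)} = 0$ in $\Hzm(\Omega_{n-1})$ for every $n \geq 1$. The interior--boundary condition $B\psi = g\psi$ reads sectorwise as $B\psi^{(n)} = g\,\psi^{(n-1)}$; combined with $B\psi^{(n)} = 0$ and $g \neq 0$, this gives $\psi^{(n-1)} = 0$ for every $n \geq 1$, hence $\psi = 0$.

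The argument is short because all the hard analytic work has already been done: the regularity statement of Lemma~\ref{lem:ABcont} that $B^{(n)}$ annihilates $\mathrm{H}^1 \cap D(\lapadjn)$ is exactly the right tool. The only step that deserves a line of justification is the identification of the form domain of $\D\Gamma(h_+)$ with an $\mathrm{H}^1$-type space on each sector, and this is a standard consequence of the fact that the Dirichlet form of the Laplacian on $\R^3$ has form domain $\mathrm{H}^1$. I do not anticipate any genuine obstacle; the role of the corollary is mainly interpretive, emphasizing that $H_{\mathrm{IBC}}$ cannot be realized as a form perturbation of the free Hamiltonian $\D\Gamma(h)$ in any sense.
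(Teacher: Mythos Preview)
Your proof is correct and matches the paper's argument essentially line for line: both identify the form domain sector-wise as $\mathrm{H}^1(\R^{3n})\cap\Hio^n$, invoke Lemma~\ref{lem:ABcont} to conclude $B\psi^{(n)}=0$, and combine this with the IBC $B\psi^{(n)}=g\psi^{(n-1)}$ to force $\psi=0$. The only cosmetic difference is that the paper phrases it as a contrapositive (start with $\psi\neq 0$ in $D_\mathrm{IBC}$ and derive $\psi\notin D(\D\Gamma(h_+^{1/2}))$), while you argue directly.
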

\begin{proof}
Take $\psi\neq 0 \in D_\mathrm{ IBC}$. Then $\psi^{(n)}\neq 0$ for some $n\in \N$. This implies that $B\psi^{(n+1)}=g \psi^{(n)}\neq 0$. But $D(\D \Gamma (h^{1/2}_+))\vert_{\Hio^{n+1}}=\He(\R^{3(n+1)})\cap \Hio^{n+1}$, and by Lemma~\ref{lem:ABcont} $B$ vanishes on this set, so $\psi\notin D(\D \Gamma (h^{1/2}_+))$.
\end{proof}

\begin{remark}\label{remarkgeneral}
Propositions~\ref{regprop} and~\ref{symmetry} prove that $(\Hio^{n-1}, B, A)$ is a quasi boundary triple (in the sense of~\cite{BM14}) for the operator $(-\Delta_n^*, D_n^*)$. This allows for a complete characterization of the adjoint domain $D(\lapadjn)$ and the self-adjoint extensions of $\Delta_n$ (restricted to symmetric functions $\Hio^n$).
The following statements are consequences of the general theory \cite[Prop.~2.9,~2.10]{BM14}, but can also be concluded directly in our setting from Propositions~\ref{regprop} and~\ref{symmetry}.

For any $\lambda>0$ we have that
\be 
 D(\lapadjn)\cap \Hio^n= \Hz(\R^{3n})\cap \Hio^n \oplus K_\lambda\,,
\ee 
with $K_\lambda=\mathrm{ker}(-\lapadjn + \lambda)\cap \Hio^n$. 
The map
\be 
 B:K_\lambda \to \big(\mathrm{H}^{1/2}(\R^{3(n-1)})\cap \Hio^{n-1}\big)'\subset \mathrm{H}^{-1/2}(\R^{3(n-1)})
\ee 
is continuous, as can  easily be seen from the proof of Lemma~\ref{lem:ABcont}.
 By Proposition~\ref{regprop} it is one-to-one. It is also surjective, with inverse given, as in~\eqref{eq:sing_ex}, by
\begin{align}
f\mapsto &\; \mathrm{Sym}_n\left( \frac{\E^{-\sqrt{-\Delta+1} \vert x_n \vert}}{4\pi \vert x_n \vert} f(x_1, \dots, x_{n-1}) \right)\nonumber\\
&=\mathrm{Sym}_n \left( (-\Delta + 1)^{-1} f(x_1, \dots, x_{n-1}) \delta_0 (x_n)\right)
\,.
\end{align}
Such formulas for functions in $D(\Delta_n^*)$ have been widely used in the literature on point interactions, see e.g.~\cite{Min11}. 
An alternative rigorous proof that for $n=2$ the whole adjoint domain can be obtained in this way has been published only very recently,~\cite[Prop.\ 4]{MiOt17}.
\end{remark}

\section{Essential Self-Adjointness of $H_\mathrm{ IBC}$}\label{sec:ESA}
 \subsection{Coherent Vectors and Denseness}
The aim of this subsection is to introduce a set of coherent vectors in the domain $D_\mathrm{ IBC}$   on which we can perform many computations explicitly. A standard choice of a  dense set in Fock space is the space $\Fock_0$  containing the  vectors with a bounded number of particles, i.e., $\psi\in\Fock_0$ iff there exists $N\in\N$ such that $\psi^{(n)}=0$ for $n>N$. However, $\Fock_0\cap D_\mathrm{ IBC} = \{0\}$ since the IBC $B\psi=g\psi$ immediately yields that if $\psi^{(n)}\neq 0$, then $\psi^{(k)}\neq 0$ for all $k>n$.

For $u \in \Hio$ the associated \textit{coherent vector} $\coh(u)\in\Fock$ is defined by  
\be 
\coh(u)^{(n)} := \frac{u^{\otimes n}}{\sqrt{n!}}\,.  
\ee
It holds that $\langle \coh(v),\coh(u) \rangle_\Fock = \exp({  \langle v,u\rangle_\Hio})$; thus, the nonlinear map $\coh: \Hio\to\Fock$, $u\mapsto \coh(u)$, is continuous,
\begin{eqnarray}
\label{eq:dfsghgfh}
\| \coh(v) - \coh(u)\|^2 
&=& \langle \coh(v) , \coh(v) \rangle_\Fock + \langle \coh(u) , \coh(u) \rangle_\Fock - 2 \mathrm{Re}  \left( \langle \coh(v) , \coh(u) \rangle_\Fock \right) \nonumber 
\nonumber\\[1mm]
&
=& \E^{\| v  \|^2_\Hio} +\E^{\| u  \|_\Hio^2} - 2 \mathrm{Re}\  \E^{\langle v , u \rangle_\Hio}  \xrightarrow{v\to u} 0  \,.
\end{eqnarray}
For a subset $D \subseteq \Hio$, consider the subspace spanned by coherent vectors of elements of $D$, that is
\be
E(D) := \mathrm{span}\lbrace \coh(u) \vert u \in D \rbrace\subset \Fock\,.
\ee 
We will refer to this subspace as the \textit{coherent domain over $D$.}
When working with coherent vectors, we will need the following generalized polarization identity.
\begin{prop}\label{polar}
Let $V$ be a complex vector space and   $v_1,\ldots,v_n \in V$. Then there exist vectors $u_1,\ldots, u_m \in V$ and coefficients $d_1,\ldots, d_m \in \mathbb{C}$ such that
\be 
\label{eq:polarization}
\mathrm{ Sym}\, (v_1 \otimes \cdots \otimes v_n) 
= \sum_{k=1}^m d_k \, u_k^{\otimes n}\,.
\ee
\end{prop}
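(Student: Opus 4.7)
The plan is to use the classical polarization identity for symmetric tensors. Concretely, I claim that
\begin{equation*}
n!\,\mathrm{Sym}(v_1\otimes\cdots\otimes v_n) \;=\; \sum_{\emptyset\neq S\subseteq\{1,\ldots,n\}} (-1)^{n-|S|}\Bigl(\sum_{i\in S} v_i\Bigr)^{\otimes n},
\end{equation*}
which immediately gives the statement with $m=2^n-1$, enumerating nonempty subsets $S$ and setting $u_S:=\sum_{i\in S}v_i$ and $d_S:=(-1)^{n-|S|}/n!$. The purely algebraic nature of the claim means no topology or completion is involved; everything happens inside the $n$-fold tensor power of $V$ and only finitely many vectors appear.

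The main step is verifying this identity, which I would do by straight expansion and inclusion--exclusion. Expanding each pure tensor power yields
\begin{equation*}
\Bigl(\sum_{i\in S} v_i\Bigr)^{\otimes n} = \sum_{f\colon[n]\to S} v_{f(1)}\otimes\cdots\otimes v_{f(n)},
\end{equation*}
so that the right-hand side of the proposed identity equals
\begin{equation*}
\sum_{f\colon[n]\to[n]} \Bigl(\sum_{S\supseteq \mathrm{Im}(f)} (-1)^{n-|S|}\Bigr)\, v_{f(1)}\otimes\cdots\otimes v_{f(n)}.
\end{equation*}
The inner bracket equals $(1-1)^{n-|\mathrm{Im}(f)|}$, which vanishes unless $f$ is surjective, i.e.\ a permutation, in which case it is $1$. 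The sum therefore collapses to $\sum_{\sigma\in S_n} v_{\sigma(1)}\otimes\cdots\otimes v_{\sigma(n)} = n!\,\mathrm{Sym}(v_1\otimes\cdots\otimes v_n)$, as desired.

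There is no real obstacle here; the only substantive ingredient is the combinatorial identity $\sum_{T\subseteq[k]}(-1)^{|T|}=\delta_{k,0}$ applied to the complement of $\mathrm{Im}(f)$. If one preferred an alternative, the same conclusion can be reached by an easy induction on $n$ based on the simpler identity $2\,v\odot w = (v+w)^{\otimes 2}-v^{\otimes 2}-w^{\otimes 2}$, but the subset-sum formulation is more explicit and yields the decomposition in one step.
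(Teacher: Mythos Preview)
Your argument is correct. Both your proof and the paper's follow the same overall strategy: write down an explicit polarization identity expressing $\mathrm{Sym}(v_1\otimes\cdots\otimes v_n)$ as a linear combination of $n$-th tensor powers, then verify it by expanding the powers and showing that the contributions from non-permutation multi-indices cancel. The difference lies in the choice of formula. The paper uses the $2^n$ sign-combination vectors $\sum_k (-1)^{j_k} v_k$ with coefficients $(-1)^{\sum j_k}/(2^n n!)$, and checks cancellation by factoring out a single $\sum_{j_m}(-1)^{j_m}=0$ for an index $m$ missing from the multi-index. You use the $2^n-1$ subset-sum vectors $\sum_{i\in S} v_i$ with coefficients $(-1)^{n-|S|}/n!$, and obtain the cancellation via the inclusion--exclusion identity $\sum_{S\supseteq\mathrm{Im}(f)}(-1)^{n-|S|}=(1-1)^{n-|\mathrm{Im}(f)|}$. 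Your version is marginally more economical (one fewer vector, and the combinatorics is packaged more cleanly), but the two arguments are really the same idea in slightly different clothing.
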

See Appendix A.2 for the proof, including an explicit formula for $u_k$ and $d_k$
For a densely defined, non-self-adjoint operator $(T,D)$, we use the expression $\D \Gamma(T)$ to denote the operator which acts as $\sum_{j=1}^n \Id_{1, \dots , j-1} \otimes T \otimes \Id_{j+1, \dots ,n}$ on the $n$-th sector of Fock space. This expression obviously has meaning on $E(D)$.
\begin{prop}
\label{prop:coherentdense}
If $D \subset  \Hio$ is dense, then $E(D)$ is a dense subspace of $\Fock$.
Moreover, let $(T,D)$ be a densely defined operator on $\Hio$. Then for  $f \in \Hio$ we have
\begin{eqnarray}\label{f1}
a(f)\, \coh(u) &=& \langle f, u \rangle_{\Hio}\, \coh(u) \qquad \mbox{for all }u \in \Hio \,,
\\\label{f2}
a^*(f)\, \coh(u) &=& \frac{\D}{\D t}\bigg\vert_{t=0} \coh(u+tf) \qquad \mbox{for all }u \in \Hio  \,,
\\ \label{f3}
\D\Gamma(T)\, \coh(u) &=&   a^*(Tu) \,\coh(u) = \frac{\D}{\D t}\bigg\vert_{t=0} \coh(u+tTu) \qquad \mbox{for all }u \in D \,.
\end{eqnarray}
\end{prop}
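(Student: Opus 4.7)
The plan is to dispatch the three algebraic identities \eqref{f1}--\eqref{f3} by direct computation in the particle-position representation, and then to prove density via a duality argument combined with the continuity of $\coh$ established in \eqref{eq:dfsghgfh}. For \eqref{f1}, unpacking the definition of $a(f)$ on sector $n$ gives
\[
(a(f)\coh(u))^{(n)}(x_1,\ldots,x_n) = \frac{\sqrt{n+1}}{\sqrt{(n+1)!}}\int_{\R^3}\overline{f(x)}\,u(x)\prod_{j=1}^n u(x_j)\,\D x = \frac{\langle f,u\rangle_\Hio}{\sqrt{n!}}\,u^{\otimes n}(x_1,\ldots,x_n).
\]
For \eqref{f2}, the Leibniz rule applied to $(u+tf)^{\otimes n}$ yields at $t=0$ the symmetric tensor $\sum_{j=1}^n u^{\otimes(j-1)}\otimes f\otimes u^{\otimes(n-j)}$, which in coordinates reads $\sum_{j=1}^n f(x_j)\prod_{k\neq j}u(x_k)$; the same expression falls out of the definition of $a^*(f)$ applied to $\coh(u)$ after combining the prefactors $1/\sqrt{n}$ and $1/\sqrt{(n-1)!}$ into $1/\sqrt{n!}$. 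Both equalities in \eqref{f3} then follow: the first from $\D\Gamma(T)\coh(u)^{(n)}=\sum_j T_j u^{\otimes n}/\sqrt{n!}= \sum_j u^{\otimes(j-1)}\otimes Tu \otimes u^{\otimes(n-j)}/\sqrt{n!}$, which matches $a^*(Tu)\coh(u)^{(n)}$ by the computation just made; the second is \eqref{f2} specialized to $f=Tu$.

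For the density claim I would first prove that $E(\Hio)$ is dense in $\Fock$ and then pass to $E(D)$ by continuity. Suppose $\psi\in\Fock$ is orthogonal to every coherent vector, so that $\langle\psi,\coh(tu)\rangle_\Fock=0$ for all $t\in\C$ and $u\in\Hio$. Using $\|\coh(tu)\|=\E^{|t|^2\|u\|^2/2}$, Cauchy--Schwarz shows that the series
\[
\langle\psi,\coh(tu)\rangle_\Fock = \sum_{n=0}^\infty \frac{t^n}{\sqrt{n!}}\langle\psi^{(n)},u^{\otimes n}\rangle_{\Hio^n}
\]
converges absolutely and uniformly on compact subsets of $\C$, hence defines an entire function of $t$. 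Its vanishing forces all Taylor coefficients to vanish, giving $\langle\psi^{(n)},u^{\otimes n}\rangle_{\Hio^n}=0$ for every $n$ and every $u\in\Hio$. Applying Proposition~\ref{polar} to arbitrary $v_1,\ldots,v_n\in\Hio$ converts this into $\langle\psi^{(n)},\mathrm{Sym}(v_1\otimes\cdots\otimes v_n)\rangle_{\Hio^n}=0$, and since such symmetric tensors span $\Hio^n$, we conclude $\psi^{(n)}=0$ for all $n$. This proves density of $E(\Hio)$. Finally, for any $u\in\Hio$ we pick $u_k\in D$ with $u_k\to u$, and the continuity of $\coh$ from \eqref{eq:dfsghgfh} gives $\coh(u_k)\to\coh(u)$ in $\Fock$; hence $E(D)$ is dense in $E(\Hio)$ and therefore in $\Fock$.

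The only step with any conceptual content is the passage from scalar evaluations $\langle\psi^{(n)},u^{\otimes n}\rangle=0$ to the full vanishing of $\psi^{(n)}\in\Hio^n$, which relies critically on the polarization identity of Proposition~\ref{polar}; without it one only controls diagonal values. All remaining steps amount to bookkeeping of the combinatorial prefactors in the definitions of $a(f)$, $a^*(f)$, and $\coh$, together with the explicit growth bound $\|\coh(tu)\|=\E^{|t|^2\|u\|^2/2}$ that ensures the generating series converges entirely in $t$.
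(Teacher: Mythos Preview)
Your argument is correct. The treatment of \eqref{f1}--\eqref{f3} is the same as the paper's (it merely says these ``follow directly from the definitions''), only made explicit.

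For the density claim you take a genuinely different route. The paper shows directly that the vectors $(0,\ldots,0,u^{\otimes n},0,\ldots)$ lie in the closure of $E(\Hio)$ by differentiating $t\mapsto\coh(tu)$ $n$ times at $t=0$ in the Fock norm, then invokes Proposition~\ref{polar} and the continuity of $\coh$. You instead argue by duality: assuming $\psi\perp E(\Hio)$, you expand $t\mapsto\langle\psi,\coh(tu)\rangle_\Fock$ as a power series, observe it is entire, and read off the vanishing of all coefficients $\langle\psi^{(n)},u^{\otimes n}\rangle$ before applying Proposition~\ref{polar}. Both arguments land on the same polarization step and both finish with the continuity of $\coh$ from \eqref{eq:dfsghgfh}. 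Your route has the mild advantage that the absolute convergence of the generating series (via Cauchy--Schwarz, exactly as you wrote) is easier to justify than the existence of $n$-fold Fr\'echet derivatives of $t\mapsto\coh(tu)$ in $\Fock$; the paper's route has the advantage of being constructive, exhibiting each $u^{\otimes n}$ explicitly as a limit of elements of $E(\Hio)$.
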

\begin{proof}
For $u \in \Hio$  the map $\R\to \Fock$, $t \mapsto \coh(tu)$, has derivatives of any order at $t=0$ with
\be 
\left( \frac{\D^n}{\D t^n} \bigg\vert_{t=0} \coh(tu) \right)^{(m)} 
= \begin{cases} 0 \quad &m \neq n \\ \sqrt{n!} \ u^{\otimes n} \quad &m=n\,. \end{cases}
\ee
Thus, $E(\Hio)$ is dense in the span of all vectors of the form $(0, \dots , u^{\otimes n},  0 \dots)$.
Then, by the generalized polarization identity (Proposition~\ref{polar}) and standard approximation arguments, $E(\Hio)$ is also dense in $\Fock$. The continuity of the map $u\mapsto \coh(u)$ finally implies that $E(D)$ is dense in $E(\Hio)$ whenever $D$ is dense in $\Hio$. The formulas \eqref{f1}--\eqref{f3} follow directly from the definitions of the corresponding operators. 
\end{proof}

The natural candidate for the set $D$ is of course $\Ddelta$. However, we still need to make sure that the coherent vectors generated by $D$ satisfy the boundary condition. 
Let
\be 
 D_g^{\gamma}:=\Bigl\lbrace \ph\in \Hio \Big\vert \ph
=g f_\gamma + \phi, \ \phi \in \Hz(\R^3) \Bigr\rbrace
\ee 
for some $\gamma$ with $\mathrm{Re} \, \gamma >0$.
The affine subspace $D_g^\gamma$ is  dense in $\Hio$ because $\Hz(\R^3)$ is dense. Then, according to Proposition~\ref{prop:coherentdense}, the coherent domain $E(D_g^\gamma)$ over $D_g^\gamma$   is a dense subspace of $\Fock$; in fact, it is included in $D_\mathrm{ IBC}$:

\begin{corol}
\label{corol:mandd}
We have that $E(D_g^\gamma) \subset  D_\mathrm{ IBC}$ for the value of $g$ used in $D_\mathrm{ IBC}$ and any $\gamma\in\C$ with $\mathrm{Re}\, \gamma>0$. As a consequence,  $ D_\mathrm{ IBC}$ is dense in $\Fock$.
\end{corol}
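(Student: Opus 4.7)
My plan is to verify directly that every coherent vector $\coh(u)$ with $u = g f_\gamma + \phi \in D_g^\gamma$ (writing $u = gf_\gamma + \phi$ with $\phi \in \Hz(\R^3)$) lies in $D_\mathrm{IBC}$. Since each of the four conditions defining $D_\mathrm{IBC}$ is linear in $\psi$ (in particular the IBC $B\psi = g\psi$, because $B$ is linear), this will yield $E(D_g^\gamma) \subset D_\mathrm{IBC}$. Density of $E(D_g^\gamma)$ in $\Fock$ then follows from Proposition~\ref{prop:coherentdense}, once one observes that $D_g^\gamma = gf_\gamma + \Hz(\R^3)$ is dense in $\Hio = \Lz(\R^3)$ (because $\Hz$ is).

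For the IBC, I would use the Sobolev embedding $\Hz(\R^3) \hookrightarrow C_b(\R^3)$ to conclude that $r\phi(r\omega) \to 0$, whereas $r f_\gamma(r\omega) = -\E^{-\gamma r}/(4\pi) \to -1/(4\pi)$; hence $\lim_{r\to 0} r u(r\omega) = -g/(4\pi)$. Sector-wise this gives
\[
(B\coh(u))^{(n)} = -4\pi\sqrt{n+1}\,\Bigl(-\tfrac{g}{4\pi}\Bigr)\,\frac{u^{\otimes n}}{\sqrt{(n+1)!}} = g\,\frac{u^{\otimes n}}{\sqrt{n!}} = g\,\coh(u)^{(n)},
\]
so $B\coh(u) = g\coh(u)$. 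An analogous computation for $A$ uses $\partial_r(-g\E^{-\gamma r}) \to g\gamma$ together with $\int_{\Sp^2}\phi(r\omega)\,\D\omega \to 4\pi\phi(0)$ from the continuity of $\phi$, yielding $\lim_{r\to 0}\partial_r\, r\int_{\Sp^2}u(r\omega)\,\D\omega = g\gamma + 4\pi\phi(0)$ and therefore $A\coh(u) = \frac{g\gamma + 4\pi\phi(0)}{4\pi}\coh(u) \in \Fock$.

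For the regularity conditions and $H\coh(u)\in\Fock$, I would note that $u \in D(\lapadj)$ with $-\lapadj u = -\gamma^2 g f_\gamma - \Delta\phi \in \Lz(\R^3)$. Integrating by parts against test functions $\chi \in C_0^\infty(\Rminusn)$, whose supports avoid $\Col^n$, shows $u^{\otimes n} \in D(\lapadjn)$, with distributional Laplacian $-\lapadjn u^{\otimes n} = \sum_{j=1}^n u \otimes \cdots \otimes (-\lapadj u)_j \otimes \cdots \otimes u$, of $\Lz$-norm at most $n\|u\|^{n-1}\|\lapadj u\|$. Dividing by $\sqrt{n!}$ and summing gives $\lapadjf\coh(u)\in\Fock$; the sector norms of $\D\Gamma(E_0)\coh(u)$ and of the (proportional) $A\coh(u)$ are manifestly summable. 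Hence $H\coh(u)\in\Fock$, and together with the two previously checked conditions we conclude $\coh(u)\in D_\mathrm{IBC}$.

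The only mild obstacle is the regularity check at higher sectors: although $u^{\otimes n}$ carries $1/|x_j|$-singularities along each coincidence plane $\{x_j = 0\}$, and these overlap on $\Col^n$, the tensor-product structure combined with the adjoint-domain characterization of $\lapadjn$ through test functions supported in $\Rminusn$ is enough to certify $u^{\otimes n}\in D(\lapadjn)$ and to compute its action factor-wise, so no fine analysis at the multi-coincidence strata is required.
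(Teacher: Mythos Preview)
Your proof is correct and follows essentially the same route as the paper's: verify the four defining conditions of $D_\mathrm{IBC}$ directly on coherent vectors $\coh(u)$ with $u\in D_g^\gamma$, using the tensor-product structure to reduce everything to the one-particle operators $A,B,\lapadj$, and then invoke Proposition~\ref{prop:coherentdense} for density. The paper's proof is a bit more compressed---it simply cites Proposition~\ref{prop:asymn1} for $Bu=g$ and the boundedness of $A$ on $\Ddelta$, whereas you spell out the limits using the explicit form of $f_\gamma$ and Sobolev embedding---but the substance is the same. One small point: in your computation of $A\coh(u)$ you invoke only continuity of $\phi$ to get $\int_{\Sp^2}\phi(r\omega)\,\D\omega\to 4\pi\phi(0)$, but the definition of $A$ requires the limit of $\partial_r\bigl(r\int_{\Sp^2}\phi(r\omega)\,\D\omega\bigr)$, which needs slightly more (e.g.\ that $r\mapsto r\cdot(\text{spherical average of }\phi)$ lies in $\Hz((0,\infty))$, hence in $C^1$ near $0$); this is immediate from $\phi\in\Hz(\R^3)$ and is exactly what Proposition~\ref{prop:asymn1} encodes, so it is not a genuine gap.
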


\begin{proof} 
Let $\ph \in D_g^\gamma \subset \Ddelta$. Then obviously $\coh(\ph)^{(n)} \in D^*_n$ as in \eqref{Dn*def},  and
\be 
 (B \coh(\ph))^{(n)}= \sqrt{n+1} (B\ph)\frac{\ph^{\otimes n }}{\sqrt{(n+1)!}}
=g \frac{\ph^{\otimes n }}{\sqrt{(n)!}}=g\coh(\ph)^{(n)}\,,
\ee 
so $\coh(\ph)$ satisfies the interior--boundary condition. Additionally,
\begin{equation}
\label{eq:aeigenvalue}
 (A \coh(\ph))^{(n)} 
=\sqrt{n+1}(A \ph)\frac{\ph^{\otimes n }}{\sqrt{(n+1)!}}
=(A \ph)\coh(\ph)^{(n)}\,,
 \end{equation}
which defines an element of $\Fock$ since $A$ is bounded on $\Ddelta$ by Proposition \ref{prop:asymn1}. Observe that $(\lapadj)_{x_j}\coh(\ph)^{(n)} \in \Lz(\R^3_{x_j} , \Lz(\R^{3n-3}))$. Therefore the action of $\lapadjn$ coincides on $E(D_g^\gamma)$ with that of $\sum_{j=1}^n (-\lapadj)_{x_j}$. It is also straightforward to check that $\lapadjf \coh(\ph)\in\Fock$, and this completes the proof.
\end{proof}
 
 \subsection{Unitary Equivalence} 
\label{sec:renormalization}
To avoid unnecessary technicalities, we define the dressing transformation $\mathrm{e}^{-\mathrm{i}\Phi}$ directly for coherent states and not in terms of its generator $\Phi=a+a^*$. That is, we write $W(\ph)$ for $\mathrm{e}^{-\mathrm{i}\Phi(\mathrm{i} \ph)}$ and construct $W(\ph)$ as follows. For $\ph,u\in\Hio$, let 
\begin{align}
\label{eq:dressingdef}
W(\ph) \, \coh(u) := \E^{-\langle \ph,u\rangle_{\Hio} - \frac{\Vert \ph \Vert_{\Hio}^2}{2}} \, \coh(u+\ph)\,.
\end{align}

\begin{lemma}
\label{lemma:unitary}
For every $\ph \in \Hio$, the map $W(\ph)$ can be extended uniquely to a unitary transformation on Fock space; its inverse is given by $W(-\ph)$.
\end{lemma}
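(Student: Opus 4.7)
The plan is to verify directly that $W(\ph)$ preserves the inner product of coherent states, then extend it by linearity and density to an isometry on all of $\Fock$, and finally exhibit $W(-\ph)$ as its two-sided inverse in order to upgrade isometry to unitarity.

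First I would compute, using $\langle \coh(v),\coh(u)\rangle_\Fock = \E^{\langle v,u\rangle_\Hio}$ and the identity $\overline{\langle \ph,v\rangle_\Hio}=\langle v,\ph\rangle_\Hio$, directly from \eqref{eq:dressingdef} that
\begin{align*}
\langle W(\ph)\coh(v),W(\ph)\coh(u)\rangle_\Fock
&= \E^{-\langle v,\ph\rangle-\langle \ph,u\rangle - \|\ph\|^2}\,\E^{\langle v+\ph,u+\ph\rangle}\\
&= \E^{\langle v,u\rangle} = \langle \coh(v),\coh(u)\rangle_\Fock,
\end{align*}
so $W(\ph)$ preserves inner products between coherent states.

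Next, I would extend $W(\ph)$ linearly to the span $E(\Hio)$. The identity above yields $\|\sum_i c_i W(\ph)\coh(u_i)\|^2 = \|\sum_i c_i \coh(u_i)\|^2$ for any finite combination, which shows simultaneously that the extension is independent of the chosen representation of a vector as a linear combination of coherent states (so it is well-defined on $E(\Hio)$) and that it is an isometry. Since $E(\Hio)$ is dense in $\Fock$ by Proposition~\ref{prop:coherentdense}, $W(\ph)$ extends uniquely to an isometry on $\Fock$.

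To upgrade isometry to unitarity, a short computation directly from \eqref{eq:dressingdef} gives, for every $u \in \Hio$,
\begin{align*}
W(\ph)W(-\ph)\coh(u) &= \E^{\langle \ph,u\rangle - \|\ph\|^2/2}\,W(\ph)\coh(u-\ph)\\
&= \E^{\langle \ph,u\rangle-\langle \ph,u-\ph\rangle-\|\ph\|^2}\,\coh(u)=\coh(u),
\end{align*}
and symmetrically $W(-\ph)W(\ph)\coh(u)=\coh(u)$. By continuity these identities extend from $E(\Hio)$ to all of $\Fock$, showing that $W(\ph)$ is surjective with $W(-\ph)$ as its (automatically unique) inverse; hence $W(\ph)$ is unitary. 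No step presents a serious obstacle here; the only subtlety is that, because coherent states are not linearly independent in $\Fock$, the well-definedness of the linear extension has to be deduced from the isometry identity rather than simply assumed.
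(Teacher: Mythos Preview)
Your proof is correct and supplies precisely the elementary argument that the paper omits (the paper itself gives no proof and simply refers the reader to \cite[Sec.~IV.1.9]{Mey93}). One minor inaccuracy worth noting: the coherent vectors $\{\coh(u):u\in\Hio\}$ are in fact linearly independent in $\Fock$, so the caution in your final sentence is unnecessary---but your isometry-based argument for well-definedness of the linear extension is valid regardless and does not depend on this point.
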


See, e.g., Section IV.1.9 in \cite{Mey93} for the rather elementary proof.

\begin{prop}
\label{prop:esasecquant}
Let $(T,D)$ be a self-adjoint operator on $\Hio$. Then its 
 second quantization $\D\Gamma(T)$  is essentially self-adjoint on the coherent domain $E(D)$.
\end{prop}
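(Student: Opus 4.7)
The strategy is to compare $E(D)$ with a well-known core for $\D\Gamma(T)$. When $T$ is self-adjoint on $D$, a standard sector-wise argument shows that $\D\Gamma(T)$ is self-adjoint on its natural domain and that the algebraic finite-particle space $\bigoplus_n^{\mathrm{alg}} \mathrm{Sym}(D^{\otimes n})$ is a core. So essential self-adjointness of $\D\Gamma(T)|_{E(D)}$ reduces to showing that every finite-particle vector with components in $\mathrm{Sym}(D^{\otimes n})$ lies in the graph-norm closure of $E(D)$. By Proposition~\ref{polar}, it is enough to treat pure tensors $\pi_n(u) := (0,\ldots,0,u^{\otimes n},0,\ldots)$ with $u \in D$ and $n \in \N$.

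Fix such a $u$. The $\Fock$-valued map $F(z) := \coh(zu)$ is entire in $z \in \C$, since its $m$-th sector $z^m u^{\otimes m}/\sqrt{m!}$ is a monomial in $z$. Because $D$ is a complex-linear subspace of $\Hio$, $zu \in D$ for all $z$, so $F(z) \in E(D)$ for every $z$, and $G(z) := \D\Gamma(T)F(z) = z\,a^*(Tu)\coh(zu)$ is well defined by~\eqref{f3}; writing out sectors shows that $G(z)^{(n)}$ is again a polynomial in $z$, so $G$ is entire as well. Cauchy's formula applied componentwise then gives
\begin{equation*}
 \pi_n(u) = \frac{\sqrt{n!}}{2\pi\I}\oint_{|z|=1} \frac{F(z)}{z^{n+1}}\,\D z, \qquad \D\Gamma(T)\pi_n(u) = \frac{\sqrt{n!}}{2\pi\I}\oint_{|z|=1} \frac{G(z)}{z^{n+1}}\,\D z.
\end{equation*}

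Approximating these two integrals by a common sequence of Riemann sums along $|z|=1$ produces vectors $\psi_k := \sum_j c_j^{(k)} \coh(z_j^{(k)} u) \in E(D)$ with $\psi_k \to \pi_n(u)$ and $\D\Gamma(T)\psi_k \to \D\Gamma(T)\pi_n(u)$ in $\Fock$-norm, thanks to uniform continuity of $F$ and $G$ on the compact contour. Hence $\pi_n(u)$ lies in the graph-norm closure of $E(D)$, and polarization extends this to the whole algebraic finite-particle space, which is already a core; essential self-adjointness follows. The one point requiring care is the use of complex scalars $z$, which relies on $D$ being a complex subspace of $\Hio$; if preferred, the Cauchy integral can be replaced throughout by real finite-difference quotients $h^{-n}\sum_{k=0}^n (-1)^{n-k}\binom{n}{k} F(kh)$ to reach the same conclusion.
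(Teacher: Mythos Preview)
Your proof is correct but takes a genuinely different route from the paper. The paper argues in essentially one line via Nelson's invariant domain theorem (Reed--Simon~I, Theorem~VIII.11): since $\Gamma(\E^{-\I Tt})\coh(u)=\coh(\E^{-\I Tt}u)$ and the domain $D$ of a self-adjoint operator is invariant under its own unitary group, the dense subspace $E(D)\subset D(\D\Gamma(T))$ is invariant under $\Gamma(\E^{-\I Tt})$, which immediately yields essential self-adjointness. Your approach instead shows that the graph-norm closure of $E(D)$ contains the algebraic finite-particle space $\bigoplus_n^{\mathrm{alg}}\mathrm{Sym}(D^{\otimes n})$, by extracting $\pi_n(u)$ from the entire map $z\mapsto\coh(zu)$ via Cauchy's formula and then passing to Riemann sums. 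This is more constructive and avoids invoking the invariant-domain theorem directly, but it is longer and relies on the fact that the algebraic finite-particle space is already a core for $\D\Gamma(T)$---a fact which is itself most cleanly established by an invariance or analytic-vector argument of exactly the kind the paper uses. So the paper's route is shorter and more self-contained; yours gives an explicit approximation of finite-particle vectors by coherent ones, which can be useful in its own right.
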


\begin{proof}The coherent domain $E(D)$ is a subspace of $D(\D\Gamma(T))$ and the associated unitary group of $\D\Gamma(T)$ is given by $\Gamma(\E^{-\I T t})$. Since its action on coherent vectors is extremely simple, $\Gamma(\E^{-\I T t}) \coh(u) = \coh(\E^{-\I T t} u)$, the coherent domain over $D$ is  \textit{invariant} under $\Gamma(\E^{-\I T t})$ because $D$ is. Now the statement follows from Nelson's invariant domain theorem \cite[Thm.~VIII.11]{ReSi1}.
\end{proof}

\begin{lemma}
\label{lemma:pulltrough}
Let $(T,D)$ be a densely defined operator on $\Hio$. Suppose that $\ph , u \in D$, and let $W(\ph)$ be the corresponding unitary dressing transformation defined by \eqref{eq:dressingdef}. Then
\be 
W(-\ph) \D\Gamma(T) W(\ph) \big \vert_{E(D)}
= \D\Gamma(T)+ a^*(T\ph) + a(T\ph) + G(T,\ph) \ \big \vert_{E(D)}  \, ,
\ee
where $G(T,\ph)$ is an operator on $E(D)$ whose action is given by 
\be G(T,\ph) \coh(u) = \left(\langle \ph, Tu \rangle_{\Hio}-\langle T\ph, u \rangle_{\Hio} +\langle \ph, T\ph \rangle_{\Hio} \right) \coh(u) \, . \ee
\end{lemma}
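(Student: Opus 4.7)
Since both sides are linear operators on $E(D)$, it suffices to verify the identity on a single coherent vector $\coh(u)$ with $u\in D$. The strategy is a direct three-step computation: apply $W(\ph)$ using the definition \eqref{eq:dressingdef}, then $\D\Gamma(T)$ using the eigenvector-type formula \eqref{f3} from Proposition~\ref{prop:coherentdense}, then $W(-\ph)$ via a Weyl-type commutation relation. All exponential prefactors will cancel exactly, and what remains is the claimed expression.

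\textbf{Key steps.} First, since $D$ is a linear subspace (being the domain of $T$) and $\ph\in D$, we have $u+\ph\in D$, so
\[
W(\ph)\coh(u)=\E^{-\langle\ph,u\rangle-\|\ph\|^2/2}\coh(u+\ph),
\]
and by \eqref{f3} together with linearity of $a^*$,
\[
\D\Gamma(T)\coh(u+\ph)=a^*(T(u+\ph))\coh(u+\ph)=a^*(Tu)\coh(u+\ph)+a^*(T\ph)\coh(u+\ph).
\]
Second, I would establish the commutation relations
\[
W(-\ph)\,a^*(f)\,W(\ph)=a^*(f)+\langle\ph,f\rangle_{\Hio},\qquad W(-\ph)\,a(f)\,W(\ph)=a(f)+\langle f,\ph\rangle_{\Hio},
\]
valid on $E(\Hio)$ for any $f\in\Hio$. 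The second is a one-line computation on $\coh(u)$ using \eqref{f1} and the definition of $W$; the first then follows by taking adjoints (or by an analogous derivative calculation via \eqref{f2}). Applying the first identity to each of the two terms above converts $W(-\ph)\,a^*(Tu)\coh(u+\ph)$ into $\bigl(a^*(Tu)+\langle\ph,Tu\rangle\bigr)W(-\ph)\coh(u+\ph)$, and analogously for $a^*(T\ph)$.

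\textbf{Combining.} The product of the prefactors from $W(\ph)\coh(u)$ and $W(-\ph)\coh(u+\ph)$ is
\[
\E^{-\langle\ph,u\rangle-\|\ph\|^2/2}\cdot\E^{\langle\ph,u+\ph\rangle-\|\ph\|^2/2}=1,
\]
so I arrive at
\[
W(-\ph)\D\Gamma(T)W(\ph)\coh(u)=\bigl(a^*(Tu)+a^*(T\ph)+\langle\ph,Tu\rangle+\langle\ph,T\ph\rangle\bigr)\coh(u).
\]
On the other hand, using $\D\Gamma(T)\coh(u)=a^*(Tu)\coh(u)$ from \eqref{f3} and $a(T\ph)\coh(u)=\langle T\ph,u\rangle\coh(u)$ from \eqref{f1}, the right-hand side of the claim equals $\bigl(a^*(Tu)+a^*(T\ph)+\langle T\ph,u\rangle+\langle\ph,Tu\rangle-\langle T\ph,u\rangle+\langle\ph,T\ph\rangle\bigr)\coh(u)$, which matches.

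\textbf{Main obstacle.} There is no deep difficulty; the whole argument is a bookkeeping exercise on coherent vectors. The only delicate point is that $T$ is not assumed symmetric, so $\langle\ph,Tu\rangle$ and $\langle T\ph,u\rangle$ must be kept distinct; the definition of $G(T,\ph)$ is precisely arranged so that the $-\langle T\ph,u\rangle$ term cancels the scalar produced by the pure annihilation operator $a(T\ph)\coh(u)$, leaving the asymmetric contribution $\langle\ph,Tu\rangle-\langle T\ph,u\rangle$ that emerges from conjugating $\D\Gamma(T)$ by $W(\ph)$.
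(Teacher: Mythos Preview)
Your proposal is correct and follows essentially the same computation as the paper. The paper's proof applies $W(-\ph)$ directly inside the derivative from \eqref{f3}, computing $W(-\ph)\D\Gamma(T)W(\ph)\coh(u)=\tfrac{\D}{\D t}\big|_{t=0}\E^{t\langle\ph,T(u+\ph)\rangle}\coh(u+tT(u+\ph))$ in one pass, whereas you isolate the Weyl relation $W(-\ph)a^*(f)W(\ph)=a^*(f)+\langle\ph,f\rangle$ as an intermediate step; but this relation is proved by exactly the same derivative manipulation, so the two arguments are reorganizations of each other. One small remark: the ``taking adjoints'' justification for the creation-operator commutation relation is formally correct but needs a domain argument since $a^*(f)$ is unbounded; your alternative via \eqref{f2} is the clean route and is precisely what the paper's inline computation does.
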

\begin{proof}
This is a consequence of  Proposition~\ref{prop:coherentdense} and the following straightforward computation:
\begin{align}
&W(-\ph)\D\Gamma(T)W(\ph)\coh(u) \\
 &\stackrel{\eqref{f3}}{=} W(-\ph)\tfrac{\D}{\D t}\big\vert_{t=0} \coh(u+\ph +t T(u+\ph)) \E^{-\langle \ph, u \rangle - \tfrac{\Vert \ph \Vert}{2}}\nonumber\\
 &\stackrel{\eqref{eq:dressingdef}}{=}\tfrac{\D}{\D t}\big\vert_{t=0}\coh(u +t T(u+\ph)) \E^{t \langle \ph, T(u+\ph)  \rangle}\nonumber\\
 &\stackrel{\eqref{f2}}{=}  \left(a^*(T(u+\ph)) + \langle \ph, Tu \rangle_{\Hio} + \langle \ph, T\ph \rangle_{\Hio} \right) \coh(u)\nonumber\\
 &\stackrel{\eqref{f3}}{=} \left(\D\Gamma(T)+ a^*(T\ph) + \langle \ph, Tu \rangle_{\Hio} + \langle \ph, T\ph \rangle_{\Hio} \right) \coh(u) \nonumber\\
 &\stackrel{\eqref{f1}}{=} \left(\D\Gamma(T)+ a^*(T\ph) + a(T\ph) + \langle \ph, Tu \rangle_{\Hio} - \langle T \ph , u \rangle_{\Hio} + \langle \ph, T\ph \rangle_{\Hio} \right) \coh(u)\,. \qedhere\nonumber
\end{align}
\end{proof}
\begin{corol}
\label{corol:ifinvertible}
Let $(T,D)$ be a self-adjoint operator on $\Hio$ which is invertible, i.e. $0 \in \rho(T)$. Then for $\psi \in \Hio$ and $u \in D$ it holds that
\be W(- T^{-1} \psi) \D \Gamma(T) W(T^{-1} \psi) \big \vert_{E(D)} = \D \Gamma(T) + a^*(\psi) + a(\psi) + \langle \psi, T^{-1} \psi \rangle_\Hio \Id_\Fock \ \big \vert_{E(D)} \ee
\end{corol}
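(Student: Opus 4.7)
The plan is to obtain this statement as an essentially immediate specialization of Lemma~\ref{lemma:pulltrough}, taking $\ph := T^{-1}\psi$. I first need to verify that this choice of $\ph$ lies in the domain $D$: since $0\in \rho(T)$ and $T$ is self-adjoint, $T^{-1}$ is a bounded everywhere-defined operator on $\Hio$ with $\mathrm{ran}(T^{-1})=D(T)=D$, so indeed $T^{-1}\psi \in D$ for every $\psi\in\Hio$. This also means that $W(\pm T^{-1}\psi)$ preserves $E(D)$, since $W(\ph)\coh(u)$ is a scalar multiple of $\coh(u+\ph)$ and $u+\ph\in D$ whenever $u,\ph\in D$, so the compositions in the statement are well defined on $E(D)$.

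With $\ph=T^{-1}\psi$, the identity $T\ph=\psi$ turns the two linear terms in Lemma~\ref{lemma:pulltrough} into $a^*(\psi)+a(\psi)$, as required. It remains to identify the scalar part $G(T,\ph)$. Using the self-adjointness of $T$, and hence the symmetry of $T$ on $D$, I would compute
\begin{equation*}
\langle \ph, Tu\rangle_{\Hio} \;=\; \langle T\ph, u\rangle_{\Hio} \;=\; \langle \psi, u\rangle_{\Hio},
\end{equation*}
so the first two terms of $G$ cancel, and
\begin{equation*}
\langle \ph, T\ph\rangle_{\Hio} \;=\; \langle T^{-1}\psi, \psi\rangle_{\Hio} \;=\; \langle \psi, T^{-1}\psi\rangle_{\Hio},
\end{equation*}
where in the last equality I use that $T^{-1}$ is self-adjoint (since $T$ is). Hence $G(T,T^{-1}\psi)$ acts on every $\coh(u)\in E(D)$ as multiplication by the constant $\langle \psi, T^{-1}\psi\rangle_{\Hio}$, which extends by linearity to the scalar operator $\langle \psi, T^{-1}\psi\rangle_{\Hio}\,\Id_\Fock$ on $E(D)$.

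There is no real obstacle here; the only mild subtlety is ensuring that all uses of self-adjointness of $T$ (and consequently of $T^{-1}$) are legitimate when pairing $\ph=T^{-1}\psi$ with vectors $u\in D$, but this is immediate from $\ph\in D$. Combining these computations with Lemma~\ref{lemma:pulltrough} directly yields the claimed identity on $E(D)$.
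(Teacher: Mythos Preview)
Your proof is correct and follows essentially the same approach as the paper: set $\ph = T^{-1}\psi$ in Lemma~\ref{lemma:pulltrough}, use symmetry of $T$ to cancel the first two terms of $G(T,\ph)$, and use self-adjointness of $T^{-1}$ to rewrite the remaining constant. Your additional verification that $T^{-1}\psi\in D$ and that $W(\pm T^{-1}\psi)$ preserves $E(D)$ is a welcome bit of extra care that the paper omits.
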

\begin{proof}
Apply Lemma \ref{lemma:pulltrough} with $\ph = T^{-1} \psi$ and observe that, because $T$ is symmetric, it holds that $\langle \ph, T u \rangle_{\Hio} - \langle T \ph,  u \rangle_{\Hio} = 0$. So the operator $G(T, \ph)$ reduces to multiplication with the constant $\langle T^{-1} \psi,  \psi \rangle_\Hio = \langle \psi, T^{-1} \psi \rangle_\Hio$.
\end{proof}
\begin{corol}
\label{corol:allgemein}
Let $E_0 \in \R$, $\gamma>0$, $f_\gamma$ be given by \eqref{eq:fgamma} and let $h=-\Delta +E_0$ with domain $\Hz(\R^3)$. Then on the coherent domain $E(\Hz(\R^3))$ we have
\begin{align}\nonumber
&W(- g f_\gamma) \,  H_\mathrm{ IBC}  \, W(g f_\gamma) \big \vert_{E(\Hz(\R^3))} \\
&= \D\Gamma(h)+ (-\gamma^2 + E_0) \left(a^*(g f_\gamma) + a(g f_\gamma)\right) + C(g,\gamma,E_0) \Id_{\Fock}  \ \big \vert_{E(\Hz(\R^3))}
\end{align}
where the constant reads
\be C(g,\gamma,E_0) = (-\gamma^2 + E_0) \Vert g f_\gamma  \Vert^2_{\Hio}  + g^2 \frac{\gamma}{4 \pi} \, . \ee
\end{corol}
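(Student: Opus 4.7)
The plan is to evaluate both sides on a generic coherent vector $\coh(u)$ with $u \in \Hz(\R^3)$, which by linearity of $W(\pm gf_\gamma)$ suffices. For such $u$, the shift $v := u + gf_\gamma$ lies in $D_g^\gamma \subset \Ddelta$, so $W(gf_\gamma)\coh(u)$ is a scalar multiple of $\coh(v)\in E(D_g^\gamma) \subset D_\mathrm{IBC}$ by Corollary~\ref{corol:mandd}; in particular both sides of the asserted equation are well defined on $\coh(u)$.

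The strategy is then to decompose $H_\mathrm{IBC} = \D\Gamma(-\lapadj + E_0) + gA$, with the first summand read as the formal second-quantization of the closed, non-symmetric operator $-\lapadj + E_0$ on $\Ddelta$, whose action on vectors in $E(D_g^\gamma)$ agrees sector-wise with that of $-\lapadjf + \D\Gamma(E_0)$ by the last observation in the proof of Corollary~\ref{corol:mandd}. Applying Lemma~\ref{lemma:pulltrough} with $T := -\lapadj + E_0$ and $\ph := gf_\gamma$, and using $\lapadj f_\gamma = \gamma^2 f_\gamma$ so that $T\ph = g(E_0 - \gamma^2)f_\gamma$, yields the creation and annihilation terms $(E_0 - \gamma^2)\bigl(a^*(gf_\gamma) + a(gf_\gamma)\bigr)$ together with the scalar
\[
G(T,gf_\gamma)\coh(u) = \bigl[g\langle f_\gamma, hu\rangle - g(E_0 - \gamma^2)\langle f_\gamma, u\rangle + g^2(E_0 - \gamma^2)\|f_\gamma\|^2\bigr]\coh(u).
\]
For the $gA$-summand I would compute directly: by~\eqref{eq:aeigenvalue}, together with $Au = u(0)$ for $u \in \Hz$ and $Af_\gamma = \gamma/(4\pi)$, one has $A\coh(v) = (u(0) + g\gamma/(4\pi))\coh(v)$, and the scalar prefactors from $W(\pm gf_\gamma)$ acting on coherent states cancel, giving
\[
W(-gf_\gamma)\, gA\, W(gf_\gamma)\coh(u) = \bigl[gu(0) + g^2\gamma/(4\pi)\bigr]\coh(u).
\]

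The crucial step, and what I expect to be the main obstacle, is evaluating $\langle f_\gamma, hu\rangle$. This rests on the distributional Green's-function identity $(-\Delta + \gamma^2)f_\gamma = -\delta_0$, which one justifies via a cutoff at $|x|\geq \epsilon$, integration by parts on that domain, and passage to the limit, legitimate because $\Hz(\R^3)\hookrightarrow C^0(\R^3)$ in three dimensions. It gives
\[
\langle f_\gamma, hu\rangle = -u(0) + (E_0-\gamma^2)\langle f_\gamma, u\rangle.
\]
Substituting this into $G(T, gf_\gamma)$ collapses it to $-gu(0) + g^2(E_0 - \gamma^2)\|f_\gamma\|^2$, so the $-gu(0)$ cancels exactly against the $+gu(0)$ produced by the $A$-piece. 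Assembling the surviving contributions produces $\D\Gamma(h)\coh(u) = \D\Gamma(-\lapadj + E_0)\coh(u)$ (since $-\lapadj = -\Delta$ on $\Hz$), the stated creation/annihilation terms, and the constant $g^2(E_0 - \gamma^2)\|f_\gamma\|^2 + g^2\gamma/(4\pi) = C(g, \gamma, E_0)$. The principal subtlety is precisely this cancellation between the defect of the non-symmetric $\lapadj$ on $f_\gamma$ and the boundary-value contribution from $A$; this is the algebraic shadow of the way the interior-boundary condition is designed to absorb the formal pointwise coupling $g(a(\delta) + a^*(\delta))$ of \eqref{Hdelta}.
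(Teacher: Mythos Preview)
Your proof is correct and follows essentially the same route as the paper's: both apply Lemma~\ref{lemma:pulltrough} with $T=-\lapadj+E_0$ and $\ph=gf_\gamma$, compute the $gA$-contribution via~\eqref{eq:aeigenvalue}, and then verify the crucial cancellation of the $gu(0)$ term. The only difference is that where you evaluate $\langle f_\gamma, hu\rangle$ by a direct Green's-function/cutoff argument, the paper invokes Proposition~\ref{prop:asymn1} (using $Bu=0$, $Bf_\gamma=1$, $Au=u(0)$) to obtain $\langle f_\gamma,-\lapadj u\rangle-\langle-\lapadj f_\gamma,u\rangle=-Au$; this is the same integration by parts packaged abstractly.
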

\begin{proof} We start by noting that~\eqref{eq:aeigenvalue} gives for $u\in \Hz(\R^3)$
\begin{equation}\label{AW}
 gA W(gf_\gamma)\coh(u)
 =
 g(A(gf_\gamma+u))  W(gf_\gamma) \coh(u)
 =\left(\frac{g^2 \gamma}{4\pi} + gu(0)\right)W(gf_\gamma)\coh(u)\,.
\end{equation}
Now set $(T,D)=(-\lapadj + E_0,\Ddelta)$ and $\ph = g f_\gamma$ in Lemma \ref{lemma:pulltrough}. 
Then
\begin{align}
&W(- g f_\gamma) \, H_\mathrm{ IBC}  \, W(g f_\gamma) \coh(u)  \\\nonumber
&= W(- g f_\gamma) \D\Gamma(-\lapadj + E_0) W(g f_\gamma) \coh(u) +   \left(\frac{g^2 \gamma}{4\pi} + gu(0)\right) \coh(u)\\\nonumber
&= \left(\D\Gamma(h) +  ( E_0-\gamma^2 ) \left(a^*(g f_\gamma) + a(g f_\gamma)\right)  + G(T,\ph)  +   \left(\frac{g^2 \gamma}{4\pi} + gu(0)\right) \right)\coh(u)\,.
\end{align}
It remains to show that  for $u \in \Hz(\R^3)$
\be
\left(G(T,\ph)  +  \frac{g^2 \gamma}{4\pi} + gu(0)  \right)\coh(u) = C(g,\gamma,E_0)\coh(u)\,.
\ee
It follows from Proposition~\ref{prop:asymn1} that
\begin{eqnarray}
G(T,\ph) + gu(0) &=& 
g \langle  f_\gamma , T u \rangle - g\langle T (  f_\gamma) , u \rangle + g^2 \langle f_\gamma, T f_\gamma\rangle + gAu \nonumber\\
&=&  g \langle   f_\gamma , -\lapadj  u \rangle - g \langle -\lapadj  f_\gamma , u \rangle +  gA u +    (-\gamma^2 + E_0) \Vert g f_\gamma  \Vert^2_{\Hio} \nonumber\\ 
 &=&  g A f_\gamma B u   - g B f_\gamma A u + g  A u +    (-\gamma^2 + E_0) \Vert g f_\gamma  \Vert^2_{\Hio}\nonumber  \\&=&(-\gamma^2 + E_0) \Vert g f_\gamma  \Vert^2_{\Hio}\,,
\end{eqnarray}
 since $B u =0$ and $Bf_\gamma=1$.
\end{proof}
\begin{prop} 
\label{prop:moregeneral}
For all $E_0\in \R$  the operator  $ (H_\mathrm{ IBC},D_\mathrm{ IBC})$ is essentially self-adjoint and for any $\gamma >0$ the space  $W(g f_\gamma) E(\Hz(\R^3))\subset D_\mathrm{ IBC}$ is a core. If $E_0 \geq 0$, then the Hamiltonian $H_\mathrm{ IBC}$ is bounded from below.
\end{prop}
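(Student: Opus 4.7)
The plan is to reduce $H_\mathrm{IBC}$ on the candidate core $W(gf_\gamma)E(\Hz(\R^3))$ through two successive dressing transformations to the second quantization $\D\Gamma(h)$ plus a constant, whose essential self-adjointness on the coherent domain is immediate from Proposition~\ref{prop:esasecquant}.

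First I would verify that $W(gf_\gamma)E(\Hz(\R^3))\subset E(D_g^\gamma)\subset D_\mathrm{IBC}$; this follows from the identity $W(gf_\gamma)\coh(u)\propto \coh(u+gf_\gamma)$ together with Corollary~\ref{corol:mandd}. Corollary~\ref{corol:allgemein} then yields on $E(\Hz(\R^3))$ the unitary equivalence
\[
\tilde H_\gamma := W(-gf_\gamma)\,H_\mathrm{IBC}\,W(gf_\gamma) = \D\Gamma(h) + c_\gamma\bigl(a^*(gf_\gamma)+a(gf_\gamma)\bigr) + C_\gamma\,\Id_\Fock,
\]
with $c_\gamma = E_0-\gamma^2$. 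When $E_0>0$, the operator $h\geq E_0>0$ is a bijection $\Hz(\R^3)\to\Hio$, so Corollary~\ref{corol:ifinvertible} applied with $T=h$ and $\psi = c_\gamma gf_\gamma$ (noting $h^{-1}\psi\in\Hz(\R^3)$) produces a second unitary $W(h^{-1}\psi)$, yielding
\[
\tilde H_\gamma = W(-h^{-1}\psi)\,\D\Gamma(h)\,W(h^{-1}\psi) + \bigl(C_\gamma-\langle\psi,h^{-1}\psi\rangle\bigr)\Id_\Fock
\]
on $E(\Hz(\R^3))$. Since $W(\pm h^{-1}\psi)$ maps $E(\Hz(\R^3))$ onto itself and $\D\Gamma(h)$ is essentially self-adjoint there by Proposition~\ref{prop:esasecquant}, so is $\tilde H_\gamma$. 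Pulling back through $W(gf_\gamma)$ transfers this property to $H_\mathrm{IBC}$ on $W(gf_\gamma)E(\Hz(\R^3))$.

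The extension to the full domain $D_\mathrm{IBC}$ is then purely abstract: by Corollary~\ref{corol:focksymmetry}, $(H_\mathrm{IBC},D_\mathrm{IBC})$ is symmetric and contains the essentially self-adjoint restriction above, so both closures coincide since a self-adjoint operator admits no proper symmetric extension. Boundedness from below for $E_0>0$ reads off the final form, as $\D\Gamma(h)\geq 0$ gives $H_\mathrm{IBC}\geq C_\gamma-\langle\psi,h^{-1}\psi\rangle$; the $E_0=0$ case follows by taking the limit $E_0\searrow 0$ in this estimate, since all quantities involved remain finite. The main obstacle is the range $E_0\leq 0$, where $h$ is no longer invertible, $c_\gamma gf_\gamma$ lies outside the range of $h$, and the clean two-step dressing reduction fails. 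I would address this by a rest-energy shift: noting that $\tilde H_\gamma+KN$ is again a van Hove Hamiltonian but now with the invertible free part $\D\Gamma(h+K)$ for $K>-E_0$, the positive-energy argument applies to $\tilde H_\gamma+KN$, and essential self-adjointness of $\tilde H_\gamma$ itself is then recovered by an invariant-coherent-domain argument, using Lemma~\ref{lemma:pulltrough} to control the fact that $N$ does not commute with the dressing transformations.
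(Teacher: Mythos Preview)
Your argument for $E_0>0$ is correct and in fact cleaner than the paper's: instead of invoking Kato--Rellich for the field perturbation, you perform a second dressing $W(h^{-1}\psi)$ with $\psi=c_\gamma g f_\gamma$, reducing $\tilde H_\gamma$ all the way to $\D\Gamma(h)+\text{const}$ on $E(\Hz(\R^3))$. Since $h^{-1}\psi\in\Hz(\R^3)$, this dressing preserves the coherent domain, and Proposition~\ref{prop:esasecquant} finishes the job. The paper stops after the first dressing and shows that $a^*(gf_\gamma)+a(gf_\gamma)$ is infinitesimally $\D\Gamma(h)$-bounded; your route avoids this estimate and produces the explicit lower bound as a by-product.

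The genuine gap is the range $E_0\leq 0$. Your proposed fix---shift by $KN$ to obtain a van~Hove Hamiltonian with invertible free part $\D\Gamma(h+K)$, then recover $\tilde H_\gamma$ by an ``invariant-coherent-domain argument''---does not close. Essential self-adjointness of $\tilde H_\gamma+KN$ on $E(\Hz)$ does not transfer to $\tilde H_\gamma$: the perturbation $KN$ has relative operator bound $K/(E_0+K)\geq 1$ with respect to (the closure of) $\tilde H_\gamma+KN$ whenever $E_0\leq 0$, so Kato--Rellich is unavailable; and if you unwind the second dressing using Lemma~\ref{lemma:pulltrough} you land back on a van~Hove Hamiltonian with free part $\D\Gamma(h)$ and a new $\Lz$ form factor, i.e.\ exactly the problem you started from. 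No invariant-domain statement is in sight either, since you do not yet know that $\tilde H_\gamma$ generates a group. The paper treats the two sub-cases with different tools: for $E_0=0$ the estimate $\langle f_\gamma, h^{-1} f_\gamma\rangle<\infty$ still holds (the Fourier integrand $|k|^{-2}(|k|^2+\gamma^2)^{-2}$ is integrable near $k=0$ in three dimensions), so the field perturbation remains infinitesimally $\D\Gamma(h)$-bounded and Kato--Rellich applies directly; for $E_0<0$ this fails, and the paper appeals instead to Nelson's Commutator Theorem with comparison operator $\Id+\D\Gamma(h-E_0+1)$. Your sketch mentions neither mechanism.

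A minor point: the limit argument for the lower bound at $E_0=0$ is salvageable but not as written. You must fix a vector $\psi$ in the common core $W(gf_\gamma)E(\Hz)$, use $H_\mathrm{IBC}^{E_0}=H_\mathrm{IBC}^{0}+E_0 N$ there, and let $E_0\searrow 0$ in $\langle\psi,H_\mathrm{IBC}^{0}\psi\rangle\geq c(E_0)\|\psi\|^2-E_0\langle\psi,N\psi\rangle$; the point is that $\langle\psi,N\psi\rangle<\infty$ for coherent vectors and $c(E_0)$ has a finite limit. Simply ``taking the limit in the estimate'' glosses over the fact that the operators themselves depend on $E_0$.
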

\begin{proof}
According to Corollary \ref{corol:allgemein} and by symmetry of $ (H_\mathrm{ IBC},D_\mathrm{ IBC})$ it suffices to show that 
\begin{equation}\label{transham}
 \D\Gamma(h)+ (-\gamma^2 + E_0) \left(a^*(g f_\gamma) + a(g f_\gamma)\right) 
 \end{equation}
is essentially self-adjoint on $E(\Hz(\R^3))$. By Proposition \ref{prop:esasecquant}, the operator \\ $(\D\Gamma(h), E(\Hz(\R^3)))$ is essentially self-adjoint. 

For $E_0\geq0$ the perturbation $a^*(g f_\gamma) + a(g f_\gamma)$ is infinitesimally bounded with respect to $\D\Gamma(h)$ (see Proposition 3.8 in \cite{Der03}) and thus, by Kato-Rellich, essential self-adjointness of \eqref{transham} on $E(\Hz(\R^3))$ holds. Here one uses the fact that
\begin{align}
\label{eq:fourieroff}
\hat{f_\gamma}(k)  = - (2 \pi)^{-\frac{3}{2} } (\vert k \vert^2 + \gamma^2)^{-1} = - \hat{\delta}(k) \cdot (\vert k \vert^2 + \gamma^2)^{-1} \qquad \mathrm{Re}(\gamma) > 0\,,
\end{align}
and therefore $ \langle \hat{f}_\gamma, \hat{h}^{-1} \hat{f}_\gamma  \rangle < \infty$  even for $E_0=0$.

If $E_0 < 0$, essential self-adjointness of \eqref{transham}  is shown using  Nelson's Commutator Theorem (Theorem X.36 in \cite{ReSi2}) with comparison operator $N = \Id_\Fock + \D \Gamma(h - E_0 + 1)$, cf.\ Proposition 3.11 in
\cite{Der03}.
\end{proof}

\begin{prop}
\label{prop:haupt}
If $E_0>0$, then  the operator  $ (H_\mathrm{ IBC},D_\mathrm{ IBC})$ is self-adjoint and  
\begin{equation}\label{euqality1}
H_\mathrm{ IBC} =  W(gf_{\sqrt{E_0}})\left[\D\Gamma( h) + \tfrac{g^2 \sqrt{E_0}}{4 \pi} \right] W(-gf_{\sqrt{E_0}})  \, . \end{equation}
\end{prop}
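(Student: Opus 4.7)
The plan is to specialize Corollary~\ref{corol:allgemein} to $\gamma = \sqrt{E_0}$. At this distinguished value $-\gamma^2+E_0=0$, so the cross term $(-\gamma^2+E_0)(a^*(gf_\gamma)+a(gf_\gamma))$ vanishes, and the corollary collapses on the coherent core $E(\Hz(\R^3))$ to
\begin{equation*}
W(-gf_{\sqrt{E_0}})\,H_\mathrm{IBC}\,W(gf_{\sqrt{E_0}}) \;=\; \D\Gamma(h)+\tfrac{g^2\sqrt{E_0}}{4\pi}\Id_\Fock .
\end{equation*}
By Proposition~\ref{prop:esasecquant} the right-hand side is essentially self-adjoint on $E(\Hz(\R^3))$, and its closure is the self-adjoint operator $\D\Gamma(h)+\tfrac{g^2\sqrt{E_0}}{4\pi}\Id_\Fock$ on the standard domain $D(\D\Gamma(h))$. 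Conjugating by the unitary $W(gf_{\sqrt{E_0}})$ therefore produces a self-adjoint operator on $W(gf_{\sqrt{E_0}}) D(\D\Gamma(h))$ that extends $H_\mathrm{IBC}$ on the core $W(gf_{\sqrt{E_0}}) E(\Hz(\R^3))\subset D_\mathrm{IBC}$ identified in Proposition~\ref{prop:moregeneral}.

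Since $H_\mathrm{IBC}$ on $D_\mathrm{IBC}$ is itself essentially self-adjoint, uniqueness of the self-adjoint extension forces the equality of closures
\begin{equation*}
\overline{H_\mathrm{IBC}} \;=\; W(gf_{\sqrt{E_0}})\bigl[\D\Gamma(h)+\tfrac{g^2\sqrt{E_0}}{4\pi}\Id_\Fock\bigr] W(-gf_{\sqrt{E_0}}),
\end{equation*}
which is exactly the operator identity~\eqref{euqality1} at the level of closures. Combined with symmetry of $(H_\mathrm{IBC},D_\mathrm{IBC})$ from Corollary~\ref{corol:focksymmetry}, this also yields the easy inclusion $D_\mathrm{IBC}\subset D(\overline{H_\mathrm{IBC}}) = W(gf_{\sqrt{E_0}}) D(\D\Gamma(h))$.

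The main obstacle is the reverse inclusion $W(gf_{\sqrt{E_0}}) D(\D\Gamma(h)) \subset D_\mathrm{IBC}$, which is precisely what upgrades essential self-adjointness to self-adjointness on $D_\mathrm{IBC}$ itself. For $\phi \in D(\D\Gamma(h))$ and $\psi := W(gf_{\sqrt{E_0}})\phi$, all four defining clauses of $D_\mathrm{IBC}$ must be checked: $\psi^{(n)}\in D(\lapadjn)\cap\Hio^n$, $A\psi\in\Fock$, $H\psi\in\Fock$, and $B\psi=g\psi$. A concrete strategy is to unfold $W(u)=\E^{-\|u\|^2/2}\E^{a^*(u)}\E^{-a(u)}$ sector-wise, writing $\psi^{(n)}$ as a finite symmetrized sum over $k=0,\dots,n$ in which $k$ of the variables carry the singular factor $f_{\sqrt{E_0}}$ and the remaining $n-k$ variables carry an $\Hz$-regular factor obtained from $\phi$ by pairings against $f_{\sqrt{E_0}}$. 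With this representation, each term lies in $D(\lapadjn)$ because the $1/|x_j|$-singularities of $f_{\sqrt{E_0}}$ are precisely accommodated by the adjoint domain of the one-particle Laplacian, and $B\psi=g\psi$ reduces, via $Bf_{\sqrt{E_0}}=1$ from Proposition~\ref{prop:asymn1} and a combinatorial matching of coefficients across sectors, to exactly the verification already carried out for coherent vectors in Corollary~\ref{corol:mandd}. The remaining conditions $A\psi,H\psi\in\Fock$ are then controlled by the trace estimates of Lemma~\ref{lem:ABcont} combined with the domination of the number operator $N$ by $E_0^{-1}\D\Gamma(h)$ valid for $E_0>0$, which ensures convergence of the sector sums in $\Fock$-norm.
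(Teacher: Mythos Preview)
Your argument through the easy inclusion $D_\mathrm{IBC}\subset W(gf_{\sqrt{E_0}})\,D(\D\Gamma(h))$ matches the paper's exactly: specialize Corollary~\ref{corol:allgemein} to $\gamma=\sqrt{E_0}$, use Proposition~\ref{prop:esasecquant} and Proposition~\ref{prop:moregeneral} to identify the self-adjoint closure, and invoke symmetry (Corollary~\ref{corol:focksymmetry}) for the containment.

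The gap is in the reverse inclusion. Your sector-wise unfolding does give, for each $n$, a finite sum over $k$ of terms $\mathrm{Sym}\bigl(f_{\sqrt{E_0}}^{\otimes k}\otimes \xi_k\bigr)$, and the combinatorics of $B\psi=g\psi$ do indeed close up term by term. But each $\xi_k=(\E^{-a(gf_{\sqrt{E_0}})}\phi)^{(n-k)}$ is an \emph{infinite} series over the sectors of a general $\phi\in D(\D\Gamma(h))$; you assert it is ``$\Hz$-regular'' without establishing convergence of that series in $\Hz$, and the factorization $W(u)=\E^{-\|u\|^2/2}\E^{a^*(u)}\E^{-a(u)}$ is itself only known to hold on analytic-type domains, not on all of $D(\D\Gamma(h))$. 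Even if one grants the direct verification on a convenient core (say finite-particle vectors), one still needs to pass to the graph-norm closure, and here the paper explicitly warns (in the remark following its proof) that $A$ and $B$ are \emph{not closable} on $\Fock$---for instance $B$ vanishes on the dense set $D(\D\Gamma(h))$---so one cannot simply take limits in your boundary-condition identity. Your appeal to ``the trace estimates of Lemma~\ref{lem:ABcont} combined with the domination of the number operator'' does not address this: Lemma~\ref{lem:ABcont} lands in $\Hzm$, not $\Lz$, and the number-operator bound alone says nothing about $A$ or $\lapadjn$.

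The paper circumvents this by a different mechanism. It first proves, directly on the coherent core, the operator relative bound $\|A\,W(\cloud)\coh(u)\|_\Fock\leq C\|\D\Gamma(h)\coh(u)\|_\Fock$ via $|u(0)|\leq C\|u\|_{\Hz}$ together with positivity of the cross terms $\langle\Delta_{x_i}u^{\otimes(n+1)},\Delta_{x_j}u^{\otimes(n+1)}\rangle\geq 0$. This yields a uniform bound on $\|(-\lapadjn+nE_0)(W(\cloud)\coh(u))^{(n)}\|$ in the graph norm of $\D\Gamma(h)$, and since each $\lapadjn$ is \emph{closed}, the bound passes to the closure, giving $\psi^{(n)}\in D(\lapadjn)$ for every $\psi\in W(\cloud)D(\D\Gamma(h))$. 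Only after this is established does Lemma~\ref{lem:ABcont} apply, furnishing continuity of $A,B:D(\lapadjn)\to\Hzm$ sector by sector; the IBC $B\psi=g\psi$ then extends from the dense core by this continuity, and the relative bound on $A$ forces $A\psi\in\Fock$. The two-step structure---closedness of $\lapadjn$ first, then continuity of the boundary maps on $D(\lapadjn)$---is what your sketch is missing.
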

\begin{proof}
As $E_0>0$, we may choose $\gamma = \sqrt{E_0}$ in Corollary \ref{corol:allgemein} and set $\cloud := g f_{\gamma= \sqrt{E_0}}$. The constant $C(g,\sqrt{E_0},E_0)$ then reduces to $\frac{g^2 \sqrt{E_0}}{4 \pi}$ and the equality \eqref{euqality1} holds on the common core $W(\cloud)E(\Hz(\R^3))$. This extends to the common domain of self-adjointness $W(\cloud) D(\D\Gamma( h))$. 
 
The inclusion $D_\mathrm{ IBC} \subseteq W(\cloud) D(\D\Gamma( h))$ follows from the symmetry of $(H_\mathrm{ IBC},D_\mathrm{ IBC})$, Proposition \ref{corol:focksymmetry}. 
To show that also $ W(\cloud) D(\D\Gamma( h))\subseteq D_\mathrm{ IBC} $, we use 
that  $W(\cloud)\,D(\D\Gamma( h))$ is the closure of $W(\cloud)E(\Hz(\R^3))$ 
in the graph norm of $W(\cloud)\D\Gamma(h)W(-\cloud) $. 
We need to show that for $\psi\in W(\cloud)\,D(\D\Gamma( h))$ we have $\psi^{(n)}\in D(\lapadjn)$ and 
 $A\psi\in \Fock$. Let $u\in \Hz(\R^3)$, then we have the estimate
\begin{align}
\Vert u(0)W(\cloud)\coh(u)\Vert_\Fock^2
&=\sum_{n\geq 0} \frac{1}{n!} \Vert u(0) u^{\otimes n}\Vert^2_{\Hio^n}\nonumber\\
&\leq \sum_{n\geq 0} \frac{C}{(n+1)!}  (n+1)\Vert (-\Delta_{x_{n+1}} + E_0) u^{\otimes (n+1)}\Vert^2_{\Lz(\R^{3(n+1)})}\nonumber\\
&\leq C\Vert{\D\Gamma(h)\coh(u)}\Vert_{\Fock}^2\,,
\end{align}
where we have used that $|u(0)|\leq C \Vert u\Vert_{\Hz}$ and that $\langle \Delta_{x_j} u^{\otimes (n+1)}, \Delta_{x_i} u^{\otimes (n+1)}\rangle\geq 0$.
In view of Equation~\eqref{AW} this implies that
\begin{equation}\label{eq:Arelbound}
\Vert  A W(\cloud)\coh(u) \Vert_\Fock \leq C\Vert{\D\Gamma(h)\coh(u)}\Vert_{\Fock} 
\end{equation}
for some constant $C>0$. This clearly implies that for any $n\in \N$
\be 
 \Vert (-\lapadjn + n E_0)\left(W(\cloud)\coh(u)\right)^{(n)}\Vert_{\Hio^n}
 \leq \Vert (H-gA) W(\cloud)\coh(u) \Vert_\Fock \leq C \Vert{\D\Gamma(h)\coh(u)}\Vert_{\Fock}\,.
\ee 
As $\lapadjn$ is closed, it follows that $W(\cloud)\,D(\D\Gamma( h))\vert_{\Hio^n}\subset D(\lapadjn)$.

Consequently by Lemma~\ref{lem:ABcont} the expressions for $A$ and $B$ are well defined (as distributions) and continuous on each sector of $W(\cloud)\,D(\D\Gamma( h))$. Now~\eqref{eq:Arelbound} implies that $A$ maps $W(\cloud)\,D(\D\Gamma( h))$ to $\Fock$, so in particular $A\psi^{(n)}\in \Lz(\R^{3n-3})$.
Since $B\psi=g\psi$ on the dense set $W(\cloud)\,E(\Hz)$, this also holds on $W(\cloud)\,D(\D\Gamma( h))$ by continuity, and we have proved $W(\cloud)\,D(\D\Gamma( h))\subset D_\mathrm{ IBC}$.
\end{proof}

We remark that the expressions $A$ and $B$ defined on some natural domain $D\subset \bigoplus_{n} D(\lapadjn)$ are not necessarily closable, e.g., $B$ vanishes on the dense (in $\Fock$) subspace $D(\D\Gamma( h))$, so we cannot directly conclude from an estimate such as~\eqref{eq:Arelbound} that these expressions are well defined on the closure of $W(\cloud)E(\Hz)$.  \\ \ \\
By virtue of the unitary equivalence, we can compute the ground state of $H_\mathrm{ IBC}$ explicitly, provided $E_0 > 0$. The unique ground state  of the free field $\D\Gamma\left(h\right)$ is the vector $\Omega_0 := (1,0,0,\dots) \in \Fock$, which is called  the Fock vacuum. With $\cloud = g f_{\gamma = \sqrt{E_0}}$ we conclude that $\psi_{min}:=W(\cloud)\Omega_0$ is the unique ground state of $H_\mathrm{ IBC}$
with ground state energy $ \frac{g^2 \sqrt{E_0}}{4\pi}$, i.e.
\be
H_\mathrm{ IBC} \, \psi_{min}=  \frac{g^2 \sqrt{E_0}}{4\pi}\,\psi_{min} \,.
\ee 
 Note that because of $\Omega_0 = \coh(0)$ we can calculate $\psi_{min}$ explicitly by using \eqref{eq:dressingdef},
\be 
\psi_{min} = W(\cloud )\Omega_0 = W(\cloud ) \coh(0) = \mathrm{e}^{-\frac{\left\Vert \cloud \right\Vert^2}{2}} \ \coh(\cloud)\,.
\ee

\subsection{Renormalization: Proof of Theorem~\ref{thm:renorm}}
\label{subsec:renorm}
 Let $h = (-\Delta + E_0, \Hz(\R^3))$, where we now assume that $E_0>0$. This operator is self-adjoint and invertible. In Section~\ref{sec:reno} we defined $W_n:=W(g h^{-1} \seq_n)$ where $\seq_n$ is any sequence of elements of $\Lz(\R^3)$ such that $\seq_n\to\delta$ as $n\to\infty$ in the sense that $\hat \seq_n \rightarrow  \hat \seq_\infty:=  \hat{\delta} = (2\pi)^{-3/2}$ pointwise with $\|\hat \seq_n\|_\infty$ uniformly bounded. \\
We first use Corollary \ref{corol:ifinvertible} with $\psi = g \seq_n$ and $T = h$ to establish that, in the notation of Section~\ref{sec:reno},
\begin{align}\nonumber
 W_n H_n W_n^* &=  W_n \left(\D \Gamma(h) + a^*(g \seq_n) + a(g \seq_n) \right) W_n^* =\D \Gamma(h) - g^2 \langle \seq_n , h^{-1} \seq_n \rangle_{\Hio} 
 \\
 &= \D \Gamma(h) + E_n\,.
\end{align}
The assumptions we made on the sequence $\seq_n$ imply that $ \Fou \, (g h^{-1} \seq_n) $ converges in $\Lz$ to the function $g  (2 \pi)^{-3/2} \hat{h}^{-1}$. Therefore, according to \eqref{eq:fourieroff}, $g h^{-1} \seq_n$ converges to $- g f_{\sqrt{E_0}}$. 
We have defined the family of unitary operators $W(\ph)$ in \eqref{eq:dressingdef} via coherent vectors. From this  definition it follows that the mapping $\ph \mapsto W(\ph) \psi$ is continuous because the mapping $\ph \mapsto \co \ph)$ is. 
As a consequence, the $W_n$ converge strongly, and the limiting operator is 
\be W_\infty = \lim_{n \rightarrow \infty} W_n = \lim_{n \rightarrow \infty} W(g h^{-1} \seq_n) = W({\textstyle \lim_{n \rightarrow \infty}} \ g h^{-1} \seq_n) = W(- g f_{\sqrt{E_0}}) \, . \ee
Moreover, for any $z\in\C\setminus \R$ also
\begin{align}\nonumber
\lim_{n \rightarrow \infty} (H_n'-z)^{-1} &= 
\lim_{n \rightarrow \infty} W_n^* (\D \Gamma(h)-z)^{-1} W_n = W_\infty^* (\D \Gamma(h)-z)^{-1} W_\infty\\ &=  (W_\infty^*\D \Gamma(h)W_\infty-z)^{-1}
\end{align}
converges strongly because $\sup_n\|W_n^*\|=1$.
Recalling the definition  \eqref{eq:Hhatdef} of $H_\infty$, we find that
\begin{align}
H_\infty&:= \lim_{n\to\infty} H_n'  = W_\infty^*\D \Gamma(h)W_\infty =  W(g f_{\sqrt{E_0}}) \, \D \Gamma( h) \,W(-g f_{\sqrt{E_0}})   \nonumber \\
& \stackrel{\eqref{euqality1}}{=}   H_\mathrm{ IBC} - \tfrac{g^2 \sqrt{E_0}}{4 \pi} 
\end{align}
on $  W(g f_{\sqrt{E_0}}) D(\D\Gamma(h)) =  D_\mathrm{ IBC}$. 

Since $E_0>0$, it follows from Corollary \ref{corol:form-domain} that $D_\mathrm{ IBC} \cap D(\D\Gamma({h}^{1/2})) = \lbrace 0 \rbrace$. We have proven Theorem \ref{thm:renorm}.

\section{Variants of the Model}
\label{sec:generalized}
Throughout this section, let $E_0 >0$ and $N \in \mathbb{N}$ be fixed. We will use the notation that has been introduced in Section \ref{sec:variants} and in particular assume the condition \eqref{vcond1}. Here we will properly define $\tilde{D}_\mathrm{ IBC}$ and prove Theorems \ref{thm:variant1} and \ref{thm:variant2}.

 Observe that $w_i^\lambda \in \Ddelta$ and that $\lapadj w^\lambda_i = \lambda w^\lambda_i$ for $1 \leq i \leq N$, cf.\ \cite{AGHH88}. It is known that that the maps $\psi \mapsto A_i \psi$ and $\psi \mapsto B_i \psi$ define continuous linear functionals on $\Ddelta$.
Furthermore, using a partition of unity, the degree of non-symmetry of $\lapadj$ may be expressed with their help:
\begin{equation}\label{eq:asymmn1-multi}
\langle \ph, - \lapadj \psi \rangle_{\Hio} - \langle -\lapadj \ph, \psi \rangle_{\Hio} = \sum_{i=1}^N \langle B_i \ph, A_i \psi \rangle_\C - \langle A_i \ph , \B_i \psi \rangle_\C \, . 
\end{equation}
Note the following: The set $U(v) :=\bigcap_{i=1}^N\mathrm{ker}X_i $ is a subspace of $\Ddelta$, which is $\Lz$-dense. By further inspection $X_i(\psi)=0$ for all $ 1 \leq i \leq N$ is identified with the conditions that specify the domain of point interactions centered in $\xi_1, \dots, \xi_N$ with parameters $a_i =\frac{\alpha_i}{ \beta_i}$, where $\beta_i=0$ formally corresponds to $a_i= \infty$, see \cite{DFT08}.  

The matrix $S(\lambda)$ is invertible if and only if   $-\lambda$ is \textit{not} an eigenvalue of the point-interaction operator $(-\lapadj , U(v))$, see Theorem $\text{II}.1.1.4$ in \cite{AGHH88}. The number of eigenvalues of this operator is finite, and all its eigenvalues are negative and situated below the essential spectrum, which covers the non-negative real axis.  
That implies, in particular, that for all $E_0 > 0$ and for all admissible choices of $v$ there exists $\lambda > 0$ such that $S(\lambda)$ is invertible.
\begin{lemma}
\label{lem:phi}
Let $v$ obey the condition \eqref{vcond1} and let $(1,1, \dots, 1)^T \in \mathrm{ ran}\, S(\lambda)$. Then there exists $\cloud=\cloud(\lambda) \in \Ddelta$ with the properties
\begin{align}&  \lapadj\cloud = \lambda \cloud \label{eq:property1} \\
& X_k(\cloud) =1 \qquad 1\leq k \leq N \label{eq:property2} \, . \end{align}
\end{lemma}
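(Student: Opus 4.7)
The plan is to make the natural ansatz
\[
\cloud = \sum_{j=1}^N c_j w_j^\lambda
\]
with coefficients $c = (c_1,\dots,c_N)^T \in \C^N$ to be determined. Since each $w_j^\lambda \in \Ddelta$ satisfies $\lapadj w_j^\lambda = \lambda w_j^\lambda$, the vector $\cloud$ lies in $\Ddelta$ and \eqref{eq:property1} is automatic by linearity, regardless of the choice of $c$. Thus the only content of the lemma is the condition \eqref{eq:property2}, which will translate into a linear system for $c$.

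The key step is to evaluate $A_i(w_j^\lambda)$ and $B_i(w_j^\lambda)$ explicitly. For $i=j$, near $\xi_i$ one has $r_i w_i^\lambda(x) = -\frac{1}{4\pi}\E^{-\sqrt{\lambda} r_i}$, so directly from the definitions
\[
B_i(w_i^\lambda) = -4\pi\lim_{x\to\xi_i} r_i w_i^\lambda(x) = 1, \qquad A_i(w_i^\lambda) = \lim_{x\to\xi_i}\partial_{r_i}(r_i w_i^\lambda) = \tfrac{\sqrt{\lambda}}{4\pi}.
\]
For $i\neq j$, the function $w_j^\lambda$ is smooth at $\xi_i$, so $r_i w_j^\lambda(x)\to 0$ and $\partial_{r_i}(r_i w_j^\lambda(x))\to w_j^\lambda(\xi_i) = G^\lambda_{ij}$, giving $B_i(w_j^\lambda)=0$ and $A_i(w_j^\lambda)=G^\lambda_{ij}$.

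Plugging these into $X_i = \E^{\I\theta_i}(\alpha_i B_i + \beta_i A_i)$, one recognises exactly the matrix entries
\[
X_i(w_j^\lambda) = \delta_{ij}\E^{\I\theta_i}\bigl(\alpha_i + \tfrac{\sqrt{\lambda}}{4\pi}\beta_i\bigr) + (1-\delta_{ij})\E^{\I\theta_i}\beta_i G^\lambda_{ij} = S_{ij}(\lambda).
\]
Consequently $X_i(\cloud) = \sum_j S_{ij}(\lambda) c_j = (S(\lambda) c)_i$, and condition \eqref{eq:property2} becomes the linear system $S(\lambda) c = (1,1,\dots,1)^T$. By hypothesis the right-hand side lies in $\mathrm{ran}\,S(\lambda)$, so a solution $c\in\C^N$ exists, and the corresponding $\cloud$ has the required properties.

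The proof is essentially a direct computation; no genuine obstacle arises beyond carefully reading off the singular and regular parts of $w_j^\lambda$ at each source $\xi_i$. The only subtlety worth noting is that solvability does not require invertibility of $S(\lambda)$ (which may fail precisely at the eigenvalues $-\lambda$ of the $N$-centre point interaction), merely that $(1,\dots,1)^T$ lies in its range; uniqueness of $\cloud$ is therefore not asserted and plays no role in subsequent arguments.
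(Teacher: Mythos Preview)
Your proof is correct and follows essentially the same approach as the paper: the same ansatz $\cloud=\sum_j c_j w_j^\lambda$, the same explicit evaluation of $A_i(w_j^\lambda)$ and $B_i(w_j^\lambda)$ leading to $X_i(w_j^\lambda)=S_{ij}(\lambda)$, and the same reduction of \eqref{eq:property2} to the linear system $S(\lambda)c=(1,\dots,1)^T$. Your added remark on why only $\mathrm{ran}\,S(\lambda)$ (rather than invertibility) is needed is accurate and consistent with the paper's later use of the lemma.
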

\begin{proof}
For every choice of $c_1,\ldots,c_N\in \C$ the sum $\sum_{l=1}^N c_l w^\lambda_l$ is an eigenvector of $\lapadj$ with eigenvalue $\lambda$. To obtain \eqref{eq:property2}, we first compute
\begin{align} 
&X_k\left(\textstyle \sum_{l=1}^N c_l w^\lambda_l \right) = \sum_{l=1}^N c_l X_k(w^\lambda_l) =  \sum_{l=1}^N c_l \alpha_k \E^{\I \theta_k} B_k(w^\lambda_l) + c_l \beta_k \E^{\I \theta_k} A_k(w^\lambda_l)\nonumber \\
&= \sum_{l=1}^N  c_l \alpha_k \E^{\I \theta_k} \delta_{kl} +  c_l \beta_k \E^{\I \theta_k} (\delta_{kl} \frac{\sqrt{\lambda}}{4 \pi} + (1-\delta_{kl}) G^\lambda _{kl}) = \sum_{l=1}^N S_{kl} c_l  \, . 
\end{align}
Since $(1,1, \dots, 1)^T \in \mathrm{ ran}\, S(\lambda)$, there are numbers $c_l \in \C$ such that $\sum_{l=1}^N S_{kl} c_l =1$ for all $1 \leq k \leq N$.
Then we set $\phi:= \sum_{l=1}^N c_l w_l$.
\end{proof}
\begin{lemma}
\label{lem:rewrite1}
Let v obey the condition \eqref{vcond1}. Then the degree of non-symmetry of $\lapadj$ can be expressed using  $X_i$ and $Y_i$: for $\ph,\psi\in D(\lapadj)$,
\begin{equation}
\langle \ph, - \lapadj \psi \rangle_{\Hio} - \langle -\lapadj \ph, \psi \rangle_{\Hio} = \sum_{i=1}^N \langle X_i \ph, Y_i \psi \rangle_\C - \langle Y_i \ph , X_i \psi \rangle_\C \, . 
\end{equation}
\end{lemma}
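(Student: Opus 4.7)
The strategy is simply to start from the already-established identity~\eqref{eq:asymmn1-multi}, which expresses the defect of symmetry of $\lapadj$ in terms of the ``basic'' boundary functionals $A_i,B_i$, and then to rewrite the right-hand side using $X_i,Y_i$ by exploiting the determinant condition \eqref{vcond1}. The core calculation is a pointwise (in $i$) verification that the $2\times 2$ transformation $(B_i,A_i)\mapsto (X_i,Y_i)$ preserves the symplectic/anti-Hermitean pairing $\langle B_i\cdot , A_i\cdot\rangle_{\C}-\langle A_i\cdot , B_i\cdot\rangle_{\C}$.

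Concretely, for each $1\leq i\leq N$ I would expand
\begin{align*}
\langle X_i\ph, Y_i\psi\rangle_\C
&= \overline{\E^{\I\theta_i}}\E^{\I\theta_i}\bigl\langle \alpha_i B_i\ph+\beta_i A_i\ph,\; \gamma_i B_i\psi+\delta_i A_i\psi\bigr\rangle_\C \\
&= \alpha_i\gamma_i\langle B_i\ph,B_i\psi\rangle_\C + \alpha_i\delta_i\langle B_i\ph,A_i\psi\rangle_\C \\
&\quad +\beta_i\gamma_i\langle A_i\ph,B_i\psi\rangle_\C + \beta_i\delta_i\langle A_i\ph,A_i\psi\rangle_\C,
\end{align*}
where I have used that $\alpha_i,\beta_i,\gamma_i,\delta_i\in\R$ and that the two phases $\E^{\I\theta_i}$ cancel. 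Expanding $\langle Y_i\ph, X_i\psi\rangle_\C$ in the same fashion and subtracting, the ``diagonal'' terms $\langle B_i\ph,B_i\psi\rangle_\C$ and $\langle A_i\ph,A_i\psi\rangle_\C$ drop out, and the remaining terms combine into
\[
(\alpha_i\delta_i-\beta_i\gamma_i)\bigl[\langle B_i\ph,A_i\psi\rangle_\C-\langle A_i\ph,B_i\psi\rangle_\C\bigr].
\]
By the condition~\eqref{vcond1} the prefactor equals $1$, so that for every $i$
\[
\langle X_i\ph, Y_i\psi\rangle_\C - \langle Y_i\ph, X_i\psi\rangle_\C
= \langle B_i\ph, A_i\psi\rangle_\C - \langle A_i\ph, B_i\psi\rangle_\C.
\]

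Summing over $i$ and substituting into the identity~\eqref{eq:asymmn1-multi} yields the claim. There is no real obstacle: the continuity of $A_i,B_i$ on $\Ddelta$ ensures all pairings are finite, and the argument is pure linear algebra plus the normalization $\alpha_i\delta_i-\beta_i\gamma_i=1$. The only thing worth stressing is why the phases $\E^{\I\theta_i}$ do not leave a residue, which is automatic because $X_i$ and $Y_i$ carry the same phase factor, so it cancels between the bra and the ket in each of the two inner products.
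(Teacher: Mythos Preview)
Your proof is correct and follows essentially the same approach as the paper: both expand $\langle X_i\ph,Y_i\psi\rangle_\C-\langle Y_i\ph,X_i\psi\rangle_\C$, observe that the phase factors cancel and the diagonal $B_iB_i$ and $A_iA_i$ terms drop out, apply $\alpha_i\delta_i-\beta_i\gamma_i=1$ to reduce to $\langle B_i\ph,A_i\psi\rangle_\C-\langle A_i\ph,B_i\psi\rangle_\C$, and then sum over $i$ and invoke~\eqref{eq:asymmn1-multi}.
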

\begin{lemma}
\label{lem:rewrite2}
Let $ \psi \in U(v) = \bigcap_{i=1}^N\mathrm{ker}X_i $ and let $\cloud(\lambda) \in \Ddelta$ with the properties \eqref{eq:property1} and \eqref{eq:property2}. Then
\begin{align*}
&\mathrm{(a)} \quad \sum_{i=1}^N Y_i(\psi) = \langle \cloud, (-\lapadj + E_0) \psi \rangle_{\Hio} - \langle (-\lapadj + E_0) \cloud,  \psi \rangle_{\Hio} \\
&\mathrm{(b)} \quad \sum_{i=1}^N Y_i(\cloud) \in \R \, .
\end{align*}
\end{lemma}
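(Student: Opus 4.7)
The plan is to derive both statements as direct corollaries of Lemma~\ref{lem:rewrite1}, by evaluating its boundary bilinear form on the right pairs of arguments. The work is entirely algebraic once Lemma~\ref{lem:phi} is in hand.

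For part (a), I would apply Lemma~\ref{lem:rewrite1} with $\ph = \cloud$ and $\psi$ the given element of $U(v)$. Since $\psi\in U(v)$, every $X_i\psi$ vanishes, so the terms $\langle Y_i\cloud, X_i\psi\rangle_\C$ drop out of the boundary sum. By Lemma~\ref{lem:phi} we have $X_i\cloud = 1$, hence $\langle X_i\cloud, Y_i\psi\rangle_\C = \overline{1}\,Y_i\psi = Y_i\psi$, and the right-hand side collapses to $\sum_{i=1}^N Y_i\psi$. Finally, because $E_0\in\R$ the multiplication operator $E_0\,\Id$ is symmetric, so
\begin{equation*}
\langle \cloud, E_0\psi\rangle_\Hio - \langle E_0\cloud, \psi\rangle_\Hio = 0,
\end{equation*}
which lets me freely replace $-\lapadj$ by $-\lapadj + E_0$ on the right-hand side, yielding (a).

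For part (b), I would apply Lemma~\ref{lem:rewrite1} again, this time with $\ph=\psi=\cloud$. Using $\lapadj\cloud = \lambda\cloud$ from Lemma~\ref{lem:phi} together with the fact that $\lambda>0$ is real, the left-hand side becomes
\begin{equation*}
\langle \cloud, -\lapadj\cloud\rangle_\Hio - \langle -\lapadj\cloud, \cloud\rangle_\Hio = (-\lambda + \overline{\lambda})\,\|\cloud\|_\Hio^2 = 0.
\end{equation*}
On the right-hand side, using $X_i\cloud = 1$, each summand is $\langle 1, Y_i\cloud\rangle_\C - \langle Y_i\cloud, 1\rangle_\C = Y_i\cloud - \overline{Y_i\cloud} = 2\I\,\mathrm{Im}(Y_i\cloud)$. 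Hence $2\I\,\mathrm{Im}\bigl(\sum_{i=1}^N Y_i\cloud\bigr) = 0$, proving (b).

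There is no serious obstacle here; the proof is essentially a pair of substitutions into the identity proved in Lemma~\ref{lem:rewrite1}. The only points that require care are the $\C$-inner product convention (conjugate-linear in the first argument, which is why $X_i\cloud=1$ produces a bare factor) and the use of the hypothesis that $\lambda$ from Lemma~\ref{lem:phi} is real and positive, so that $\overline{\lambda}=\lambda$. Both are bookkeeping rather than mathematical hurdles.
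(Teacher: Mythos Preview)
Your proposal is correct and matches the paper's own proof essentially line for line: both parts are obtained by plugging $\ph=\cloud$ (and, for (b), also $\psi=\cloud$) into Lemma~\ref{lem:rewrite1}, using $X_i\cloud=1$, $X_i\psi=0$, and $\lapadj\cloud=\lambda\cloud$ with $\lambda\in\R$. The only cosmetic difference is that in (b) the paper inserts the harmless $+E_0$ before concluding the expression vanishes, whereas you drop it immediately.
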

The proofs can be found in the Appendix  \ref{proof:rewrite}. \\
As mentioned above, the operator $\tilde h = (-\lapadj+E_0,U)$ is self-adjoint and is called the $N$-center point-interaction with energy offset $E_0>0$. The coherent domain $E(U)$ is a core of $\D \Gamma(\tilde h )$, see Proposition \ref{prop:esasecquant}.
Next we turn to another subset of $\Ddelta$, which is an affine subspace. If $(1,1, \dots, 1)^T \in \mathrm{ ran}\, S(\lambda)$, define 
\be M=M(\lambda):= \lbrace \ph \in \Ddelta \vert \ph = \cloud(\lambda) + \psi \, , \psi \in U(v) \rbrace \, . \ee
Since $U(v)$ is $\Lz$-dense, so is $M(\lambda)$ and therefore the coherent domain over $E(M)$ is a dense subspace of the symmetric Fock space $\Fock$. Set $\tilde{D}_\mathrm{ IBC} :=E(M)$.
Then on $\tilde{D}_\mathrm{ IBC}$ we find 
\be Y^\Fock_i(\varepsilon(\ph)) = Y_i(\cloud+\psi) \varepsilon(\ph) = (Y_i(\cloud) + Y_i(\psi)) \varepsilon(\ph)\,  \ee 
and 
\be X^\Fock_i(\varepsilon(\ph)) = X_i(\cloud+\psi) \varepsilon(\ph) = X_i(\cloud) \varepsilon(\ph) = \varepsilon(\ph) \, . \ee 
 We are now in a position to define the operator $(\tilde H_\mathrm{ IBC},\tilde D_\mathrm{ IBC})$ which depends on the set of parameters $(v,E_0)$ where $v$ obeys the relation \eqref{vcond1}:
\be \tilde H_\mathrm{ IBC}:=\D \Gamma(-\lapadj + E_0) + \textstyle  \sum_{i=1}^N Y_i^\Fock \quad \text{on} \quad \tilde D_\mathrm{ IBC} := E(M)  \, . \ee
\begin{proof}[\textbf{ Proof of Theorem~\ref{thm:variant1} and Theorem~\ref{thm:variant2}}]
Let $\psi \in U$. Choose $\lambda >0$ such that $S(\lambda)$ is invertible and use $(1,1, \dots, 1)^T \in \mathrm{ ran}\, S(\lambda)$ to construct $\cloud(\lambda)$ with the properties \eqref{eq:property1} and \eqref{eq:property2}.  
Due to property \eqref{eq:property1} of $\cloud=\cloud(\lambda)$, using Lemma \ref{lemma:pulltrough} we get
\begin{align}
 W(-\cloud)& \D \Gamma(-\lapadj+ E_0) W(\cloud) \co \psi) + W(-\cloud) \left[\textstyle  \sum_{i=1}^N Y_i^{\Fock} \right] W(\cloud) \co \psi)\nonumber \\
 =& \left(\D\Gamma(\tilde h) + (-\lambda + E_0)  \left( a(\cloud)^* +a(\cloud) \right) \right) \co \psi)  \nonumber\\
&   + \left(\langle \cloud, (-\lapadj + E_0) \psi \rangle_{\Hio} - \langle (-\lapadj + E_0) \cloud , \psi \rangle_{\Hio} \right) \co \psi)  \nonumber\\
&   + \langle \cloud, (-\lapadj + E_0) \cloud \rangle_{\Hio} \co \psi) + \left[\textstyle  \sum_{i=1}^N Y_i(\cloud) + Y_i(\psi) \right]  \co \psi)  \nonumber\\
  =&\left(\D\Gamma(\tilde h) + (-\lambda + E_0)  \left( a(\cloud)^* +a(\cloud) \right) \right) \co \psi) \nonumber \\
&   + \left[ (-\lambda+ E_0) \ \Vert \cloud \Vert_{\Hio}  + \textstyle  \sum_{i=1}^N Y_i(\cloud) \right]  \co \psi)  \,.
\end{align}
We have used statement (a) of Lemma \ref{lem:rewrite2}. Due to statement  (b)  of this lemma, the constant in brackets is real. Because $\tilde h$ is bounded from below, we can use Nelson's Commutator Theorem to show essential self-adjointness of the operator on $E(U)$, cf.\ Proposition \ref{prop:moregeneral} and \cite{Der03}. Now essential self-adjointness of $\tilde{H}_\mathrm{ IBC}$ on $W(\cloud(\lambda)) E(U) = E(M) = \tilde{D}_\mathrm{ IBC}$ follows. 

If $(1,1, \dots, 1)^T \in \mathrm{ ran}\, S(E_0)$, set $\lambda=E_0$ to get \eqref{unitaryeq}. We have proven Theorem~\ref{thm:variant2}. In this case $\tilde{H}_\mathrm{ IBC}$ may be unbounded from below. 

If $\tilde h$ is strictly positive, then $-E_0$ is not an eigenvalue of $(-\lapadj, U)$ and $S(E_0)$ is invertible. From the explicit form \eqref{unitaryeq} we see that, because $\D\Gamma(\tilde h)$ is strictly positive as well, $\Omega_0$ is the unique ground state of $\D\Gamma(\tilde h)$. As a consequence $\tilde{H}_\mathrm{ IBC}$ is bounded from below by 
\be C(\cloud(E_0)) = \sum_{i=1}^N Y_i(\cloud(E_0))\ee 
and 
\be \psi_{min} = \E^{-\frac{\Vert \cloud(E_0) \Vert^2}{2}}  \co \cloud(E_0))
\ee is the unique ground state of $\tilde H_\mathrm{ IBC}$.
\end{proof} 
 \begin{appendix}
 
\section{Regularity}
\label{sec:appreg}
Here, we give the details on the regularity questions regarding $D(\Delta^*_n)$, $A^{(n)}$, and $B^{(n)}$.
We will need to work with   Hilbert-space-valued distributions.
Keep in mind for the following  that for defining distributions the removal of a point $\{0\}$ from $\R^3$ or the sets $\Col^n$ from $\R^{3n}$ matters, while 
 $\Lz(\R^3\setminus\{0\},X) = \Lz(\R^3 ,X)$ and  $\Lz(\R^{3n}\setminus\Col^n,X) = \Lz(\Rminusn,X) = \Lz(\R^{3n} ,X)$.
\begin{lemma}\label{lem:domain}
Let $\ph\in D(\lapadjn)$ and equip this space with the graph norm. Then for $j=1,\ldots, n$
\be
\Delta_{x_j} \ph \in \Lz\left(\R_{x_j}^3 , \Hzm (\Rminusnme) \right)\,,
\ee
where $\ph$ is regarded as a vector valued distribution on $\R_{x_j}^3\setminus\{0\}$ and  $\Delta_{x_j}$ is the Laplacian of distributions on that domain taking values in $\Hzm$. 
Moreover, 
\be
\|\Delta_{x_j} \ph \|_{\Lz(\R^3,\Hzm)} \leq \sqrt{2} \|\ph\|_{D( \Delta_n^*)}\,.
\ee
\end{lemma}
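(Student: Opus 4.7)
The plan is to reduce the lemma to two ingredients: (i) the identity $\lapadjn\varphi = \Delta_{x_j}\varphi + \Delta_{\hat x_j}\varphi$ (where $\Delta_{\hat x_j} := \sum_{k\neq j}\Delta_{x_k}$), which holds as an equality of vector-valued distributions on $\Omega_n$, and (ii) the fact that $\Delta_{\hat x_j}$ extends to a bounded operator from $\Lz(\R^{3(n-1)})$ into $\Hzm(\Rminusnme)$. By the permutation symmetry built into the setup, I may fix $j=1$.

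First I would interpret $\Delta_{x_1}\varphi$ rigorously. Via Fubini, $\varphi$ corresponds to an element of $\Lz(\R^3_{x_1},\Lz(\R^{3(n-1)}))$, and the action of $\Delta_{x_1}$ as a distribution on $\R^3_{x_1}\setminus\{0\}$ taking values in distributions on $\Rminusnme$ is tested against $\zeta\otimes\eta$ with $\zeta\in C_0^\infty(\R^3\setminus\{0\})$ and $\eta\in C_0^\infty(\Rminusnme)$. Since $\zeta\otimes\eta\in C_0^\infty(\Rminusn)\subset \Hzo(\R^{3n}\setminus\Col^n)$, the defining property of the adjoint gives
\begin{equation*}
\langle\lapadjn\varphi,\zeta\otimes\eta\rangle = \langle\varphi,\Delta(\zeta\otimes\eta)\rangle = \langle\varphi,(\Delta\zeta)\otimes\eta\rangle + \langle\varphi,\zeta\otimes(\Delta\eta)\rangle\,.
\end{equation*}
Rearranging, this is precisely the identity $\Delta_{x_1}\varphi = \lapadjn\varphi - \Delta_{\hat x_1}\varphi$ as vector-valued distributions on $\R^3_{x_1}\setminus\{0\}$, because the second term on the right is pairing $\varphi$ with $\zeta\otimes\Delta\eta$, i.e.\ the distributional action of $\Delta_{\hat x_1}$ tested against $\eta$.

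It then suffices to show that both summands on the right lie in $\Lz(\R^3_{x_1},\Hzm(\Rminusnme))$ with the expected bounds. For $\lapadjn\varphi\in\Lz(\R^{3n})$ this is immediate from Fubini composed with the continuous embedding $\Lz(\R^{3(n-1)})\hookrightarrow\Hzm(\Rminusnme)$, whose norm is at most $1$ since $\|\eta\|_{\Lz}\leq\|\eta\|_{\Hzo}$ for any $\eta\in\Hzo(\Rminusnme)$; this gives $\|\lapadjn\varphi\|_{\Lz(\R^3,\Hzm)}\leq\|\lapadjn\varphi\|_{\Lz(\R^{3n})}$. For the other term, the estimate
\begin{equation*}
|\langle \Delta_{\hat x_1} u,\eta\rangle| = |\langle u,\Delta\eta\rangle| \leq \|u\|_{\Lz}\,\|\Delta\eta\|_{\Lz}\leq\|u\|_{\Lz}\,\|\eta\|_{\Hzo}
\end{equation*}
shows that $\Delta_{\hat x_1}:\Lz(\R^{3(n-1)})\to\Hzm(\Rminusnme)$ is bounded with norm $\leq 1$. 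Applying this operator slicewise in $x_1$ to the $\Lz$-valued function $x_1\mapsto\varphi(x_1,\cdot)$ produces a strongly measurable $\Hzm$-valued function with $\|\Delta_{\hat x_1}\varphi\|_{\Lz(\R^3,\Hzm)}\leq\|\varphi\|_{\Lz(\R^{3n})}$.

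Combining the two bounds, $\|\Delta_{x_1}\varphi\|_{\Lz(\R^3,\Hzm)}\leq\|\lapadjn\varphi\|_{\Lz(\R^{3n})}+\|\varphi\|_{\Lz(\R^{3n})}\leq\sqrt{2}\,\|\varphi\|_{D(\lapadjn)}$, which is exactly the claim. I do not expect a real obstacle: the only delicate point is extracting the distributional identity $\Delta_{x_1}\varphi = \lapadjn\varphi - \Delta_{\hat x_1}\varphi$ from the defining property of the adjoint, and this works cleanly precisely because product test functions $\zeta\otimes\eta$ with $\zeta\in C_0^\infty(\R^3\setminus\{0\})$ and $\eta\in C_0^\infty(\Rminusnme)$ already lie in the definition domain $\Hzo(\R^{3n}\setminus\Col^n)$ of $\lapon$, so no approximation across the collision set is needed.
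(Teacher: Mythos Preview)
Your proposal is correct and follows essentially the same approach as the paper: both arguments use the splitting $\Delta_{x_1}\varphi = \lapadjn\varphi - \sum_{i\geq 2}\Delta_{x_i}\varphi$ obtained from the defining relation of the adjoint applied to tensor-product test functions in $\Hzo(\Rminusn)$, then bound each summand separately via the duality $\Hzm(\Rminusnme)=\Hzo(\Rminusnme)'$ to arrive at $\|\Delta_{x_1}\varphi\|_{\Lz(\R^3,\Hzm)}\leq \|\lapadjn\varphi\|_{\Lz}+\|\varphi\|_{\Lz}\leq\sqrt{2}\|\varphi\|_{D(\lapadjn)}$. The only organizational difference is that the paper first constructs $\Delta_{x_1}\varphi$ as an element of $\Lz(\R^3,\Hzo)'$ and then identifies it with the distributional Laplacian, whereas you start from the distributional definition and then show it is represented by an $\Lz(\R^3,\Hzm)$ function; the substance is the same.
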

\begin{proof}
We will show the case $j=1$. Recall that $D(\Delta_n)=\Hzo(\Omega_n)$. For any $\ph\in D(\lapadjn)$, the map
\begin{equation}\label{eqn:deltaxjdef}
\Delta_{x_1}\ph: \Hzo(\Omega_n)\to \C\,,\quad \psi\mapsto \langle \ph,\Delta_{x_1} \psi\rangle =\langle \lapadjn \ph,\psi\rangle - \sum_{i=2}^N \langle \ph, \Delta_{x_i} \psi\rangle
\end{equation}
extends by density to a  bounded linear functional on the Bochner space\\ $\Lz\left(\R_{x_1}^3, \Hzo(\Rminusnme) \right)$, i.e.,
\begin{equation}\label{eqn:Deltaxj}
\Delta_{x_1}\ph \; \in\;  \Lz\left(\R_{x_1}^3 , \Hzo(\Rminusnme) \right)' \,.
\end{equation}
Since  $\Hzm(\Rminusnme):= \Hzo(\Rminusnme)'$ and this space is reflexive,
we obtain that $\Delta_{x_1}\ph \in \Lz\left(\R_{x_1}^3 , \Hzm(\Rminusnme) \right)$.  It remains to show that  this $\Delta_{x_1}\ph$   is in fact also the Laplacian of $\ph$ in the sense of $\Hzm$-valued distributions, i.e.~that
for all $\phi\in C^\infty_0(\R^3\setminus \{0\})$ and $\xi\in \Hzo(\Rminusnme)$ we have
\be
 \Delta_{x_1}\ph (\phi \xi) = \int_{\R^3}  ( \ph(x ), \xi )_{(\Hzm,\Hzo)}\Delta \phi(x ) \,\D x\,. 
 \ee
The left hand side is by its definition~\eqref{eqn:deltaxjdef}
\be
 \Delta_{x_1}\ph (\phi \xi) = \langle \ph,(\Delta  \phi) \xi \rangle_{\Lz(\R^{3n })}
\ee
and the right hand side is
\begin{align}
 \int_{\R^3}  ( \ph(x ), \xi )_{(\Hzm,\Hzo)}\Delta \phi(x ) \,\D x& =  \int_{\R^3}  \langle \ph(x) , \xi \rangle_{\Lz(\R^{3n-3})}  \,\Delta \phi(x )  \,\D x \nonumber\\
 &=\langle \ph,(\lapo  \phi) \xi \rangle_{\Lz(\R^{3n })}\, ,
 \end{align}
where we made use of the fact that $\ph \in \Lz(\R^3_{x_1} , \Lz(\R^{3n-3}))$.
From \eqref{eqn:Deltaxj} we conclude
\begin{eqnarray}
\|\Delta_{x_1} \ph \|_{\Lz(\R^3,\Hzm)} & =& \sup_{  \|\psi\|_{\Lz(\R^3, \Hzo) }=1}\left(  \|\lapadjn \ph\|_{\Lz} \|\psi\|_{\Lz} + \|\ph\|_{\Lz} \|\psi\|_{\Lz(\R^3, \Hzo) }\right)\nonumber\\
&\leq &  \|\lapadjn \ph\|_{\Lz}  + \|\ph\|_{\Lz}  \leq \sqrt{2} \|\ph\|_{D( \lapadjn)} \, . \qedhere
\end{eqnarray}
\end{proof}
 
\begin{proof}[\textbf{Proof of Lemma \ref{lem:ABcont}}] For clarity, we use the notation $A^{(n)}$ and $B^{(n)}$ in this proof  for the operators on $D(\lapadjn)\subset \Lz(\R^{3n})$.
The case $n=1$ has been proved in Proposition~\ref{prop:asymn1} and we will use it here to show continuity of $A^{(n)}$ and $B^{(n)}$ for $n\geq 2$. Our proof basically follows ideas for the construction of distribution-valued trace maps on Sobolev spaces, as presented, e.g, in \cite{Lio72}.

Define the space
\be 
\Ddhmz := \lbrace \ph \in \Lz(\R^3 ,\Hzm(\Rminusnme)) \vert \Delta_{x}  \ph \in \Lz(\R^3 ,\Hzm(\Rminusnme))  \rbrace\, ,
\ee 
where $\Delta_{x}$ denotes the Laplacian on vector-valued distributions on $\R^3 \setminus \{0 \}$, and
\be 
\Vert \ph \Vert^2_{\Ddhmz} := \Vert \ph \Vert_{\Lz(\R^3 ,\Hzm)}^2 +\Vert \Delta_{x} \ph \Vert_{\Lz(\R^3 ,\Hzm)}^2 \, .
\ee
Then, by Lemma \ref{lem:domain}, we have the continuous injection
\be 
D(\lapadjn) \hookrightarrow \Ddhmz \, .
\ee
 We will show that $A^{(n)}$ is continuous on $\Ddhmz$, which of course implies continuity on $D(\lapadjn)$.  To do so, we approximate any $\ph \in \Ddhmz$ by a sequence $\ph_N$ in the following way:
Let $(\eta_k)_{k\in\N}$ be a complete orthonormal set in $\Hzm(\Rminusnme)$ and 
set $\ph_k(x)  := \langle \eta_k,\ph(x,\cdot)\rangle_{\Hzm} $.
Clearly, because $\ph \in \Lz(\R^3 ,\Hzm(\Rminusnme))$, it holds that
\be 
\sum_{k=1}^N \ph_k(x) \eta_k := \ph_N(x) \stackrel{N\to \infty}{\rightarrow} \ph(x)
\ee
pointwise in $\Hzm$ and   by dominated convergence  in $\Lz(\R^3 ,\Hzm(\Rminusnme))$. 
Now let $\psi \in C_0^\infty(\R^3 \setminus \{ 0\})$ and observe that, because $ \langle \eta_k, \, \cdot \, \rangle_{\Hzm}$ is continuous on $\Hzm$ and $ \ph(x,\cdot) \Delta \psi$ is integrable, we have that
\begin{align}
&\int_{\R^3}  \ph_k \Delta\psi \, \D x =\int_{\R^3} \langle \eta_k ,  \ph(x,\cdot) \Delta \psi \rangle_{\Hzm} \, \D x = \left\langle \eta_k , \int_{\R^3} \ph(x,\cdot) \Delta \psi \, \D x \right\rangle_{\Hzm} \nonumber\\
&= \left\langle \eta_k , \int_{\R^3}  \psi\Delta_x \ph(x,\cdot) \, \D x \right\rangle_{\Hzm} 
= \int_{\R^3} \langle \eta_k , \Delta_{x} \ph(x, \cdot) \rangle_{\Hzm} \psi(x) \, \D x \,.
\end{align} 
Since $\ph \in \Ddhmz$, $\langle \eta_k , \Delta_{x} \ph(x, \cdot) \rangle_{\Hzm} \in \Lz(\R^3)$ and thus $\ph_k \in \Ddelta$ with $\Delta_1^* \ph_k=\langle \eta_k , \Delta_{x} \ph(x, \cdot) \rangle_{\Hzm}$.

To prove that the limit in the expression for $A^{(n)}$ exists, let 
\be 
\tilde\ph_k(r):= \frac{1}{4\pi}\int_{S^2} r \ph_k(r\omega) \D \omega\,.
\ee 
One easily sees that $\Vert \tilde\ph_k \Vert_{\Hz((0,\infty))}= \Vert \ph_k \Vert_{\Ddelta}$, and thus $\tilde \ph_k$ has a representative in $C^{1,\frac14}([0,\infty))$. More precisely, the Fourier inversion formula yields
\be 
\vert \tilde\ph_k'(R)-\tilde\ph_k'(r) \vert\leq \frac{1}{\sqrt{2\pi}} \left\Vert\frac{k(\E^{\I k R} - \E^{\I k r})}{1+k^2}\right\Vert_{\Lz(\R)}  \Vert\tilde\ph_k \Vert_{\Hz((0,\infty))}\
\leq \delta(R,r) \Vert \ph_k \Vert_{\Ddelta}\,,
\ee 
where $\ph_k'$ denotes the derivative of $\ph_k$.
Then we also have that
\begin{align}
 &\left\Vert \sum_{k=1}^\infty (\tilde \ph_k'(R)-\tilde\ph_k'(r))\eta_k \right\Vert^2_{\Hzm}
 = \sum_{k=1}^\infty \vert \tilde\ph_k'(R)-\tilde\ph_k'(r) \vert^2\nonumber\\
 &\leq \delta(R,r) \sum_{k=1}^\infty \Vert \ph_k \Vert_{\Ddelta}^2 \nonumber\\
 &= \delta(R,r) \sum_{k=1}^\infty  \left(\Vert \ph_k\Vert^2_{\Lz(\R^3)}  + \Vert \Delta_1^* \ph_k\Vert^2_{\Lz(\R^3)}\right)\nonumber\\
 &= \delta(R,r)\left(\sum_{k=1}^\infty   \Vert \ph_k \eta_k \Vert^2_{\Lz(\R^3 ,\Hzm)}  + \sum_{k=1}^\infty  \Vert \Delta_x \ph_k \eta_k \Vert^2_{\Lz(\R^3 ,\Hzm)}\right)\nonumber\\
 &= \delta(R,r) \Vert \ph \Vert^2_{\Ddhmz}\,.
\end{align}
Since for $R,r<1$ we have that $\delta(R,r)\leq C(R-r)^{1/4}$, it follows that the limit $\lim_{r\to 0} \sum_{k=0}^\infty \tilde \ph_k'(r) \eta_k$ exists for this representative of $\ph$ and yields the value of $A^{(n)}/\sqrt{n}$. In addition, we have that 
 \begin{align}
\Vert A^{(n)} \ph_N \Vert_{\Hzm}^2  &= \left\Vert  A^{(n)} \sum_{k=1}^N    \ph_k    \eta_k \right\Vert_{\Hzm}^2  = n \left\Vert    \sum_{k=1}^N    (A^{(1)}\ph_k)    \eta_k \right\Vert_{\Hzm}^2  
\nonumber\\
&
\leq n \Vert A^{(1)}\Vert^2_{\Ddelta'} \sum_{k=1}^N \Vert \ph_k \Vert^2_{\Ddelta} 
\nonumber \\
 & = n \Vert A^{(1)}\Vert^2_{ \Ddelta'} \Vert \ph_N \Vert^2_{\Ddhmz} \,.
 \end{align}
 Thus, $A^{(n)}$ defines a bounded linear map. The proof for $B^{(n)}$ follows the same steps. 
 
 This proof shows that the action of $A^{(n)}$, $B^{(n)}$ is determined by the action of $A^{(1)}$, $B^{(1)}$ on the $\varphi_k$. If $\varphi$ is an element of $\Hz(\R^{3n})$ or $\mathrm{H}^1(\R^{3n})$, then the $\varphi_k$ are in the corresponding space over $\R^3$. In case $\varphi \in \mathrm{H}^1(\R^{3n})$ we thus have that $B^{(n)}\varphi=0$ since $B^{(1)}=0$ on $\Ddelta\cap \mathrm{H}^1(\R^{3})=\Hz(\R^3)$ because $f_\gamma \notin \mathrm{H}^1(\R^3)$. If $\varphi\in \Hz(\R^{(3n)})$, $A^{(n)}$ acts as the Sobolev-trace, because $A^{(1)}\varphi_k=\varphi_k(0)$.
\end{proof}

In order to establish regularity of the functions $\varphi\in D(\Delta_n^*)$ with $B^{(n)}\varphi=0$, we use a theorem of H\"ormander, which is formulated using the following spaces:
\be
H_{(2,s)} := \Lz([0,\infty), \mathrm{H}^{2+s}(\R^{d}) ) \cap \Hz((0,\infty), \mathrm{H}^{s}(\R^d))\,.
\ee

\begin{thm}\label{thm:hoermander}
Let $\tfrac{\D^2}{\D r^2}$ and $\Delta_{\R^d}$ denote the distributional Laplacians on $(0, \infty)$ and $\R^d$, respectively. The map
\begin{eqnarray}
H_{(2,s)} &\to& \Lz([0,\infty), \mathrm{H}^{s})\oplus \mathrm{H}^{s+\frac32}(\R^d)\,,\nonumber\\
\eta &\mapsto & \left( \left(\tfrac{\D^2}{\D r^2} + \Delta_{\R^d}-1\right)\eta,\, \eta(0)\right)
\end{eqnarray}
is an isomorphism of topological vector spaces.
\end{thm}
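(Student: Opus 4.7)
The strategy is to apply a partial Fourier transform in the $\R^d$ variable, reducing $\tfrac{\D^2}{\D r^2} + \Delta_{\R^d} - 1$ to the parameter-dependent ODE operator $\tfrac{\D^2}{\D r^2} - \langle\xi\rangle^2$ on $[0,\infty)$, where $\langle\xi\rangle := (|\xi|^2+1)^{1/2}$. The norm on $H_{(2,s)}$ translates to
\begin{equation}
\|\eta\|_{H_{(2,s)}}^2 \sim \int_{\R^d}\langle\xi\rangle^{2(s+2)} \|\hat\eta(\cdot,\xi)\|_{\Lz_r}^2 \D\xi + \int_{\R^d} \langle\xi\rangle^{2s}\|\partial_r^2 \hat\eta(\cdot,\xi)\|_{\Lz_r}^2 \D\xi,
\end{equation}
so the problem decouples slicewise in $\xi$ and becomes an explicit one-variable ODE calculation. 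Boundedness of the forward map is then clear from this norm description, and the remaining work is the construction of a continuous inverse.

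For the right inverse I propose $(f, \eta_0) \mapsto H\eta_0 + Vf$, built as follows. The harmonic extension
\begin{equation}
H\eta_0(r,x) := \bigl(\E^{-r\sqrt{-\Delta+1}}\eta_0\bigr)(x)
\end{equation}
satisfies $(\partial_r^2 + \Delta - 1) H\eta_0 = 0$ and $H\eta_0(0) = \eta_0$, and its $H_{(2,s)}$-norm matches $\|\eta_0\|_{\mathrm{H}^{s+3/2}}$ via the identity $\int_0^\infty \langle\xi\rangle^{2(s+2)}\E^{-2r\langle\xi\rangle}\D r = \tfrac{1}{2}\langle\xi\rangle^{2s+3}$, and similarly for the $\Hz((0,\infty),\mathrm{H}^s)$ contribution using $\partial_r^2 H\eta_0 = \langle\xi\rangle^2 H\eta_0$ in Fourier. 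For the particular solution, I extend $f \in \Lz([0,\infty), \mathrm{H}^s)$ oddly in $r$ to $\tilde f$ on $\R\times\R^d$ and let $\tilde v$ solve $(\partial_r^2 + \Delta - 1)\tilde v = \tilde f$ on $\R^{d+1}$ via the bounded Fourier multiplier $-(\rho^2 + \langle\xi\rangle^2)^{-1}$ in the joint $(r,x)$-transform. The elementary pointwise bounds $\langle\xi\rangle^4/(\rho^2+\langle\xi\rangle^2)^2 \leq 1$ and $\rho^4/(\rho^2+\langle\xi\rangle^2)^2 \leq 1$ give $\|\tilde v\|_{H_{(2,s)}(\R\times\R^d)} \lesssim \|\tilde f\|_{\Lz(\R,\mathrm{H}^s)}$, and setting $Vf := \tilde v|_{[0,\infty)}$ yields a function vanishing at $r=0$ by oddness with the right estimate.

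For the isomorphism I also need a trace bound: from $|\hat\eta(0,\xi)|^2 = -2\int_0^\infty \mathrm{Re}\bigl(\overline{\hat\eta}\,\partial_r\hat\eta\bigr)\D r$, Cauchy-Schwarz, the weight split $\langle\xi\rangle^{2s+3} = \langle\xi\rangle^{s+2}\cdot\langle\xi\rangle^{s+1}$, and the interpolation estimate $\|\partial_r\eta\|_{\Lz([0,\infty),\mathrm{H}^{s+1})} \lesssim \|\eta\|_{H_{(2,s)}}$ (standard, using $\|u'\|_{\Lz}^2 \lesssim \|u\|_{\Lz}\|u''\|_{\Lz} + \|u\|_{\Lz}^2$ slicewise in $\xi$), one obtains $\|\eta(0)\|_{\mathrm{H}^{s+3/2}} \lesssim \|\eta\|_{H_{(2,s)}}$. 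Injectivity of the forward map is immediate from the Fourier picture: if $(\partial_r^2+\Delta-1)\eta=0$ and $\eta(0)=0$ for $\eta\in H_{(2,s)}$, then each Fourier slice $\hat\eta(\cdot,\xi)$ solves $u'' - \langle\xi\rangle^2 u = 0$ on $[0,\infty)$ with $u(0)=0$ and $u\in\Lz([0,\infty))$, forcing $u\equiv 0$. Combined with surjectivity from the construction above, and with $H\eta_0 + Vf$ furnishing the inverse, this yields the isomorphism. The main technical obstacle is the bookkeeping — specifically, ensuring that after all the Fourier-side manipulations the trace lands precisely on the fractional gain space $\mathrm{H}^{s+3/2}$ (rather than $\mathrm{H}^{s+1}$ or $\mathrm{H}^{s+2}$), which emerges from the weight factor $(2\langle\xi\rangle)^{-1}$ produced by the decaying exponential in the extension norms.
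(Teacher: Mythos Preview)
Your proof is correct and self-contained. The paper, by contrast, does not prove this theorem at all: it simply states that the result is ``a direct consequence of \cite[Corollary~10.4.1]{Hoer64}'' and moves on. What you have written is essentially the standard argument behind that corollary --- partial Fourier transform in the tangential variable, reduction to the parameter-dependent ODE $u'' - \langle\xi\rangle^2 u = f$ on the half-line, and the explicit weight bookkeeping that produces the $\tfrac32$ gain in the trace --- so there is no genuine methodological divergence, only a difference in how much is spelled out. Your version has the advantage of being checkable without consulting H\"ormander; the paper's citation has the advantage of brevity and of covering the statement in greater generality (variable-coefficient elliptic operators) than is needed here.
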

This theorem is a direct consequence of \cite[Corollary~10.4.1]{Hoer64}. It gives rise to the following regularity lemma, where we denote by  $P:\Lz(\R^3) \to \Lz(\R^3)$ the projection to the space of radial functions; for $j\in \{1, \dots, n\}$, $P_j$ is the projection $P$ acting on the $j$-th factor of $\Lz(\R^{3n})=\Lz(\R^3)^{\otimes n}$; and $Q_j=1-P_j$.
\begin{lemma}\label{lem:basic}
Let $\ph\in D(\lapadjn)$ with $B\ph = 0$ and $\chi_\epsi \in C^\infty_\mathrm{ b}(\R^{3n-3})$ such that, for some $\epsi>0$,
\be 
 \mathrm{ supp}\chi_\epsi\subset  \mathcal{U}_\epsi(\mathcal{C}^{n-1}) := \Bigl\{ (x_2,\ldots,x_n)\in \R^{3n-3}\,\Big| |x_i|>\epsi \,\mbox{ for all } i\Bigr\}\,.
\ee 
Then $\chi_\epsi  P_1 \ph\in \Hz(\R^{3n})$.
\end{lemma}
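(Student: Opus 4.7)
The plan is to reduce the claim to a half-space regularity problem for
\be
 u(r,y) := r\,\tilde\varphi(r,y),\qquad \tilde\varphi(|x_1|,y) := (P_1\varphi)(x_1,y),
\ee
where $y=(x_2,\ldots,x_n)\in\R^{3n-3}$. The point of multiplying by $r$ is twofold: it converts the radial three-dimensional operator $\Delta_{x_1}$ into the one-dimensional $\partial_r^2$, and it turns the condition $B\varphi=0$ into the Dirichlet condition $u(0,\cdot)=0$ at the boundary $\{r=0\}$. One then invokes H\"ormander's result (Theorem~\ref{thm:hoermander}) and finally unwinds the substitution.

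To derive the PDE, I would use the radial identity $\Delta_{x_1}\tilde\varphi=\tfrac{1}{r}\partial_r^2(r\tilde\varphi)$ on $\Omega_n$ together with the fact that $P_1$ commutes with $\Delta_{x_j}$ for $j\ge 2$, which gives
\be
 \partial_r^2 u + \Delta_y u = r\,P_1(\lapadjn\varphi)\qquad\text{on }(0,\infty)\times\R^{3n-3}.
\ee
A change of variables to spherical coordinates in $x_1$ yields $\|u\|_{\Lz((0,\infty)\times\R^{3n-3})}^2=(4\pi)^{-1}\|P_1\varphi\|_{\Lz(\R^{3n})}^2$ and analogously for $\|r\,P_1\lapadjn\varphi\|$, so that both sides of the PDE are bona fide $\Lz$-functions of $(r,y)$. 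The hypothesis $B\varphi=0$, together with Lemma~\ref{lem:ABcont} and averaging over $\omega\in\Sp^2$, yields $u(0,\cdot)=0$ in $\Hzm(\R^{3n-3})$.

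Next I would bootstrap through Theorem~\ref{thm:hoermander}. The obstacle to a direct application at $s=0$ is the commutator $[\Delta_y,\chi_\epsilon]u$, which contains $2\nabla_y\chi_\epsilon\cdot\nabla_y u$ without a priori $\Lz$-meaning. To fix this, pick an auxiliary cutoff $\tilde\chi\in C^\infty_b(\R^{3n-3})$ with $\tilde\chi\equiv 1$ on $\mathrm{supp}\,\chi_\epsilon$ and $\mathrm{supp}\,\tilde\chi\subset\mathcal{U}_{\epsilon/2}(\Col^{n-1})$, set $v:=\tilde\chi u$, and rewrite
\be
 (\partial_r^2+\Delta_y-1)v=\tilde\chi(r\,P_1\lapadjn\varphi-u)+\mathrm{div}_y(u\,\nabla_y\tilde\chi)+(\Delta_y\tilde\chi)u.
\ee
Each term lies in $\Lz([0,\infty),\mathrm{H}^{-1}(\R^{3n-3}))$, and $v(0,\cdot)=0\in\mathrm{H}^{1/2}$, so Theorem~\ref{thm:hoermander} at $s=-1$ provides $v\in\Lz([0,\infty),\He(\R^{3n-3}))$, i.e.\ $\nabla_y v\in\Lz$. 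Because $\tilde\chi\equiv 1$ on $\mathrm{supp}\,\chi_\epsilon$, a short support argument shows that the distribution $[\Delta_y,\chi_\epsilon]u$ coincides with $[\Delta_y,\chi_\epsilon]v$ and is therefore a genuine $\Lz$-function. A second application of Theorem~\ref{thm:hoermander}, this time at $s=0$ with boundary value $(\chi_\epsilon u)(0,\cdot)=0\in\mathrm{H}^{3/2}$, then delivers $\chi_\epsilon u\in\Hz([0,\infty)\times\R^{3n-3})$.

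Finally I would translate back via $\chi_\epsilon P_1\varphi(x_1,y)=\chi_\epsilon(y)\,u(|x_1|,y)/|x_1|$. The odd extension of $\chi_\epsilon u$ in the $r$-variable lies in $\Hz(\R\times\R^{3n-3})$, since the vanishing boundary value prevents any $\delta$-contribution to the distributional $r$-derivatives. A direct computation with the Fourier transform in $x_1$ (the three-dimensional radial transform reduces to a one-dimensional sine transform) identifies $\|\chi_\epsilon P_1\varphi\|_{\Hz(\R^{3n})}$, up to a multiplicative constant, with the $\Hz(\R\times\R^{3n-3})$-norm of this odd extension, completing the proof. I expect the commutator step in the bootstrap to be the main obstacle: the vanishing of $u$ at $r=0$ coming from $B\varphi=0$ is exactly what enables the negative-$s$ application of Theorem~\ref{thm:hoermander} that allows one to recover the $y$-regularity needed at $s=0$; without it the argument cannot get started.
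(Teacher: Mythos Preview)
Your argument is correct and follows essentially the same route as the paper: pass to the half-line variable $u=r\,(P_1\varphi)$, bootstrap H\"ormander's theorem from $s=-1$ to $s=0$ using that $B\varphi=0$ gives the Dirichlet datum $u(0,\cdot)=0$, and then undo the substitution. The paper multiplies by $\chi_\epsi$ \emph{before} the bootstrap rather than introducing a nested auxiliary cutoff $\tilde\chi$; your version is in fact slightly more careful at the second step, since after the first application one only controls $\nabla_y(\chi_\epsi u)$ and not $\nabla_y u$ on $\mathrm{supp}\,\nabla_y\chi_\epsi$, a point the paper glosses over.
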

\begin{proof}
We assume without loss of generality that $\ph$ is radial in the first argument, i.e., $\ph = P_1\ph$.
Let $\tilde\ph(r,y) := r \,\chi_\epsi (y) \ph(r,y)\in \Lz([0,\infty) , \Lz(\R^{3n-3}))$.
First note that 
\begin{eqnarray}
\Delta \tilde \ph &=& \chi_\epsi  (\tfrac{\D^2}{\D r^2} +\Delta_y) r\ph +     (\Delta_y\chi_\epsi ) r  \ph  + 2 r \nabla_y \chi_\epsi \cdot \nabla_y \ph \nonumber\\
&=& \chi_\epsi  \underbrace{r \lapadjn \ph }_{\in \Lz} +  (\Delta_y\chi_\epsi ) \underbrace{ r  \ph}_{\in \Lz}  + 2   \nabla_y \chi_\epsi \cdot \hspace{-18pt} \underbrace{\nabla_y r \ph}_{\in \Lz([0,\infty), \mathrm{H}^{-1} )} \label{deltaph}
\end{eqnarray}
and that $B\ph=0$ implies $\tilde\ph(0) = 0 \in \Hzm$. This of course means that $\tilde \ph(0) \in \mathrm{H}^{s+\frac32}$ for any $s\in \R$.
Thus, Theorem~\ref{thm:hoermander} implies that
\be
\tilde \ph \in H_{(2,-1)} \subset \Lz([0,\infty) , \mathrm{H}^1(\R^{3n-3}))\,. 
\ee
Plugging this information into Equation~\eqref{deltaph}, we conclude that $\Delta\tilde\ph\in \Lz([0,\infty) , \Lz)$. Another use of Theorem~\ref{thm:hoermander} then yields $\tilde \ph \in H_{(2,0)}$ with $\tilde\ph(0) = 0$. Hence
\be
\frac{\tilde \ph}{r} = \chi_\epsi P_1 \ph \in \Lz(\R^3, \Hz(\R^{3n-3}))\cap \Hz(\R^3, \Lz(\R^{3n-3})) = \Hz(\R^{3n})\,.   
\ee
\end{proof} \ \\
For $I \subset \lbrace 1, 2, \dots , n \rbrace$ define the following sets: 
\be \Col^{I} := \Bigl\lbrace x\in \R^{3n} \;\Big\vert\; \prod_{j \in I} \vert x_j \vert=0 \Bigr\rbrace \, . \ee
Then we have $\Col^{I} \subset \Col^n=\Col^{\lbrace 1, 2, \dots , n \rbrace}$. We will also use the abbreviation $\Col^{k} := \Col^{\lbrace n-k+1 , n-k+2, \dots , n \rbrace}$.

\begin{proof}[\textbf{Proof of Proposition \ref{regprop}}]
We will prove that $\ph \in D(\lapadjn) \cap \Hio^n$ together with $B \ph = 0$ implies $\ph \in \Hz(\R^{3n})$. This will prove the statement when combined with Lemma~\ref{lem:ABcont}.

In this proof we write $D^*(X)$ for the adjoint domain of the Laplacian defined on $X\subset \Hz(\R^{3n})$.
For $I\subset \{1,\ldots,n\}$ let 
$
P_I := \prod_{i\in I} P_i 
$ and $
Q_I := \prod_{i\in I} (1-P_i) 
$.
Then for $f \in \Lz(\R^{3n})$ 
\be
f = \prod_{i=1}^n (P_i + Q_i) f = \sum_{I\subset \{1,\ldots,n\}} P_I Q_{I^c} f\,. 
\ee
Now let $\psi\in \Hz(\R^{3n})$. Then
\be
\langle \ph,\Delta\psi\rangle = \sum_{I\subset \{1,\ldots,n\}} \langle P_I Q_{I^c}\ph,\Delta\psi\rangle
= \sum_{I\subset \{1,\ldots,n\}} \langle P_I\ph,\Delta Q_{I^c} \psi\rangle\,.
\ee
Since $Q_j \psi|_{x_j=0}=0$, we have that $Q_{I^c} \psi \in \Hzo(\R^{3n}\setminus \Col^{I^c})$ (cf.~\cite{Sve81}), and so it is sufficient to show that  
\begin{equation}\label{eq:P_I reg}
P_I   \ph  \in D^*(\Hzo(\R^{3n}\setminus \Col^{I^c}))
\end{equation}
in order to conclude $\ph\in  D^*(\Hz(\R^{3n})) = \Hz(\R^{3n})$.
By symmetry it suffices to consider the sets $I= \{ 1,\ldots, k\}$ for $k\leq n$,
which will be done by induction over $k$. 

For $k=1$, $I=\{1\}$, Equation~\eqref{eq:P_I reg} follows from Lemma~\ref{lem:basic} in the following way: Let $\psi\in \Hzo(\R^{3n}\setminus \Col^{n-1})$ and let $\psi_\epsi$ be a sequence in $C_0^\infty( \R^{3n}\setminus \Col^{n-1})$ with supp$\psi_\epsi\subset \mathcal{U}_{2\epsi}$ converging to $\psi$ in $\Hz$. Then Lemma~\ref{lem:basic} implies
\be
\langle P_1 \ph, \Delta_{x_1} \psi_\epsi\rangle = \langle \chi_\epsi(x_2, \dots ,x_n) P_1 \ph, \Delta_{x_1} \psi_\epsi\rangle 
=  \langle \chi_\epsi \Delta_{x_1} P_1 \ph,  \psi_\epsi\rangle  =  \langle  \Delta_{x_1} P_1 \ph,  \psi_\epsi\rangle\,,
\ee
where we have used  a cutoff $\chi_\epsi$ with $\chi_\epsi\equiv 1$ on $\mathcal{U}_{2\epsi}$. 
 Since $\psi_\epsi\in \Lz\left(\R_{x_1}^3, \Hzo(\Rminusnme) \right)$,  
 we find that
\begin{align}
\langle P_1 \ph, \Delta \psi\rangle 
&\stackrel{\hphantom{(30)}}{=} \lim_{\epsi\to 0} \langle P_1 \ph, \Delta \psi_\epsi\rangle  
\stackrel{\eqref{eqn:Deltaxj}}{=} \lim_{\epsi\to 0} \left( \langle \Delta_{x_1} P_1 \ph,  \psi_\epsi\rangle  +    \langle P_1 \ph,\sum_{j=2}^n \Delta_{x_j} \psi_\epsi\rangle  \right)\nonumber
\\
&\stackrel{\eqref{eqn:deltaxjdef}}{=} \lim_{\epsi\to 0}   \langle \lapadjn P_1 \ph,  \psi_\epsi\rangle 
=  \langle \lapadjn P_1 \ph,  \psi \rangle \,.
\end{align}
Hence, $P_1\ph \in D^*( \Hzo(\R^{3n}\setminus \Col^{n-1}))$.

Now assume the induction hypothesis 
\begin{align}
\label{eq:coldomain1}
P_{\{1,\ldots, k\}} \ph  \in D^*(\Hzo(\R^{3n}\setminus \Col^{\{ k+1,\ldots,n\}})) \, .
\end{align}
By symmetry, the argument for $k=1$ independently gives also 
\begin{align}
\label{eq:coldomain2}
P_{\{ k+1\}} \ph  \in 
D^*( \Hzo( \R^{3n}\setminus \Col^{\{1,\ldots,k,k+2,\ldots,n\}}  ))\,.
\end{align}
Thus, $P_{\{1,\ldots, k+1\}} \ph$ is in the intersection of these two domains \eqref{eq:coldomain1} and \eqref{eq:coldomain2}. Clearly, for two dense domains $D_1,D_2$ it holds that 
$
D^*(D_1)\cap D^*(D_2) \subset D^*(D_1+D_2)
$.
We thus need to show that 
\begin{equation}
\label{sumH20}
\Hzo(\R^{3n}\setminus \Col^{\{ k+1,\ldots,n\}}) +  \Hzo( \R^{3n}\setminus \Col^{\{1,\ldots,k,k+2,\ldots,n\}}  ) 
\end{equation}
is dense in $\Hzo(\R^{3n}\setminus \Col^{\{ k+2,\ldots,n\}})$, as this implies that the adjoint domains are equal. The functions in this sum vanish on 
\be
\widetilde{ \Col} := \left(\Col^{\{k+1\}}\cap \Col^{\{1,\ldots, k\}}\right)\cup \Col^{\{k+2,\ldots, n\}}\,.
\ee
Conversely, any function $f\in C^\infty_0 (\R^{3n}\setminus \widetilde{ \Col})$ can be written as a sum $f=f_1+f_2$ with $f_1\in C_0^\infty(\R^{3n}\setminus \Col^{\{k+1\}})$ and $f_2\in C_0^\infty(\R^{3n}\setminus \Col^{\{1,\ldots,k\}})$. Thus the sum \eqref{sumH20} is dense in $\Hzo(\R^{3n}\setminus \widetilde{\Col})$, but the latter space is equal to $\Hzo(\R^{3n}\setminus \Col^{\{ k+2,\ldots,n\}})$, as $\Col^{\{k+1\}}\cap \Col^{\{1,\ldots, k\}}$ has codimension six, see~\cite{Sve81}.
\end{proof}

\section{Algebraic identities}
\label{proof:rewrite}
\begin{proof}[\textbf{Proof of Lemma \ref{polar}}]
\label{proof:polar}
We will prove the following formula:
\begin{align}
\label{eq:ansatzprod}
\mathrm{Sym}(u_1 \otimes  \dots \otimes u_n)  = \frac{1}{2^n n!} \sum_{\mathbf{j} \in J} \alpha_\mathbf{j} \ v_\mathbf{j}^{\otimes n}\,,
\end{align}
where $J=\lbrace 0, 1 \rbrace^n$ and
\be
v_{\mathbf{j}} = \sum_{k=1}^{n} (-1)^{j_k} u_k\, , \qquad \alpha_{\mathbf{j}} = (-1)^{j_1+\ldots+j_n} \,.
\ee
Note that we may rewrite $v_{\mathbf{j}}^{\otimes n}$ as a sum:
\begin{align}
 v_{\mathbf{j}}^{\otimes n}&=\left(\sum_{k=1}^{n} (-1)^{j_k} u_k \right)^{\otimes n} = \sum_{ \mathbf{k} \in \lbrace 1,\dots,n \rbrace^n}  (-1)^{j_{k_1}+\ldots+j_{k_n}} \ u_{k_1} \otimes \cdots \otimes u_{k_n}\nonumber \\
 &= \begin{aligned}[t]
     &\sum_{\mathbf{k}  \in P}  (-1)^{j_{k_1}+\ldots+j_{k_n}} \ u_{k_1} \otimes \cdots \otimes u_{k_n} \\
     &+ \sum_{\mathbf{k}  \in \lbrace 1,\dots,n \rbrace^n \setminus P} (-1)^{j_{k_1}+\ldots+j_{k_n}} \ u_{k_1} \otimes \cdots \otimes u_{k_n}
    \end{aligned}
  \nonumber \\
 & =: (v_\mathbf{j})_P + (v_\mathbf{j})_{P^C} \,.
\end{align}
Here we have introduced a set $P$ of multi-indices:
\begin{align}
P := \bigl\lbrace x \in \mathbb{N}^n \big\vert \, \exists \sigma \in S_n: \, x=\sigma (1 , 2 ,\dots , n) \bigr\rbrace \subset \lbrace 1 ,2, \dots ,n \rbrace^n   \,.
\end{align}
We will focus on $(v_{\mathbf{j}})_P$ first and insert it into our ansatz \eqref{eq:ansatzprod}:
\begin{align}
&\sum_{\mathbf{j} \in J} \alpha_\mathbf{j} \ (v_\mathbf{j})_P = \sum_{\mathbf{j} \in J} \sum_{\mathbf{k}  \in P} (-1)^{j_1+\ldots+j_n}  (-1)^{j_{k_1}+\ldots+j_{k_n}} \ u_{k_1} \otimes \cdots \otimes u_{k_n} \nonumber\\
&= \sum_{\mathbf{j} \in J} \sum_{\sigma \in S_n} (-1)^{j_1+\ldots+j_n}  (-1)^{j_{\sigma(1)}+\ldots+j_{\sigma(n)}} \ u_{\sigma(1)} \otimes \cdots \otimes u_{\sigma(n)} \nonumber \\
&= \sum_{\mathbf{j} \in J} \sum_{\sigma \in S_n} (-1)^{j_1+\ldots+j_n}  (-1)^{j_{1}+\ldots+j_{n}} \ u_{\sigma(1)} \otimes \cdots \otimes u_{\sigma(n)}   \nonumber\\
&= \vert \lbrace 0 , 1 \rbrace^n \vert \sum_{\sigma \in S_n}  \ u_{\sigma(1)} \otimes \cdots \otimes u_{\sigma(n)}  = 2^n n! \ \mathrm{Sym}(u_1 \otimes \cdots \otimes u_n) \, .
\end{align}
It remains to show that  $\sum_{\mathbf{j}} \alpha_{\mathbf{j}}\, (v_\mathbf{j})_{P^C}=0$:
\begin{align}
\sum_{\mathbf{j} \in J} &\alpha_\mathbf{j} \ (v_\mathbf{j})_{P^C} \nonumber \\
&=\sum_{\mathbf{j} \in J} \ \sum_{\mathbf{k}  \in \lbrace 1,\dots,n \rbrace^n \setminus P} \hspace{-18pt} (-1)^{j_1+\ldots+j_n}   (-1)^{j_{k_1}+\ldots+j_{k_n}} \ u_{k_1} \otimes \cdots \otimes u_{k_n}  \nonumber\\
&= \sum_{\mathbf{k}  \in \lbrace 1,\dots,n \rbrace^n \setminus P}  \left(\sum_{\mathbf{j} \in J} (-1)^{j_1+\ldots+j_n}  (-1)^{j_{k_1}+\ldots+j_{k_n}} \right) u_{k_1} \otimes \cdots \otimes u_{k_n} \,.
\end{align}
 We will show that the  expression in brackets vanishes. 
 For every  $\mathbf{k}  \in \lbrace 1,\dots,n \rbrace^n \setminus P$ there is at least one $m \in \lbrace 1,\dots,n \rbrace$ such that none of the $k_i$ is equal to $m$. Therefore, we can factor out
\begin{align}
& \sum_{\mathbf{j} \in J} (-1)^{j_1+\ldots+j_n}  (-1)^{j_{k_1}+\ldots+j_{k_n}}\nonumber \\
&= \sum_{j_m =0}^1 (-1)^{j_{m}} \sum_{\mathbf{j} \in \lbrace 0,1 \rbrace^{n-1}} (-1)^{j_1+\ldots+ \widehat{j_m}+ \ldots+j_n}  (-1)^{j_{k_1}+\ldots+j_{k_n}}\,,
\end{align}
because the remaining term on the right does not depend on $j_m$ any more. Now $\sum_{j_m} (-1)^{j_m}=0$.
\end{proof}
 \begin{proof}[\textbf{Proof of Lemma \ref{lem:rewrite1}}]
 For $\ph,\psi\in D(\lapadj)$,
\begin{align}
 \langle X_i(\ph), Y_i(\psi) \rangle_\C & - \langle Y_i(\ph), X_i(\psi) \rangle_\C \nonumber\\
 \stackrel{\hphantom{(35)}}{=} &\Bigl\langle \E^{\I \theta_i}(\alpha_i \B_i + \beta_i \A_i)(\ph), \E^{\I \theta_i}(\gamma_i \B_i + \delta_i \A_i)(\psi) \Bigr\rangle_\C \nonumber \nonumber\\
&  - \Bigl\langle \E^{\I \theta_i}(\gamma_i \B_i + \delta_i \A_i)(\ph), \E^{\I \theta_i}(\alpha_i \B_i + \beta_i \A_i)(\psi) \Bigr\rangle_\C  \nonumber\\
 \stackrel{\eqref{vcond1}}{=} &(\alpha_i \delta_i - \beta_i \gamma_i) \langle  \B_i \ph , \A_i \psi  \rangle_\C - (\alpha_i \delta_i - \beta_i \gamma_i) \langle  \A_i \ph , \B_i \psi  \rangle_\C \nonumber \\
 \stackrel{\hphantom{(35)}}{=} & \langle \B_i \ph, \A_i \psi \rangle_\C - \langle \A_i \ph , \B_i \psi \rangle_\C 
\end{align}
because the terms involving twice $\B_i$ or twice $\A_i$ cancel, and only the mixed terms survive. Summing the terms from all sources $i=1,\ldots, N$   yields the claim.
\end{proof}
\begin{proof}[\textbf{Proof of Lemma \ref{lem:rewrite2}}]
By assumption, $X_i(\psi)=0$ and $X_i(\cloud)=1$ for $i=1,\ldots,N$. Thus, from Lemma~\ref{lem:rewrite1} with $\ph=\cloud$, 
\begin{align}
\sum_{i=1}^N Y_i(\psi) &= \sum_{i=1}^N \langle X_i(\cloud), Y_i(\psi) \rangle_\C = \sum_{i=1}^N \langle X_i(\cloud), Y_i(\psi) \rangle_\C - \langle Y_i(\cloud), X_i(\psi) \rangle_\C\nonumber \\
& = \langle \cloud, - \lapadj \psi \rangle_{\Hio} - \langle - \lapadj \cloud, \psi \rangle_{\Hio}\nonumber \\
&= \langle \cloud, (- \lapadj +E_0) \psi \rangle_{\Hio} - \langle (- \lapadj +E_0) \cloud,  \psi \rangle_{\Hio}\,.
\end{align}
This proves statement (a). To see why (b) is also true, observe that,  since by assumption  $\lapadj\cloud=\lambda\cloud$,
\begin{align}
2  \I \, \mathrm{Im}\left(\sum_{i=1}^N Y_i(\cloud)\right) 
&= \sum_{i=1}^N Y_i(\cloud) - \overline{Y_i(\cloud)} = \sum_{i=1}^N \langle X_i(\cloud), Y_i(\cloud) \rangle_\C - \langle Y_i(\cloud), X_i(\cloud) \rangle_\C\nonumber \\
&= \langle \cloud, - \lapadj \cloud \rangle_{\Hio} - \langle -\lapadj \cloud, \cloud \rangle_{\Hio}\nonumber \\[3mm]
&= \langle \cloud, (- \lambda+E_0) \cloud \rangle_{\Hio} - \langle (-\lambda+E_0) \cloud, \cloud \rangle_{\Hio} = 0\,,
\end{align}
which completes the proof.
\end{proof}
\end{appendix}

\paragraph{Acknowledgments.} We thank Marcel Griesemer, Stefan Keppeler, and Andreas W\"unsch for helpful discussions. This work was supported by the German Science Foundation (DFG) within the Research Training Group 1838 \textit{Spectral Theory and Dynamics of Qauntum Systems}.

\end{document}